\newif\ifnotes
\tikzset{
    ncbar angle/.initial=90,
    ncbar/.style={
        to path=(\tikztostart)
        -- ($(\tikztostart)!#1!\pgfkeysvalueof{/tikz/ncbar angle}:(\tikztotarget)$)
        -- ($(\tikztotarget)!($(\tikztostart)!#1!\pgfkeysvalueof{/tikz/ncbar angle}:(\tikztotarget)$)!\pgfkeysvalueof{/tikz/ncbar angle}:(\tikztostart)$)
        -- (\tikztotarget)
    },
    ncbar/.default=0.5cm,
}
\tikzset{square left brace/.style={ncbar=2mm}}
\tikzset{square right brace/.style={ncbar=-2mm}}
\algrenewcommand\textproc{}
\newcommand{\mnote}[1]{\ifnotes $\ll$\textsf{\color{magenta} Meghal: { #1}}$\gg$ \fi}
\definecolor{denim}{rgb}{0.08, 0.38, 0.74}
\definecolor{periwinkle}{rgb}{0.6, 0.6, 0.95}
\definecolor{wildblueyonder}{rgb}{0.64, 0.68, 0.82}
\newtheorem{theorem}{Theorem}[section]
\newtheorem{definition}[theorem]{Definition}
\newtheorem{lemma}[theorem]{Lemma}
\theoremstyle{remark}
\Crefname{theorem}{Theorem}{Theorems}
\Crefname{claim}{Claim}{Claims}
\Crefname{lemma}{Lemma}{Lemmas}
\Crefname{proposition}{Proposition}{Propositions}
\Crefname{corollary}{Corollary}{Corollaries}
\Crefname{definition}{Definition}{Definitions}
\newcommand{\ECC}{\mathsf{ECC}}
\newcommand{\DEC}{\mathsf{DEC}}
\newcommand{\poly}{\text{poly}}
\newcommand{\polylog}{\text{polylog}}
\newcommand{\eps}{\varepsilon}
\newcommand{\enc}{\mathsf{enc}}
\newcommand{\LDC}{\mathsf{LDC}}
\newcommand{\conf}{\mathsf{conf}}
\newcommand{\estA}{\mathsf{estA}}
\newcommand{\alt}{\mathsf{alt}}
\newcommand{\bbN}{\mathbb{N}}
\newcommand{\bbR}{\mathbb{R}}
\newcommand{\bbF}{\mathbb{F}}
\newcommand{\bbE}{\mathbb{E}}
\newcommand{\cD}{\mathcal{D}}
\newcommand{\customlabel}[2]{%
   \protected@write \@auxout {}{\string \newlabel {#1}{{#2}{\thepage}{#2}{#1}{}} }%
   \hypertarget{#1}{#2}
}
\newcommand\numberthis{\addtocounter{equation}{1}\tag{\theequation}}
\newcounter{casenum}
\newcounter{subcasenum}
\newcounter{casenump}
\newcommand{\casep}[2]{
    \ifthenelse{\equal{\value{casenump}}{0}}{
    \vskip.5\baselineskip\par\noindent
    }{}
    {\it Case \arabic{casenump}:} {\it #1}
    \vskip0.1\baselineskip
    \begin{addmargin}[1.5em]{1em}
    #2
    \end{addmargin}
    \addtocounter{casenump}{1}
}
\newcounter{subcasenump}
\begin{document}

\title{Error Correction for Message Streams}
\author{Meghal Gupta \thanks{E-mail:\texttt{meghal@berkeley.edu}. This author was supported by an UC Berkeley Chancellor’s fellowship.}\\UC Berkeley \and Rachel Yun Zhang\thanks{E-mail:\texttt{rachelyz@mit.edu}. This research was supported in part by DARPA under Agreement No. HR00112020023, an NSF grant CNS-2154149, and NSF Graduate Research Fellowship 2141064.}\\MIT}
\date{\today}

\sloppy
\maketitle
\begin{abstract}


In the setting of error correcting codes, Alice wants to send a message $x \in \{0,1\}^n$ to Bob via an encoding $\enc(x)$ that is resilient to error. In this work, we investigate the scenario where Bob is a \emph{low space} decoder. More precisely, he receives Alice's encoding $\enc(x)$ bit-by-bit and desires to compute some function $f(x)$ in low space. A generic error-correcting code does not accomplish this because decoding is a very global process and requires at least linear space. Locally decodable codes partially solve this problem as they allow Bob to learn a given bit of $x$ in low space, but not compute a generic function $f$. 

Our main result is an encoding and decoding procedure where Bob is still able to compute any such function $f$ in low space when a constant fraction of the stream is corrupted. More precisely, we describe an encoding function $\enc(x)$ of length $\poly(n)$ so that for any decoder (streaming algorithm) $A$ that on input $x$ computes $f(x)$ in space $s$, there is an explicit decoder $B$ that computes $f(x)$ in space $s \cdot \polylog(n)$ as long as there were not more than $\frac14 - \eps$ fraction of (adversarial) errors in the input stream $\enc(x)$.
\end{abstract}
\thispagestyle{empty}
\newpage
\tableofcontents
\pagenumbering{roman}
\newpage
\pagenumbering{arabic}

\section{Introduction}

Consider the following scenario: Alice streams a message denoted $x=x_1\ldots x_n$ to Bob that he receives and processes bit-by-bit. His goal is to compute some function $f(x_1\ldots x_n)$ unknown to Alice in significantly less space than necessary to store the entire stream. This scenario arises for instance in automatic control applications, where a device receives an incoming stream of data and needs to use it to make decisions.

As an example, a class of functions that Bob may wish to compute in small space from a message stream is the class of \emph{linear functions}. If Bob's function is $f(x) = f_y(x) = x \cdot y\mod{2}$ for some $y \in \{ 0, 1 \}^n$, then after receiving each bit of $x$, Bob adds $x_i \cdot y_i$ to a running sum. Bob only needs to track $i$ and the running sum modulo 2, which is in total $\log n + 1$ bits of space. 

However, this and other small space algorithms are very rigid when it comes to errors in the stream. Corruption of even one bit of Alice's message can change the output of Bob's linear function. The same applies if Bob's function is an index function, a majority, or the result of a decision tree.

In this work, we study what happens when there is noise in the incoming stream. In particular, we ask if it's possible to convert a given algorithm that processes a message stream into one that is robust to errors in the message while still allowing the output to be computed in low space.

In the usual error correction setting, Alice could just encode her message using a generic large distance error-correcting code. Bob would be able to compute any function $f$ in the presence of a small constant fraction of error, but he must receive and store the whole stream in order to decode, which is far too much space. Even if Alice sends a locally decodable code, which has the property that to determine a single bit of $x$ one only needs to track a few random bits of the codeword, it is not clear how Bob can compute a function requiring all $n$ bits without $\Omega(n)$ storage. Our question is whether there is an encoding that preserves the space complexity of Bob's original decoding algorithm while being resilient to error. 

In this work, we answer the question in the affirmative. Our main result is a scheme that protects any streaming algorithm against noise in the incoming stream. More precisely, we give an encoding $\enc(x)$ of length $O(n^{4+\delta})$ such that any streaming algorithm running in $s$ space on the noiseless stream can be converted to one running in $s \cdot \polylog(n)$ space on the encoded stream. It is correct with high probability whenever at most $\frac14 - \eps$ of the stream was corrupted. In the specific case where Bob's function is a linear function such as dot product, our encoding can be made to have length $O(n^{2+\delta})$.




\subsection{The Model}

We provide a formal definition of a noise resilient transformation for a message stream and processing algorithm. The transformation has two components:

\begin{itemize}
    \item An encoding function $\enc(\cdot) : X \subseteq \{ 0, 1 \}^n \rightarrow \{ 0, 1 \}^m$.
    \item An explicit transformation $B$ that takes as input a deterministic streaming algorithm $A : X \rightarrow \bbF_q$ and outputs a randomized streaming algorithm $B_A = B(A) : \{ 0, 1 \}^m \rightarrow \bbF_q$.
\end{itemize}
The algorithm is said to be $\alpha$-error resilient if whenever $\Delta(z, \enc(x)) < \alpha \cdot m$, then $B_A(z)$ outputs $A(x)$ with high probability.

\subsection{Our Results}

Our main result is a noise-resilient conversion for deterministic streaming algorithms.


\begin{restatable}{theorem}{mainthm}
\label{thm:main}
    Fix $\eps,\delta>0$. For any $X\subseteq \{0,1\}^n$, there is a noise resilient transformation consisting of an encoding $\enc : X\subseteq\{0,1\}^n \to \{0,1\}^m$, and a transformation of algorithms $B$ that is $\left( \frac14 - \eps \right) \cdot m$ error resilient with probability $\ge 1 - 2^{O_\eps(\log^2 n)}$. Moreover, $m = O_\eps(n^{4 + \delta})$, and for any deterministic algorithm $A$ on domain $X$ that runs in space $s$ and time $t$, the algorithm $B(A)$ runs in space $s \cdot O_\eps \left( (\log n)^{O(1/\delta)} \right)$ and time $m \cdot O_{\eps,\delta}\left( 1+ \frac{t}{n^2} \right)$.
    
\end{restatable}

In other words, given an algorithm $A$ that accepts a stream of length $n$ and uses $s$ space, we demonstrate a noise resilient algorithm $B(A)$ computing $A$ that uses $s \cdot \polylog(n)$ space. The transformation $\enc$ of the stream $x$ from the sender is independent of $A$ and has length $O(n^{4+\delta})$.

A priori, it is not obvious that there is \emph{any} low-memory noise resilient transformation of the algorithm $A$ even with an arbitrary blow-up in communication. For example, if the sender were to encode their message using an arbitrary error-correcting code, the receiver would need to store the entire message in memory in order to decode it before applying the algorithm $A$. Our result shows that not only does this low-memory transformation exist, but that it can be done efficiently (both with respect to communication complexity and computational complexity).

\paragraph{Linear Streaming Algorithms.} In the above transformation, the length of the resulting stream is $O(n^{4 + \delta})$ for any $\delta > 0$. In the specific case where the streaming algorithm is \emph{linear}, we construct a scheme where the stream length is only quadratic, namely $O(n^{2+\delta})$. The key property of linear streaming algorithms that we will be leveraging is that they can be computed in pieces in any order, and later computations do not depend on the results of previous ones.

\begin{restatable}{definition}{deflinear} [Linear Streaming Algorithms]
\label{def:linear-stream}
    A linear streaming algorithm $A : \{0,1\}^n \to \bbF_q$ is described by a list of functions $g_i : \{0,1\} \to \bbF_q$ for $i\in [n]$, and computes the value $A(x) = g_1(x_1)+\ldots + g_n(x_n)$ (where addition is over $\bbF_q$) by tracking the partial sum $g_1(x_1)+\ldots + g_i(x_i)$ upon receiving the $i$'th bit.
\end{restatable}

Note, for example, that every linear function on codomain $\{ 0, 1 \}^n$ admits a linear streaming algorithm. We note that linear streaming algorithms capture a large class of interesting functions, including linear sketching algorithms (see for example~\cite{CormodeM05,CharikarCF04,AhnGM12,JohnsonL84}).  For linear streaming algorithms, we show the following result.

\begin{restatable}{theorem}{thmlinear}\label{thm:main-linear}
    Fix $\eps,\delta>0$. There is a function $\enc : X\subseteq\{0,1\}^n \to \{0,1\}^m$ where $m = O_\eps(n^{2 + \delta})$ and an explicit transformation $B$ such that the following holds: For any linear streaming algorithm $A$ that takes as input $x\in X\subseteq \{0,1\}^n$ as a stream, runs in space $s$ and time $t$, and outputs $A(x)$, the algorithm $B_A = B(A)$ takes as input $z \in \{ 0, 1 \}^m$ as a stream, runs in space $s \cdot O_\eps((\log n)^{O(1/\delta)})$ and time $m \cdot O_{\eps,\delta}\left( 1+ \frac{t}{n^2} \right)$, and satisfies that whenever $\Delta(z, \enc(x)) < \left( \frac14 - \eps \right) \cdot m$ then $B_A(z)$ outputs $A(x)$ with probability $\ge 1 - 2^{O_\eps(\log^2 n)}$. 
\end{restatable}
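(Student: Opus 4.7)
My plan exploits the special structure of a linear streaming algorithm $A(x) = \sum_{i=1}^n g_i(x_i)$. Since each $g_i$ is an affine map $\{0,1\}\to\bbF_q$, the value $A(x)$ can be maintained by Bob as a running sum in $\bbF_q$ using $O(\log q)$ bits; moreover, composing with any linear outer encoding $\bar y=C_{\text{out}}(x)\in\bbF_q^N$, one can write $A(x) = A'(\bar y) = c_0+\sum_{j=1}^N c_j\,\bar y_j$ as an affine functional on outer symbols, where the coefficients $c_j\in\bbF_q$ are determined from $A$ and the (linear) outer code at preprocessing time. This lets Bob operate at the granularity of outer symbols rather than individual bits, which is what will keep the total stream length at $O(n^{2+\delta})$ rather than paying $\Theta(n^{1+\delta})$ of repetition per bit of $x$.

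I would take $\enc(x)$ to be a concatenated code. The outer code is a Reed--Solomon-type code over $\bbF_q$ with $q=\poly(n)$, of length $N$ and slowly shrinking rate $\Theta(1/n^{\delta/4})$ so that its relative distance is close to $1$. Each outer symbol is encoded by a binary inner code of length $\ell=O_\eps(\log q)$ and relative distance $\tfrac12-\eps'$ (e.g., a Reed--Muller or binary BCH code), and each inner codeword is transmitted $R$ times consecutively so that $N\cdot\ell\cdot R=O_\eps(n^{2+\delta})$. Bob's decoder processes the stream in $N$ ``blocks'' of length $\ell R$: within each block, he takes the bitwise majority over the $R$ repetitions, applies the explicit inner decoder to obtain $\hat{\bar y}_j\in\bbF_q$, and updates $S\gets S+c_j\,\hat{\bar y}_j$. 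Memory used is $O(\log n)+s$, well within the theorem's budget of $s\cdot (\log n)^{O(1/\delta)}$.

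The main obstacle is that any error in any single $\hat{\bar y}_j$ propagates directly into $S$ through $A'$, and Bob cannot afford to perform global Reed--Solomon error-correction in low memory (polynomial interpolation alone takes $\Omega(n)$ space). My intended workaround is to equip each inner block with a short randomized checksum tied to Bob's private randomness, so that Bob can \emph{detect} inner-decoding failures with high probability and mark the corresponding block as an \emph{erasure}. A budget-counting argument shows that forcing an undetected error costs the adversary at least $\Theta(\ell R)$ of its total $(\tfrac14-\eps)m$ noise budget, so the number of erased blocks will fall strictly below the outer code's erasure-correcting radius. The outer code is then chosen (e.g., a structured systematic code whose parity-check matrix admits incremental Gaussian elimination) so that its erasure correction can be carried out streamingly, symbol by symbol, in $\polylog(n)$ space while Bob is also maintaining $S$. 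Balancing the three parameters $N$, $\ell$, and $R$ to simultaneously achieve stream length $O(n^{2+\delta})$, erasure-rate below the outer radius, and low-memory streaming correction — together with the union bound over the randomized checksums that yields the stated failure probability — is the technical heart of the proof.
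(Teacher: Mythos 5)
There is a genuine gap, and it sits exactly at the step you flag as the ``technical heart.'' Your budget-counting claim does not survive arithmetic: a block has length $\ell R$ with $N\ell R = m$, and to make Bob's majority-then-inner-decode output a \emph{wrong but perfectly consistent} symbol the adversary need only replace the block by the repeated inner encoding of a different symbol, at cost about $\left(\tfrac12-\eps'\right)\ell R$ (and mis-decoding can be forced for roughly $\ell R/8$). With total budget $\left(\tfrac14-\eps\right)m$ this lets her do this to a constant fraction of the blocks --- on the order of $N/2$ of them, or essentially all of them at the mis-decoding threshold. Crucially, a substituted block is a legitimate codeword pattern, so no checksum computable from the received bits can flag it; and ``tying'' the checksum to Bob's private randomness cannot help, because $\enc(x)$ is fixed before transmission, there is no feedback channel, and the adversary is adversarial, so nothing Alice sends can depend on Bob's coins. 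Hence a constant fraction of your $\hat{\bar y}_j$'s are \emph{undetected} errors, not erasures, and the running sum $S$ is silently corrupted. The underlying vulnerability is that your access schedule is deterministic and public (block $j$ is always used for symbol $j$), which is precisely the failure mode the paper identifies for the ``send $\LDC(x)$ $n$ times and decode $x_i$ from the $i$'th copy'' attempt; the paper's fix is to randomize, at every level of a recursive decomposition of the sum into $r$ sub-sums, \emph{which} portion of the stream is used for \emph{which} sub-computation (fresh random permutations per chunk), so the adversary cannot target any sub-computation, and then to aggregate repeated estimates by a confidence-weighted majority (using a Reed--Muller-based local decoder that also reports how much error it saw), which avoids the factor-2 loss per level that a plain majority would incur.

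Even granting your detection step, the erasure-repair step is a second gap. With a constant fraction of the $N$ blocks erased, merely remembering the erasure pattern takes $\Omega(N)$ bits, and recovering an erased outer symbol of a Reed--Solomon-type code requires combining many other symbols that have already streamed past; ``incremental Gaussian elimination in $\polylog(n)$ space'' on a system with $\Omega(N)$ unknowns is not something you can invoke, and no construction with that property is given. So neither the detection nor the correction half of your plan goes through as stated, and the randomized-schedule-plus-confidences mechanism (or some substitute for it) is the missing idea rather than an optimization.
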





\paragraph{Randomized Algorithms.} We remark that our transformations in Theorems~\ref{thm:main} and~\ref{thm:main-linear} 
are only for deterministic algorithms. However, this easily implies the result for algorithms $A$ that \emph{pre-sample} all of their randomness (that is, algorithms that fix their randomness before receiving any bits of $x$, at which point they are deterministic for the remainder of the algorithm). Such algorithms can be viewed as sampling from a distribution over deterministic algorithms. Our transformation can then be applied to the ensuing deterministic algorithm, thus correctly computing the function $A$ with high probability while being resilient to noise.

We remark that this notion of randomized algorithms whose only access to randomness is pre-sampled at the start of computation is quite general: randomized algorithms that sample its randomness online can often be converted to protocols where the randomness is pre-sampled, see e.g.~\cite{Nisan90}.

\paragraph{Non-Binary Input Alphabets.}

Our main theorems are stated for the setting where the input stream is binary. We remark that this assumption is without loss of generality: a larger alphabet stream can be converted to a binary one by assigning each alphabet symbol to a unique binary string.

\subsection{Discussion of Subsequent Work}

Since this paper was first posted, the work of~\cite{GuptaGS24} improved upon our schemes in a few major ways.
\begin{itemize}
    \item For the case of general deterministic streaming algorithms, they demonstrate a transformation that only requires near-quadratic encoding length.
    \item They show that this is essentially tight: there is no transformation that has sub-quadratic encoding length.
    \item And, in the case of linear streaming algorithms, they construct a transformation that has an encoding of near linear length.
\end{itemize}
This addresses most of the open questions posed by our paper. \mnote{the index problem isn't that important anyway, we can just restrict this section to disc of subsequent work and leave it at this.}

\subsection{Related Works}

\paragraph{Streaming Algorithms.}
The study of streaming algorithms began with the work Morris~\cite{Morris78} on approximate counting; a rigorous analysis was given later in~\cite{Flajolet85}. Later works introduced other classic problems in streaming, including heavy hitters~\cite{CharikarCF02}, $\ell_p$ approximation~\cite{AlonMS96, MonemizadehW10, IndykW05}, and finding a nonzero entry in a vector (for turnstile algorithms)~\cite{MonemizadehW10}.

Many works, for example \cite{Garg21, Chen16, Ma21, BenJWY22}, consider the problem of processing noisy data sets using streaming algorithms. \cite{Garg21} shows a memory lower bound for learning a parity function with noisy samples of random linear equations.

However, this type of noise is quite different from our treatment. In these problems, noise is inherent in the way samples are generated, whereas we are investigating noise that occurs in the communication process. 

\paragraph{Error Correcting Codes.}

Our result can be viewed through the lens of low-space noise resilient one-way communication.

Noise resilient one-way communication, in other words, error correction~\cite{Shannon48,Hamming50}, is one of the most well-studied problems in computer science. One specific line of works related to our result is that of \emph{locally decodable codes}. Locally decodable codes \cite{Yekhanin12} can be viewed as a low-time and low-space version of error-correcting codes, where the goal is to learn a single bit of the original message. By constrast, for us, the decoder must be able to piece together any function of the input that is noiselessly computable in low space, with the tradeoff that the decoder accesses the entire encoding via a stream rather than via query access. Local decodable codes have been constructed in a variety of different parameter regimes, including constant query~\cite{Efremenko09, DvirGY11} and near $1$ rate~\cite{KoppartySY14}. In our work, we will use Reed-Muller codes~\cite{Muller54, Reed54} that achieve $\polylog(n)$ query complexity and slightly super-linear block length.

Another line of works related to ours is that of streaming algorithms for local testing of codes~\cite{RudraU10, McGregorRU11}. However, this direction was rendered moot by the recent discovery of locally testable codes with constant rate, distance, and locality~\cite{DinurELM22}.




\paragraph{Coding for Interactive Communication.}
The analogue of noise resilient one-way communication in the interactive setting, where Alice and Bob have inputs $x,y$ and engage in a protocol to compute some $f(x,y)$ while being resilient to error, comprises the study of \emph{coding for interactive communication}. Interactive coding was studied starting with the seminal works of Schulman~\cite{Schulman92,Schulman93,Schulman96} and continuing in a prolific sequence of followup works, including~\cite{BravermanR11,Braverman12,BrakerskiK12,BrakerskiN13,Haeupler14,BravermanE14,DaniHMSY15,GellesHKRW16,GellesH17,EfremenkoGH16,GhaffariH13,GellesI18,EfremenkoKS20b,GuptaZ22c,GuptaZ22a}. We refer the reader to an excellent survey by Gelles~\cite{Gelles-survey} for an extensive list of related work. 

The recent work of~\cite{EfremenkoHKKRS23} studies the space complexity of interactive coding schemes. Their main result is an interactive coding scheme that preserves the space complexity of the original noiseless protocol up to a $\log(n)$ factor, where $n$ is the length of the protocol. Their result can be viewed as the interactive analogue of ours, where their noiseless and noise resilient protocols are both interactive. We remark that their techniques are quite different than ours and do not apply to our setting, however, since their approach crucially relies on the communication of feedback from Bob to Alice.

\section{Overview of Techniques} \label{sec:overview}

Consider the task of computing a linear function $f(x) = \sum_{i \in [n]} g_i(x_i)$. In a noiseless setting, one can compute $f(x)$ by tracking the partial sum $\sum_{1 \le i \le n'} g_i(x_i)$ and updating it with $g_{n'+1}(x_{n'+1})$ upon receiving the $(n'+1)$'th bit of $x$.

As a first attempt to create a noise resilient version of this algorithm, one might consider the input stream $\enc(x) = \LDC(x)$, where $\LDC$ is a locally decodable code. In short, a locally decodable code is a noise resilient encoding of $x$ such that in order to decode any specific bit of $x$, say $x_i$, the decoder simply has to read $\polylog(n)$ (randomized) bits of $\LDC(x)$. In a streaming setting, the decoder can record only these $\polylog(n)$ bits into memory, then run the decoding algorithm to determine $x_i$ after all such query bits have been seen.

However, it's not clear that this local decoding property extends to arbitrary linear functions.\footnote{The smallest local decodable code we know of that supports local decoding to any arbitrary linear function is an extension of the Hadamard code, which has block length exponential in $n$. This will not be a satisfactory solution for us because (a) we would like encodings that are efficiently computable, and (b) keeping track of which bit of the stream one is receiving takes $n$ memory.} If we attempt to simultaneously decode all bits $x_i$ from $\LDC(x)$ using a separate query set $Q_i$ for each index, the issue is that we will need to track $\Omega(n)$ queries simultaneously since we cannot ensure that the local decoding for any index finishes before another: the query sets $Q_i$ are typically randomized so that any bit of the codeword is equally likely to belong to the query set for any particular index. 

A second attempt is to send $\LDC(x)$ $n$ times, with the intent that Bob locally decodes $x_i$ in the $i$'th $\LDC(x)$ and computes $f(x)$ by tracking the partial sum $\sum_{1 \le i \le n'} g_i(x_i)$ at the end of the $i$'th chunk. Now, the only space requirements are the space required to store this partial sum and the space required to locally decode $x_i$ in each $\LDC(x)$, which is $\polylog(n)$. However, the adversary can simply corrupt the first $\LDC(x)$ so that Bob mis-decodes $x_1$, thereby corrupting his final output, and ultimately, this approach is not so different from just sending the bits $x_1\ldots x_n$ in order. 

What Bob would like to do is randomize which $x_i$ he is computing from a given copy of $\LDC(x)$ so that the adversary does not know. This crucially uses the property that LDC's allow Bob to decode \emph{any} $x_i$ not known to the encoder or adversary. Then, if he computes each $x_i$ many times and takes the majority, he will learn $x_i$ correctly with high probability, regardless of the adversary's corruption process. Since the adversary does not know in advance which chunks Bob is computing $x_i$ in, she cannot target corruptions towards a specific $x_i$. However, this runs into an issue with the amount of space Bob requires. Since he computes each $x_i$ many times throughout the protocol and only takes the majority when he has finished these computations, he needs to track his current guess for its value through (almost) the entire protocol. When each $x_i$ was computed only once, Bob could just add $g_i(x_i)$ to the running sum, but now he needs to track each $x_i$ individually between the first and last time he queries it, for a total of $\Omega(n)$ bits of space.

The crux of our construction is an organized system for aggregating decoding attempts in space much smaller than storing all $n$ attempts simultaneously.

\paragraph{A recursive computation.} 

Let $r = \polylog(n)$. Suppose that we already know how to compute linear functions on inputs of size $n/r$ in space $s_{n/r}$, that is resilient to $\eps_{n/r}$ fraction of errors in the stream of $M_{n/r}$ copies of $\LDC(x)$. We will use this to build a scheme for computing linear functions on inputs of size $n$. Note that the base case is given by computing linear functions on a single bit, which can be done by local decoding on a single $\LDC(x)$.

To compute $f(x) = \sum_{i \in [n]} g_i(x_i)$, we just need to compute $f_1(x) = \sum_{i \in [n/r]} g_i(x_i), ~f_2(x) = \sum_{i \in [n/r+1, 2n/r]} g_i(x_i),~ \dots, ~f_r(x) = \sum_{i \in [n-n/r+1, n]} g_i(x_i)$. Each sub-computation can individually be done with the guarantee that if $< \eps_{n/r}$ fraction of the stream is corrupted, then the function is computed correctly.

Consider a stream consisting of $\ell \cdot r \cdot  M_{n/r}$ copies of $\LDC(x)$ (you can think of $\ell = \log^2 n$). We split up this stream into $\ell$ chunks each consisting of $r$ blocks of $M_{n/r}$ $\LDC(x)$'s. In each of the $\ell$ chunks, we will assign the $r$ blocks of $M_{n/r}$ $\LDC$'s to the computation of a random permutation of $f_1(x), \dots, f_r(x)$. Throughout the algorithm, we will keep track of all the outputs from each of the $\ell$ sub-computations for each of $f_1(x), \dots, f_r(x)$. At the end, we can take a majority vote for each of the $f_j(x)$. 
We illustrate the recursion process in Figure~\ref{fig:tree-figure}.

\begin{figure}[h!]
\centering

\begin{tikzpicture}[level/.style={level distance=25mm}, decoration={brace,mirror,amplitude=7}]

    \begin{scope}[xshift=-4cm]
        \node {$f$}
            child {node [xshift = -5mm] (1) {$f_1$}}
            child {node [xshift = -5mm] (2) {$f_2$}}
            child {node [xshift = 5mm] (r) {$f_r$}};

            \path (2) -- node[auto=false]{\textbf\ldots} (r);
    \end{scope}
    
    \begin{scope}[xshift=4cm]
        \node {$f$}
            child {node [xshift=-2mm] (1-1) {$f_{\pi_1(1)}$}}
            child {node [xshift=-4mm] (1-2) {$f_{\pi_1(2)}$}}
            child {node [xshift=-4mm] (1-r) {$f_{\pi_1(r)}$}}
            child {node [xshift=4mm] (2-1) {$f_{\pi_\ell(1)}$}}
            child {node [xshift=0mm] (2-2) {$f_{\pi_\ell(2)}$}}
            child {node [xshift=2mm] (2-r) {$f_{\pi_\ell(r)}$}};

            \path (1-r) -- node[auto=false, yshift=5mm]{\textbf\ldots} (2-1);
            \path (1-2) -- node[auto=false]{\ldots} (1-r);
            \path (2-2) -- node[auto=false]{\ldots} (2-r);

            \draw [thick] ([yshift=-4mm]1-1.west) to [square right brace] ([yshift=-4mm]1-r.east);
            \draw [thick] ([yshift=-4mm]2-1.west) to [square right brace] ([yshift=-4mm]2-r.east);
            \draw [decorate, thick] ([yshift=-8mm]1-1.west) --node[below=3mm]{$\ell$} ([yshift=-8mm]2-r.east);
    \end{scope}

    \draw[Bar->,line width=1pt] (-1.8,-1) -- (0,-1);
    
\end{tikzpicture}



\caption{In our transformation, each sub-function is computed $\ell$ times instead of $1$, and in each of the $\ell$ chunks, the sub-functions $f_i$ are computed in according to a random permutation $\pi$.}
\label{fig:tree-figure}
\end{figure}
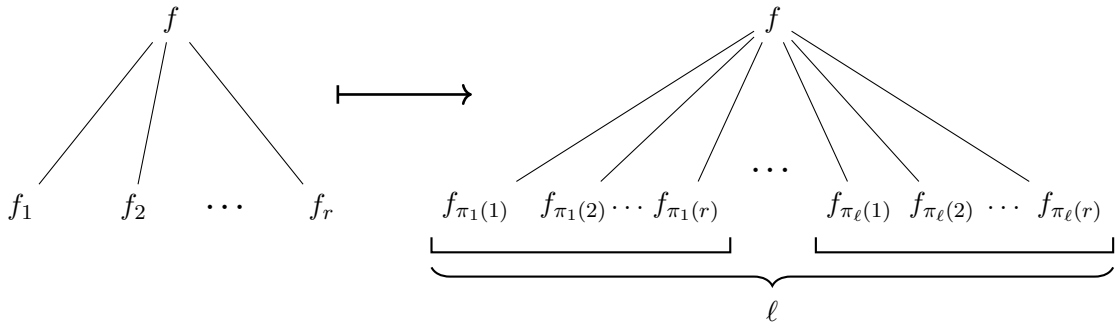

Because of the way we've randomly assigned blocks to $f_1, \dots, f_r$, the adversary cannot concentrate her attack on the computation of any specific $f_i$. More precisely, we can show that the amount of error present in the blocks corresponding to the computation of $f_k$ is (roughly) proportional to the total error present. Then, as long as at least half the blocks corresponding to $f_k$ have $< \eps_{n/r}$ fraction of errors, we have that the majority vote for each of $f_1, \dots, f_r$ will be correct. In total, this means that our algorithm for computing $f(x)$ is resilient to $\frac{\eps_{n/r}}2$ fraction of errors. 

Unfortunately, this factor of $2$ loss in the error resilience in each iteration of the recursion will prove to be problematic, as we can no longer guarantee a positive error resilience when $n$ gets large.

\paragraph{Decoding with confidences.}

In the above analysis, our worst case error distribution was that where just under one-half of the chunks have $\eps_{n/r}$ error while the rest are error-free. In this case, the chunks with $\eps_{n/r}$ error will produce a wrong output, while the error-free ones will produce a correct output.

However, we notice that in certain local decoding constructions (for us, we will use a Reed-Muller code), one can actually obtain an estimate of how much error there was in the codeword based on the inconsistency of the queries. In particular, in chunks where there were $\eps_{n/r}$ errors, even though we will obtain a wrong answer, we can also see that there were a lot of errors in the codeword, and thereby be less sure of the obtained answer.

This motivates our definition of \emph{local decoding with confidences}: in addition to outputting an answer, the decoding algorithm should also output a \emph{confidence} $c \in [0, 1]$ indicating how confident it is that the answer is correct. A confidence of $0$ indicates that it's not sure at all: the fraction of errors seen was near the threshold $\eps_1$; whereas a confidence of $1$ indicates that it's quite sure: either the answer is right and no errors occurred, or at least $2\eps_1$ errors were required to obtain a wrong answer with confidence $1$. Another way to view the confidence is as an estimate of how much error was present in the total codeword (how far the actual error is from $\eps_1)$.

Now, when we perform the recursive step to obtain a noise resilient algorithm for $f$ from $f_1, \dots, f_r$, in addition to recording the outputted answer from each chunk, we will also record the associated confidence. Suppose that the $\ell$ answers and confidences obtained for $f_k$ were $(\hat{q}^{(1)}, \hat{c}^{(1)}), \dots, (\hat{q}^{(\ell)}, \hat{c}^{(\ell)})$. Then, the weighted majority vote of these answers is defined to be the $\hat{q}$ with the largest total confidence. We can show that as long as the cumulative fractional error $c_n$ is at most $c_{n/r}$, we're guaranteed that the correct answer has more the highest total confidence.

We remark that one does not have to store all $\ell$ pairs $(\hat{q}^{(j)}, \hat{c}^{(j)})$. Rather, we can keep track of a most likely $q$ and associated confidence $c$. To update $(q, c)$ with a new pair $(\hat{q}, \hat{c})$, one can update $c$ by $\pm \hat{c}$ ($+\hat{c}$ if $\hat{q} = q$, and $-\hat{c}$ otherwise). If the resulting confidence $c$ is less than $0$, then it means that the answer $\hat{q}$ is supported by less overall error, so we set our most likely $q$ to be $\hat{q}$ and flip the sign of $c$.

Now, we analyze the space and communication complexity of the resulting algorithm. The space required is simply $s_{n/r}$, the space required to do a size $n/r$ sub-computation, plus the space required to store a pair $(q, c)$ for each of $f_1, \dots, f_r$. Overall, this means that for each layer of the recursion, we are only gaining an additive factor of $r \cdot (|q| + |c|)$ space overhead, so the total space overhead is at most polylogarithmic in $n$. As for the size of the resulting stream, we see that the number of $\LDC(x)$'s is $\approx (r \cdot \ell )^{\log_r n}$,\footnote{In our actual protocol, we will repeat the procedure $\log^2 n$ times and take a final majority vote, so the number of $\LDC(x)$'s used will be $(r \cdot \ell)^{\log_r n} \cdot \log^2 n$.} which is polynomial in $n$. 

\paragraph{Computing non-linear functions.}


Our algorithm so far has only captured functions $f$ that are splittable into $r$ sub-computations whose computations do not depend on each other. However, for a general function, this may not be the case. Indeed, for many functions, the computations must be done \emph{sequentially}, where later computations can only be done once earlier computations have finished. We may split a general function $f$ into $r$ sub-functions where $q_1:=f_1(x[1:n/r],\emptyset)$ is the state of the algorithm after $n/r$ bits of the stream, $q_2:=f_2(x[n/r+1:2n/r],q_1)$ is the state of the algorithm after receiving the next $n/r$ bits of the stream, and so on. The difference from the linear function case is that to compute $f_j$, we must have the correct output of $f_{j-1}$.

In order to handle such sequential computations, we modify our above algorithm as follows. In the recursive layer computing $f$ from functions $f_1, \ldots, f_r$, we always compute $f_j$ from the starting state $q_{j-1}$ which is our current best guess for the output of $f_{j-1}$. This value of $q_{j-1}$ may not always be correct, in which case the computation done for $f_j$ will be doomed. 

As with our algorithm for linear functions, after every chunk, we update the most likely state $q_j$ and confidence $c_j$ for the sub-function that was computed in that chunk. However, when a state $q_j$ changes, this also means that computations done for $f_{j+1}\ldots f_n$ were based on wrong information. In the case of linear functions, the computations are independent so this does not matter, but for sequential functions, we have to discard the guesses $q_{j+1}\ldots q_n$. Formally, we set these states to $\emptyset$ and set the corresponding confidences all to $0$. 

Because each sub-function $q_j$ is only computed usefully when $q_{j-1}$ is correct and also the value of $q_j$ is frequently discarded, it is less clear that over many iterations the $q_j$'s will converge to the correct values. Nevertheless, we prove in Section~\ref{sec:general-streaming} that over enough iterations, the guesses $q_j$ will eventually all be correct as long as there are not too many errors in the stream.


\paragraph{Outline.}

The rest of this paper is organized as follows. In Section~\ref{sec:prelims}, we start with some preliminary theorems about error-correcting codes. In Section~\ref{sec:ldc}, we define our notion of local decoding with confidences and prove that Reed-Muller codes satisfy this property. Then, in Sections~\ref{sec:linear} and~\ref{sec:general-streaming}, we present our noise resilient streaming algorithms for linear and general (sequential) functions, respectively.





\section{Preliminaries} \label{sec:prelims}

\paragraph{Notation.}
\begin{itemize}
    \item The function $\log$ is in base $2$ unless otherwise specified.
    \item The set $[n]$ denotes the integers $1\ldots n$.
    \item We use $x[i:j]$ to denote the $i$'th to $j$'th bits of $x$ inclusive. We use $x(i:j]$ to refer to the $(i+1)$'th to $j$'th bit inclusive.
    \item We use $e$ as a variable rather than the universal constant $e$.
\end{itemize}

\subsection{Error-Correcting Codes} \label{sec:ecc}

We begin with some results about error-correcting codes. The first is the existence of a relative distance $\frac12$ binary code.

\begin{theorem} \cite{GuruswamiS00} \label{thm:binary-ecc}
    For all $\epsilon > 0$, there exists an explicit error-correcting code $\ECC_\epsilon = \{ \ECC_{\epsilon, n} : \{ 0, 1 \}^n \rightarrow \{ 0, 1 \}^m \}_{n \in \bbN}$ with relative distance $\frac12 - \epsilon$ and $m = O_\epsilon (n)$, and a $\poly_\epsilon(n)$-time decoding algorithm $\DEC_{\epsilon} : \{ 0, 1 \}^m \rightarrow \{ 0, 1 \}^n$, such that for any $x \in \{ 0, 1 \}^n$ and $w \in \{ 0, 1 \}^m$ satisfying $\Delta(\ECC_\eps(x), w) < \left( \frac14 - \frac\eps4 \right) \cdot m$, it holds that $x = \DEC_\eps(w)$.
\end{theorem}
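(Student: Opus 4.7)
The plan is to invoke a standard concatenated code construction, following the ideas of~\cite{GuruswamiS00}. The outer code will be a Reed-Solomon code over $\bbF_{q}$ with $q = 2^k$ for $k = \Theta(\log n)$, of some constant rate $R_1 > 0$ to be chosen as a function of $\eps$; it has relative distance $1 - R_1$ and admits efficient unique decoding (even with erasures) up to a $\frac{1-R_1}{2}$ fraction of errors via Berlekamp-Welch. The inner code will be a binary linear code of some constant rate $R_2 > 0$ and relative distance at least $\frac12 - \eps'$ for a small constant $\eps' = \eps'(\eps)$; such a code exists by the Gilbert-Varshamov bound with $R_2 = \Omega(\eps'^2)$, and since its block length $k$ is $O(\log n)$ I can either use an explicit construction or exhaustively search for one in $\poly_\eps(n)$ time.

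Concatenating these two codes then gives a binary code $\ECC_\eps$ of rate $R_1 R_2 = \Omega_\eps(1)$, so $m = O_\eps(n)$ as required, and relative distance at least $(1 - R_1)\bigl(\frac12 - \eps'\bigr)$, which I can make at least $\frac12 - \eps$ by choosing $R_1$ and $\eps'$ sufficiently small. The overall encoding map $\ECC_\eps$ is computable in $\poly_\eps(n)$ time by polynomial evaluation for the outer step and a table lookup for the inner step.

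For the decoding algorithm $\DEC_\eps$, I would apply Forney's Generalized Minimum Distance (GMD) decoder: each inner block is brute-force decoded (in $\poly(n)$ time, since $2^k = \poly(n)$) to its nearest inner codeword together with a reliability weight determined by its Hamming distance to the received block, and the resulting weighted outer word is then passed to a Reed-Solomon errors-and-erasures decoder. A standard analysis of GMD decoding shows that this procedure succeeds whenever the total fractional error on the concatenated codeword is strictly below half of its minimum relative distance.

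The main technical obstacle is simply the bookkeeping of constants so that the stated distance $\frac12 - \eps$ and the stated decoding radius $\frac14 - \frac\eps4$ both hold. I would handle this by choosing $R_1, \eps'$ with a bit of slack, so that the concatenated code in fact has relative distance slightly exceeding $\frac12 - \frac\eps2$; its unique-decoding radius is then at least $\frac14 - \frac\eps4$, while the weaker lower bound $\frac12 - \eps$ on the distance claimed in the theorem is automatic. The rest of the proof, including the standard analysis of GMD decoding and of Reed-Solomon errors-and-erasures decoding, can be quoted directly from the coding-theory literature.
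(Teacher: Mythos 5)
Your proposal is correct and is essentially the intended argument: the paper does not prove this theorem but imports it from~\cite{GuruswamiS00}, and the standard route to it is exactly your Reed--Solomon-concatenated-with-a-small-GV-inner-code construction decoded by Forney's GMD decoder (the same toolkit the paper records as Theorems~\ref{thm:berlekamp-welch} and~\ref{thm:concat-RS-decoding}), with your constant bookkeeping---designed distance slightly above $\frac12-\frac\eps2$ so the unique-decoding radius reaches $\left(\frac14-\frac\eps4\right)m$ while the claimed distance $\frac12-\eps$ is automatic---being the right way to set parameters. One small caution: a literal exhaustive search over all inner codes of block length $\Theta_\eps(\log n)$ is quasi-polynomial, so you should instead use the greedy (or conditional-expectations) Gilbert--Varshamov construction, which runs in $2^{O_\eps(\log n)}=\poly_\eps(n)$ time, or a fully explicit inner code, and note that brute-force inner decoding over the $2^{\Theta(\log n)}$ inner codewords is likewise $\poly_\eps(n)$.
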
 

Our second theorem is about the efficient decoding of Reed-Solomon codes~\cite{ReedS60}.

\begin{theorem}[Berlekamp-Welch; see e.g.~\cite{GemmellS92}] \label{thm:berlekamp-welch}
    Given a collection of tuples $(\alpha_1, v_1), \dots, (\alpha_n, v_n)$ where $\alpha_i, v_i \in \bbF_q$ and a parameter $d \le n$, there is an algorithm running in $\poly(n, \log q)$ time that outputs a polynomial $g(\alpha) \in \bbF_q[\alpha]$ of degree at most $d-1$ such that $\Delta( ( g(\alpha_i) )_{i \in [n]}, ( v_i )_{i \in [n]}) < \frac{n-d+1}{2}$.
\end{theorem}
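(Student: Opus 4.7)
The plan is to implement the Berlekamp--Welch key equations by linear algebra. Set $e^* = \lfloor (n-d)/2 \rfloor$, which upper bounds the number of errors the hypothesis allows: if $e := |\{i : g(\alpha_i) \ne v_i\}| < (n-d+1)/2$ then $e \le e^*$. The algorithm searches for a pair of polynomials $E(\alpha), N(\alpha) \in \bbF_q[\alpha]$ with $\deg E \le e^*$, $\deg N \le e^* + d - 1$, and $E \not\equiv 0$, satisfying the $n$ linear constraints $N(\alpha_i) = v_i \, E(\alpha_i)$ for every $i \in [n]$. This is a homogeneous linear system in the $2e^* + d + 1$ unknown coefficients; assuming it has a nonzero solution (verified below), Gaussian elimination over $\bbF_q$ produces one in $\poly(n, \log q)$ time, and the decoder outputs $g = N/E$ via polynomial division.

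For existence, I would exhibit a nonzero solution directly. Let $g$ be any polynomial of degree at most $d - 1$ with fewer than $(n-d+1)/2$ disagreements from $(v_i)$, and define the error locator $E^{\star}(\alpha) = \prod_{i : g(\alpha_i) \ne v_i}(\alpha - \alpha_i)$, whose degree $e$ satisfies $e \le e^*$. Then $(E^{\star}, E^{\star} \cdot g)$ satisfies every constraint: at error positions both sides vanish via $E^{\star}(\alpha_i) = 0$, and at correct positions $g(\alpha_i) = v_i$ makes them agree. Hence the linear system is consistent and admits a nonzero solution.

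The core correctness claim is that \emph{any} nonzero solution $(E, N)$ yields the same polynomial $g = N/E$. I would argue this by examining $P(\alpha) := E(\alpha) g(\alpha) - N(\alpha)$, which has degree at most $e^* + d - 1$ and vanishes at each of the $\ge n - e$ indices where $g(\alpha_i) = v_i$ (since there $E(\alpha_i) g(\alpha_i) = E(\alpha_i) v_i = N(\alpha_i)$). The key inequality $n - e > e^* + d - 1$, equivalent to $e + e^* < n - d + 1$, follows from $e < (n-d+1)/2$ and $e^* \le (n-d)/2$, whose sum is strictly less than $n - d + 1$. Thus $P$ has more roots than its degree allows and must be identically zero, giving $E g = N$ in $\bbF_q[\alpha]$; since $E \not\equiv 0$, it divides $N$ and the quotient is exactly $g$.

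The only mildly delicate point I anticipate is juggling the two degree bounds so that the linear system simultaneously admits a nonzero solution and forces the correctness polynomial $P$ to vanish on the nose. Once $e^* = \lfloor (n-d)/2 \rfloor$ is chosen, both inequalities line up by the arithmetic above, and the entire decoder reduces to Gaussian elimination followed by polynomial long division, each running in $\poly(n, \log q)$ time.
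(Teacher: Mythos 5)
Your proof is correct; note that the paper does not prove this statement at all --- it is imported as a black box from the literature (Berlekamp--Welch, cited to~\cite{GemmellS92}) --- so there is no in-paper argument to compare against. What you give is the standard Berlekamp--Welch argument, and the three steps (existence of a nonzero solution via the error locator $E^{\star}$ and $N^{\star}=E^{\star}g$, the root-counting argument $n-e > e^*+d-1$ forcing $P=Eg-N\equiv 0$, and the degree bookkeeping $e+e^*<n-d+1$) are all carried out correctly. Two small points worth making explicit: the argument tacitly assumes the evaluation points $\alpha_i$ are distinct (otherwise the root-counting step fails), which is the intended setting here since the theorem is applied to evaluations of a restricted Reed--Muller codeword along a curve; and you need not ``search for'' a solution with $E\not\equiv 0$ --- with distinct points, any nonzero solution of the linear system automatically has $E\not\equiv 0$, since $E\equiv 0$ would force $N$ (of degree at most $e^*+d-1<n$) to vanish at all $n$ points and hence be identically zero, which your own identity $Eg=N$ also shows. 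Finally, as stated the theorem implicitly presumes that a degree-$(d-1)$ polynomial within distance $<\frac{n-d+1}{2}$ exists; your proof correctly establishes the standard reading that whenever such a $g$ exists the algorithm outputs it (and uniqueness of that $g$ follows from your argument as well).
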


Finally, we recall generalized minimum distance decoding, which allows us to decode concatenated codes efficiently.

\begin{theorem}[GMD Decoding \cite{Forney66,Guruswami06}] \label{thm:concat-RS-decoding} 
    Given two codes, $N_{outer} : \{ 0, 1 \}^k \rightarrow \Sigma^n$ of distance $1 - \eps_{outer}$ with decoding time $T_{outer}$, and $N_{inner} : \Sigma \rightarrow \{ 0, 1 \}^m$ of distance $\frac12 - \eps_{inner}$ with decoding time $D_{inner}$, there is an algorithm running in time $\poly(D_{inner}, D_{inner})$ that on input $w \in \{ 0, 1 \}^{n \cdot m}$ outputs $x \in \{ 0, 1 \}^k$ such that $\Delta(w, N_{outer} \circ N_{inner} (x)) < \frac{(1-\eps_{outer})(\frac12 - \eps_{inner})}{2}$.

\end{theorem}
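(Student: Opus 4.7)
The plan is to follow Forney's classical generalized minimum distance (GMD) decoding approach. First, I would parse $w$ into $n$ inner blocks $w_1,\ldots,w_n \in \{0,1\}^m$ and run the inner decoder on each to obtain a candidate symbol $\sigma_i \in \Sigma$ along with a reliability score $\alpha_i := \max\bigl(0,\, 1 - 2\Delta(w_i, N_{inner}(\sigma_i))/((1/2-\eps_{inner})m)\bigr) \in [0,1]$, setting $\alpha_i = 0$ (and $\sigma_i$ arbitrary) when inner decoding fails. The score $\alpha_i$ is $1$ when $w_i$ is an inner codeword and degrades linearly to $0$ as $w_i$ approaches the inner unique-decoding radius.

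I would then try $n+1$ erasure patterns: sort the blocks by $\alpha_i$, and for each $j \in \{0,1,\ldots,n\}$ erase the $j$ blocks with the smallest $\alpha_i$, feeding the remaining $\sigma_i$'s into an errors-and-erasures decoder for $N_{outer}$. For Reed--Solomon-style outer codes such a decoder follows from a standard modification of Berlekamp--Welch (Theorem~\ref{thm:berlekamp-welch}) and succeeds whenever $2e + f < (1-\eps_{outer})n$, where $e$ is the number of symbol errors and $f$ the number of erasures; more generally I would assume $T_{outer}$ can be adapted in the standard way. Each of the $\le n+1$ resulting candidate messages is re-encoded under $N_{outer}\circ N_{inner}$, compared to $w$ in Hamming distance, and the closest one is output.

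The crux of the proof is to show that at least one of the $n+1$ thresholds satisfies $2e+f < (1-\eps_{outer})n$. I would analyze this via a randomized-threshold argument: sample $\tau \sim \mathrm{Unif}[0,1]$ and erase block $i$ iff $\alpha_i < \tau$. Let $y = N_{outer}(x)$ and $Y = N_{outer}\circ N_{inner}(x)$ for the intended $x$. A case split on whether $\sigma_i = y_i$ shows that in both cases the per-block expected contribution to $2e+f$ equals (or is bounded by) $2\Delta(w_i, N_{inner}(y_i)) / ((1/2-\eps_{inner})m)$: in the agreement case this uses the reliability definition directly, while in the disagreement case it uses the triangle inequality combined with the fact that distinct inner codewords lie at Hamming distance at least $(1/2-\eps_{inner})m$. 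Summing over $i$ and applying the hypothesis $\Delta(w, Y) < (1-\eps_{outer})(1/2-\eps_{inner}) mn / 2$ yields $\bbE_\tau[2e+f] < (1-\eps_{outer}) n$, so some $\tau$ achieves the bound. Since the distinct values among the $\alpha_i$ partition $[0,1]$ into at most $n+1$ equivalence classes of thresholds, the $n+1$ deterministic erasure patterns above cover every realizable one. This averaging argument is the main obstacle; the reliability weights have to be calibrated precisely so that both cases of the split produce the same linear upper bound in terms of $\Delta(w_i, N_{inner}(y_i))$.

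For the running time, the algorithm performs $n$ inner decodings ($n \cdot D_{inner}$), $n+1$ outer errors-and-erasures decodings (each essentially $T_{outer}$), and $n+1$ re-encode-and-compare passes, for a total of $\poly(T_{outer}, D_{inner}, n, m)$, matching the claimed bound (the statement's $\poly(D_{inner}, D_{inner})$ appears to be a typo for a polynomial in both decoder times).
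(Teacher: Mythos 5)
This theorem is stated in the paper as a known preliminary, cited to Forney and Guruswami without an in-paper proof, and your argument is exactly the classical GMD proof from those references: inner decoding with linearly calibrated reliability weights, the randomized-threshold erasure analysis giving $\bbE_\tau[2e+f] \le 2\Delta(w, N_{outer}\circ N_{inner}(x))/\bigl((\tfrac12-\eps_{inner})m\bigr) < (1-\eps_{outer})n$, and the reduction to $n+1$ deterministic erasure patterns. The proposal is correct (modulo the standard, and here harmless, assumption that the outer code admits errors-and-erasures decoding, which holds for the Reed--Solomon outer codes the paper actually uses, and your reading of $\poly(D_{inner}, D_{inner})$ as a typo for a polynomial in both decoding times is right).
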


\section{Locally Decoding with Confidences} \label{sec:ldc}

We now recall the definition of \emph{locally decodable codes}. For us, besides just correctness of local decoding when the distance to a codeword is small, we require that the decoder output a \emph{confidence} indicating how much error it witnessed (less error means higher confidence).

\begin{theorem} (Locally Decodable Code) \label{thm:ldc}
    For any $\eps > 0$ and $d = d(n) = \log^{1/\delta} n$, there is a code $\LDC : \{ 0, 1 \}^n \rightarrow \{ 0, 1 \}^{N(n)}$ where $N(n) = O_\eps (n^{1+\delta})$ 
    that satisfies the following properties:
    \begin{itemize}
    \item {\bf Distance:} 
        For any $x \not= y \in \{ 0, 1 \}^n$, it holds that $\Delta(\LDC(x), \LDC(y)) \ge \left( \frac12 - \eps \right) \cdot N$.
    \item {\bf Locally Decoding with Confidence:} 
        For any index $i \in [n]$ and any word $w \in \{ 0, 1 \}^N$, there exists a randomized algorithm $\cD_i$ that reads $Q = O_\eps ( \log^{2/\delta} n )$
        bits of $w$, runs in $O_\eps (n^{O(1/\delta)})$ time, and outputs a bit $\hat{x}_i$ along with confidence $\conf \in [0, 1]$, satisfying that 
        \begin{itemize}
        \item 
            For any $x \in \{ 0, 1 \}^n$ such that $\Delta(w, \LDC(x)) = \left( \frac14 - 2\eps - e \right) \cdot N$ where $e \ge 0$, it holds that 
            \[
                \hat{x}_i = x_i \qquad \text{and} \qquad \conf > e
            \]
            with probability at least $1 - \exp\left(- \log^2 n \right)$.
        \item 
            For any $x \in \{ 0, 1 \}^n$ such that $\Delta(w, \LDC(x)) = \left( \frac14 - 2\eps + e \right) \cdot N$ where $e > 0$, it holds that 
            \[
                \hat{x}_i = x_i \qquad \text{or} \qquad \conf < e
            \]
            with probability at least $1 - \exp\left( - \log^2 n \right)$.
        \end{itemize}
        Furthermore, the confidence $\conf$ can be represented as a rational number with denominator $4Q$. 
    \end{itemize}
\end{theorem}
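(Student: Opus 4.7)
My plan is to take $\LDC$ to be a concatenated code: the outer code is a Reed--Muller code $\mathsf{RM}(\bbF_q, m, d)$ of total degree $d = \log^{1/\delta} n$ over a field of size $q = \poly_{\eps}(d)$ with dimension $m$ chosen so the message space can accommodate $n$ bits, and the inner code is the distance-$(\tfrac12-\eps)$ binary code $\ECC_\eps$ from Theorem~\ref{thm:binary-ecc}. After parameter tuning the codeword length is $N = O_\eps(n^{1+\delta})$ and the relative distance is at least $(1-d/q)(\tfrac12-\eps) \geq \tfrac12 - \eps$ (absorbing constants into $\eps$). The index $i$ corresponds to a point $p_i \in \bbF_q^m$ under a systematic encoding so that $x_i$ is a designated bit of $f(p_i)$, where $f$ is the low-degree polynomial representing the encoded message.

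The local decoder $\cD_i$ runs $K = \polylog(n)$ independent trials. In trial $k$, it samples a uniformly random line $\ell_k$ through $p_i$ in $\bbF_q^m$, reads the inner blocks of $w$ at the $q$ points of $\ell_k$ (a total of $Q_{\text{trial}} = q \cdot O_\eps(\log q)$ queried bits per trial), applies $\DEC_\eps$ to each inner block to obtain an outer symbol $\hat{v}_y$ together with the inner block distance $\alpha_y$, and then applies GMD decoding (Theorem~\ref{thm:concat-RS-decoding}) to recover the degree-$d$ univariate polynomial $g_k$ on $\ell_k$ and the evaluation $\hat{b}_k := g_k(p_i)$. It also computes a per-trial error estimate $\tilde{\rho}_k := \frac{1}{q \cdot |\ECC_\eps|} \sum_{y \in \ell_k} \Delta(w_y, \ECC_\eps(g_k(y)))$ measuring the fraction of bit disagreements between the line restriction of $w$ and the codeword predicted by $g_k$. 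After the $K$ trials the decoder outputs the bit corresponding to the weighted-majority $\hat{b}_k$ as $\hat{x}_i$, and sets $\conf := \max\{0, (\tfrac14 - 2\eps) - \bar\rho\}$, where $\bar\rho = \frac{1}{K}\sum_k \tilde{\rho}_k$. The total query count is $Q = K \cdot Q_{\text{trial}} = O_\eps(\log^{2/\delta} n)$, and $\conf$ is a rational number with denominator $4Q$ since $\bar\rho$ is itself a fraction with denominator $Q$.

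The two confidence guarantees reduce to a concentration statement about $\bar\rho$. Since each line $\ell_k$ is uniformly random and its points are pairwise independent (for nontrivial lines), $\bbE[\tilde{\rho}_k]$ equals the global disagreement rate between $w$ and the codeword targeted by the line decoder, and by Chernoff over $K$ trials $\bar\rho$ is within $\eps/10$ of this expectation with probability $\geq 1 - \exp(-\Omega(\log^2 n))$. For \emph{Case 1}, when $\Delta(w, \LDC(x)) = (\tfrac14 - 2\eps - e) N$, the codeword $x$ is the unique codeword within distance $\tfrac14$; by a Chernoff bound over the points of a single line, with high probability each line has error rate below $\tfrac14 - \eps$, which is within the GMD radius, so $g_k$ equals the line restriction of $x$'s polynomial and $\hat{b}_k$ yields the correct $x_i$. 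Here $\bar\rho \approx \tfrac14 - 2\eps - e$ by concentration and hence $\conf \geq e$ (after absorbing the $\eps/10$ slack). For \emph{Case 2}, when $\Delta(w, \LDC(x)) = (\tfrac14 - 2\eps + e) N$, if $\hat{x}_i = x_i$ we are done; otherwise the decoder effectively targets some $y \neq x$ with $\Delta(w, \LDC(y)) \leq \Delta(w, \LDC(x))$, and the triangle inequality together with the codeword distance $\Delta(\LDC(x), \LDC(y)) \geq (\tfrac12 - \eps) N$ gives $\Delta(w, \LDC(y)) \geq (\tfrac14 + \eps - e) N$, hence $\bar\rho \geq \tfrac14 + \eps - e - o(\eps)$ and $\conf \leq e - \Omega(\eps) < e$.

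\textbf{Main obstacle.} The key subtlety is that different trials may decode to different nearby codewords when $w$ is close to more than one, so $\bar\rho$ does not simply track the normalized distance to a single codeword. Handling this requires showing that the weighted-majority output $\hat{x}_i$ and the empirical $\bar\rho$ together still obey the two guarantees; this is likely achieved by a careful aggregation that either weights each trial by its per-line reliability, or splits $\bar\rho$ into per-candidate components so that the dominant decoded codeword $y^*$ governs both the output and the relevant portion of the empirical error, letting the triangle-inequality bound kick in. Getting this interaction correct while preserving the clean closed-form confidence formula is the crux of the proof.
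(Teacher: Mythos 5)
Your overall architecture (Reed--Muller concatenated with the binary code of Theorem~\ref{thm:binary-ecc}, GMD line decoding via Theorem~\ref{thm:concat-RS-decoding}, confidence defined as a decoding radius minus an empirical error estimate) matches the paper, but there are two genuine gaps. First, your sampling object is wrong: you query random \emph{lines} through $p_i$ and assert that their points are pairwise independent, then apply Chernoff per line. For a line $p_i + \lambda v$ through a fixed point, the points at two distinct nonzero parameters determine each other, so they are not even pairwise independent, and no per-line concentration is available --- only the fact that each individual point is uniform, i.e.\ a Markov bound. At global error rate close to $\frac14$ this is fatal: the expected per-line error rate is $\approx \frac14 - 2\eps$ while the GMD radius is $\approx \frac14$, so Markov only bounds the per-trial failure probability by something like $1 - O(\eps)$, and the (weighted) majority over $K$ trials need not be correct. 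The paper avoids this by sampling random \emph{degree-$2$ curves} $p(\lambda) = v_0 + v_1\lambda + v_2\lambda^2$, whose evaluations at distinct nonzero $\lambda$ \emph{are} pairwise independent, and then invoking a Chebyshev-type bound (Proposition 2.7 of Yekhanin's survey) to get per-curve failure probability $O\bigl(\frac{1}{\eps q}\bigr) = o(1)$ for all error rates up to $\frac14 - 2\eps$; the price is decoding degree-$(2d-2)$ univariate polynomials instead of degree-$d$ ones. Replacing your lines by such curves is necessary, not optional.

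Second, the issue you flag as the ``main obstacle'' is exactly the step the proof needs and you do not supply it, and your clean formula $\conf = \max\{0,(\frac14-2\eps)-\bar\rho\}$ does not survive it. In Case 2 your $\bar\rho$ averages distances to whatever polynomial each trial happened to decode, which may differ across trials, so it is not related by a triangle inequality to the distance to any single codeword $y$. The paper's resolution is to make the per-trial error estimate \emph{conditional on the final output}: if the trial's decoded value $h(0)$ agrees with $\hat{x}_i$, set $\Delta_p$ to the observed distance $\Delta^h_p$, and otherwise set $\Delta_p$ to (decoding radius) $- \Delta^h_p$, i.e.\ a lower bound on the corruption needed on that curve for $\hat{x}_i$ to be the true value; then $\conf := \frac14 - \sum_p \Delta_p/Q$. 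With this definition, whenever $\hat{x}_i \neq x_i$ one has $\Delta_p + \Gamma_p \gtrsim \bigl(\frac12-\eps\bigr)\frac{Q}{k}$ on every curve ($\Gamma_p$ being the true corruption restricted to the curve), so a large confidence forces $\sum_p \Gamma_p$ to be large, and Hoeffding over the independent curves converts this into the Case 2 guarantee. Without this signed/conditional accounting (or an equivalent device), the second bullet of the theorem is unproven in your write-up.
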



\begin{proof}
    The locally decodable code that we will be using is the concatenation of two codes. The outer code $C_{outer}$ is a Reed-Muller code of $m = \log_d n$-variate polynomials over $\bbF = \bbF_q$ of degree $d-1$ where $q \in [2d / \eps, 4d / \eps]$ is a power of $2$. $C_{outer}$ has dimension $d^m = n$ and block length $N_{outer} = q^m < (4d / \eps)^m = n \cdot \left( \frac4\eps \right)^m = n^{1 + \delta \cdot \log(4/\eps)/ \log\log n} = O_\eps(n^{1 + \delta/2})$.  The inner code $C_{inner}$ is a binary code of relative distance $\frac12 - \frac\eps2$ (see Theorem~\ref{thm:binary-ecc}) with block length $N_{inner} = O_\eps(\log q) = o_\eps(n^{\delta/2})$. Thus, the block length of the concatenated code $C$ is $N = N_{outer} \cdot N_{inner} = O_\eps(n^{1+\delta})$. We will index the bits of the concatenated code by pairs $(v, j)$ where $v \in \bbF^m$ and $j \in [N_{inner}]$. We remark that the outer code $C_{outer}$ is systematic. 
    
    By the Schwartz-Zippel lemma~\cite{Schwartz80,Zippel79}, the distance of the outer code is $\ge (q - d + 1) \cdot q^{m-1} \ge \left( 1 - \frac\eps2 \right) \cdot N_{outer}$. So the relative distance of the concatenated code is $\ge \left( 1 - \frac\eps2 \right) \cdot \left( \frac12 - \frac\eps2 \right) \ge \frac12 - \eps$.

    The decoding algorithm is as follows. Suppose we want the $i$'th index of $x$ which is located at $v_0 \in \bbF^m$ in the outer code. Then, we do the following $k = 10\log^2 n / \epsilon^2$ times: Pick a random degree-$2$ polynomial $p(\lambda) = v_0 + v_1 \lambda + v_2 \lambda^2$ by sampling $v_1, v_2 \gets \bbF^m$. Then, we query all bits of $w$ located at $(p(\lambda), j)$ for $\lambda \in \bbF\backslash \{ 0 \}$, $j \in [N_{inner}]$ (we denote these collection of values by $w|_p$). This means that $Q = k \cdot (q-1) = O_\eps(\log^{2/\eps})$. By Theorem~\ref{thm:concat-RS-decoding} (combined with Theorems~\ref{thm:binary-ecc} and~\ref{thm:berlekamp-welch}), one can in $\poly(q \log q) = O_\eps(n^{O(1/\delta)})$ time find the unique degree $2d-2$ polynomial $h \in \bbF[\lambda]$ such that the distance between $w|_p$ and the $C_{inner}$ encodings of $h(\lambda)$ is at most $(q - 2d + 1) \cdot \left( \frac12 - \frac\eps2 \right) \cdot N_{inner} / 2 \ge \left( \frac12 - \eps \right) \cdot \frac{Q}{k}$. Then, the decoding algorithm outputs $\hat{x}_i$ set to be the majority of all the values of $h(0)$ among the $k$ iterations (where if such a polynomial $h$ of distance $(q-2d+1) \cdot \left( \frac12 - \frac\eps2 \right) \cdot N_{inner}/2$ doesn't exist, we say that $h(0)$ evaluates to $0$).

    The confidence $\conf$ is calculated as follows. For each of the $k$ polynomials $p$, we let $\Delta^h_p$ be the distance of the queried bits from the $C_{inner}$ encodings of $h(\lambda)$. If $h(0) = \hat{x}_i$, we set $\Delta_p = \Delta^h_p$, and otherwise we set $\Delta_p = (q-2d+1) \cdot \frac12 - \frac\eps2 ) \cdot N_{inner} / 2 - \Delta^h_p$. This gives us an estimate $\Delta_p$ of the amount of error present in $w$ conditioned on the output guess $\hat{x}_i$ being the actual value of $p(0)$. In particular, $\Delta_p \le \Delta(w|_p, \LDC(x)|_p)$. Then, we set $\conf := \frac14 - \sum_p \Delta_p / Q$.
    

    We now discuss why the local decoding properties hold. First, suppose $x \in \{ 0, 1 \}^n$ satisfies that $\Delta(w, \LDC(x)) \le \left( \frac14 - 2\eps \right) \cdot N$. Then, 
    \begin{align*}
        \Pr& \left[ \hat{x}_i \not= x_i ~\text{or}~ \conf \le \frac14 - 2\eps - \frac{\Delta(w, \LDC(x))}{N}  \right] \\
        &\le \Pr \left[ \hat{x}_i \not= x_i \right] + \Pr \left[ \sum_p \Delta_p \ge \left( \frac{\Delta(w, \LDC(x))}{N} + 2\eps \right) \cdot Q ~\Bigg|~ \hat{x}_i = x_i \right] \\
        &\le e^{-\eps^2k/10},
    \end{align*}
    where we've used the following two inequalities:
    \begin{itemize}
    \item 
        By Proposition 2.7 in~\cite{Yekhanin11}, the probability that for any of the $k$ polynomials $p$ it holds that $p(0) \not= x_i$ is 
        \[
            \le \frac{4(\Delta - \Delta^2)}{(q-1) \cdot \left( (1 - \eps) \cdot \left( \frac12 - \frac\eps2 \right) - 2\Delta \right)^2}
            \le \frac{1}{\eps (q-1)} 
            \le \frac{1}{\log^{1/\delta} n}
        \]
        when $\Delta := \frac{\Delta(c, \LDC(x))}{N} \le \frac14 - 2\eps$.
        Then, by the Chernoff bound, 
        \begin{align*}
            \Pr[\hat{x}_i \not= x_i] 
            \le \Pr \left[ \# \left[ p : p(0) \not= x_i \right] \ge \frac{k}2 \right] 
            \le e^{-k/10}.
        \end{align*}
    \item 
        Given that $\hat{x}_i = x_i$, we have that $\Delta_p \le \Delta(w|_p, \LDC(x)|_p)$, so that $\bbE [\Delta_p] \le \frac{\Delta(w, \LDC(x))}{N} \cdot \frac{Q}{k}$. By Hoeffding's inequality, 
        \[
            \Pr \left[ \sum_p \Delta_p \ge \left( \frac{\Delta(w, \LDC(x))}{N} + 2\eps \right) \cdot Q \right]
            < e^{-8\eps^2 \cdot k}.
        \]
    \end{itemize}

    \noindent
    Second, suppose $x \in \{ 0, 1 \}^n$ satisfies that $\Delta(w, \LDC(x)) > \left( \frac14 - 2\eps \right) \cdot N$. Then, 
    \begin{align*}
        \Pr & \left[ \hat{x}_i \not= x_i ~\text{and}~ \conf \le \frac{\Delta(w, \LDC(x))}{N} - \left( \frac14 - 2\eps \right) \right] \\
        &\le \Pr \left[ \conf \le \frac{\Delta(w, \LDC(x))}{N} - \left( \frac14 - 2\eps \right) ~\Bigg|~ \hat{x}_i \not= x_i \right] \\
        &= \Pr \left[ \sum_p \Delta_p \ge \left( \frac12 - 2\eps - \frac{\Delta(w, \LDC(x))}{N} \right) \cdot Q ~\Bigg|~ \hat{x}_i \not= x_i \right].
    \end{align*}
    Note that since $\hat{x}_i \not= x_i$, and the restriction of $\LDC(x)|_p$ to any polynomial $p$ is a code of distance $\ge (q - 2d + 1) \cdot \left( \frac12 - \frac\eps2 \right) \cdot N_{inner} \ge \left( \frac12 - \eps \right) \cdot \frac{Q}{k}$. Thus, if $\Gamma_p$ denotes the actual amount of corruptions to the points of $\LDC(x)|_p$, then $\Gamma_p + \Delta_p \ge \left( \frac12 - \eps \right) \cdot \frac{Q}{k}$. The above inequality is then
    \begin{align*}
        \le \Pr \left[ \sum_p \Gamma_p \ge \left( \frac{\Delta(w, \LDC(x))}{N} + \eps  \right) \cdot Q \right].
    \end{align*}
    Since $\bbE[\Gamma_p] = \frac{\Delta(w, \LDC(x))}{N} \cdot \frac{Q}{k}$, by Hoeffding's Inequality,
    \[
        \Pr \left[ \sum_p \Gamma_p \ge \left( \frac{\Delta(w, \LDC(x))}{N} + \eps \right) \cdot Q \right]
        \le e^{-2\eps^2 \cdot k }.
    \]
    
\end{proof}

\section{Noise Resilient Streaming for Linear Algorithms} \label{sec:linear}

Our first result is a noise resilient conversion for linear streaming algorithms $A$. We begin by recalling the definition of a linear streaming algorithm.

\deflinear*

For such algorithms, we describe a noise resilient conversion that incurs quadratic blow-up in communication complexity. 

\thmlinear*

\subsection{Statement of Algorithm}

Let $\eps > 0$. Throughout the section let $r = \log^{1/\delta} n$ and define $\ell := \log^2{n}$. We also assume for simplicity that $\log_r n$ is an integer. We will also assume that $n > \max \left\{ \exp(\exp({8\delta/\eps})), \exp(1/\eps) \right\}$ is sufficiently large, so that also $\ell < n$.

We begin by specifying the encoding $\enc(x)$ that Alice sends in the stream.

\begin{definition}[$\enc(x)$] \label{def:enc-linear}
        Let $\LDC(x)$ be a locally decodable code with $N(n) = |\LDC(x)| = O_\eps(n^{1 + \delta})$ and query complexity $Q = O_\eps(\log^{2/\delta} n)$ satisfying the guarantees of Theorem~\ref{thm:ldc} for $\eps/8$. Then Alice sends $\enc(x):=\LDC(x)^{\log^2 n \cdot M_n}$ where $M_n = (r \cdot \ell)^{\log_r n}$ (that is, $\enc(n)$ is $\log^2 n \cdot M_n$ copies of $\LDC(x)$). In particular, $m = |\enc(x)| = (r \cdot \ell)^{\log_r n} \cdot \log^2 n \cdot N(n) = O_\eps(n^{2+4\delta})$.
\end{definition}

Throughout this section, we use the notation $A_{i,j}(x)$ to denote the quantity $g(x_{i+1})+\ldots + g_j(x_j)$. Then, $A(x)=A_{0,n}(x)$. We define $N_{j-i} := (r \cdot \ell)^{\log_r(j-i)} \cdot N(n)$ which, as we'll see in the description below, represent the number of bits read by the algorithm to approximate $A_{i,j}(x)$.

We describe Bob's streaming algorithm $B_A:=B(A)$. Before stating it formally, we provide a high-level description.

\paragraph{Description:} 
Let $z$ be the incoming stream that Bob receives. At a high level, the algorithm $B_A$ runs by generating many guesses for $A(x)$ along with confidences indicating how likely that guess was correct (depending on how much error is witnessed while generating the guess). At the end, the guess that has the most cumulative confidence is outputted.

For $i,j$ where $j-i$ is an integer power of $r$, we define the algorithm $\estA(i, j)$ that takes in two input two indices $i, j \in [n]$. It reads $N_{j-i}$ many bits of $z$ after which it outputs a guess for $A_{i,j}(x)$ along with a confidence. In particular, $\estA(0, n)$ outputs a guess for $A(x)$ along with a confidence. By running $\estA(0, n)$ many times and aggregating the guesses weighted by the confidences, we obtain a final output.

To compute $\estA(i, j)$, we break down the interval $(i : j]$ into $r$ subintervals $(i_0 : i_1], (i_1 : i_2], \dots, (i_{r-1} : i_r]$ where $i_a = i + a \cdot \frac{j-i}{r}$. We then recursively formulate guesses for each $A \left( x\left(i_{a-1}:i_a\right] \right)$ over many iterations by calling $\estA(i_{a-1},i_{a)}$ each many times. The choice of $a$ in each iteration must be randomized, so that the adversary cannot attack any single choice of $a$. More specifically, we will split up the $N_{j-i}$ length input stream into $\ell$ chunks, each of length $N_{j-i} / \ell$ bits. Each chunk is split into $r$ sections, each of size $N_{j-i} / (\ell \cdot r) = N_{(j-i)/r}$. In each chunk, we pick a random permutation $(\pi_1, \pi_2, \dots, \pi_r)$ of $[r]$. In the $a$'th section of the chunk, we will compute on the subproblem corresponding to interval $\left( i_{\pi_a-1} : i_{\pi_a} \right]$ by calling $\estA \left( i_{\pi_a-1}, i_{\pi_a} \right)$. 


Whenever a recursive call to $\estA \left( i_{a-1}, i_a \right)$ is completed, outputting $(\hat{q}, \hat{c})$, then: if $q_a = \hat{q}$ then the confidence $c_a$ is increased by $\hat{c}$, and if $q_a \not= \hat{q}$ then the confidence $c_a$ is decreased by $\hat{c}$. If the confidence $c_a$ becomes negative, then we are more sure that the correct state is $\hat{q}$ rather than $q_a$: then $q_a$ is replaced by $\hat{q}$ and the confidence is negated (so that it's positive).

\begin{algorithm}[H]
\caption{Bob's Noise Resilient Algorithm $B_A$}
\label{alg:main-linear}
\renewcommand{\thealgorithm}{}
\floatname{algorithm}{}

\begin{algorithmic}[1]
\State \textbf{input} $n\in \bbN$ and stream $z \in \{ 0, 1 \}^m$.
\Function{$\estA$}{$i,j$} \Comment{compute the state of $A$ starting at state $q$ between steps $i$ and $j$}

\If{$j=i+1$}
\State Read the next $|\LDC(x)|$ bits of the stream $y$. \label{line:ldc-linear}
\State Using Theorem~\ref{thm:ldc} compute guess $\hat{b}$ for $x[j]$ and confidence $\hat{c}$ in $O_\eps\left((\log n)^{O(1/\delta)} \right)$ bits of space. 
\State \Return $g_i(\hat{b}), c$.

\Else
\State Let $i_0=i, i_1=i+\frac{j-i}{r}, i_2=i+\frac{2(j-i)}{r}, \ldots , i_{r}=j$. \label{line:indices-linear}
\State \label{line:best-guesses-linear}Initialize a list of pairs $(q_1,c_1),\ldots(q_r,c_r)$ each to $(\emptyset,0)$. \Comment{cumulative best guesses and confidences}
\For {$\ell$ iterations}
    \State Let $\pi_1\ldots \pi_r$ be a random permutation of $[r]$.
    \For {$a\in [r]$}
        \State Compute $(\hat{q},\hat{c})\gets \estA(i_{\pi_{a}-1},i_{\pi_a})$. 
        \If{$\hat{q}=q_{\pi_a}$} \Comment{update the confidence on $\estA(i_{\pi_a},i_{\pi_j+1},q_{\pi_a-1})$}
            \State $c_{\pi_a} \gets c_{\pi_a}+\hat{c}$
        \Else
            \State $c_{\pi_a} \gets c_{\pi_a}-\hat{c}$
            \If {$c_{\pi_a}<0$} \Comment{if the guess changes, flip its confidence}
                \State $q_{\pi_a}\gets \hat{q}$ and $c_{\pi_a}\gets -c_{\pi_a}$.
            \EndIf
        \EndIf 
    \EndFor
\EndFor
\State \Return $q_1+\ldots +q_r,\min(c_1\ldots c_r)/\ell$
\EndIf

\EndFunction
\State
\State Initialize a pair $(q,c)$ to $(\emptyset,0)$.
\For {$\log^2 n$ iterations} \Comment{amplification step} 
    \State Let $(\hat{q},\hat{c}) \gets \estA(0,n)$
    \If{$\hat{q}=q$}
        \State $c \gets c+\hat{c}$
    \Else
        \State $c \gets c-\hat{c}$
        \If {$c<0$} 
            \State $q\gets \hat{q}$ and $c\gets -c$.
        \EndIf
    \EndIf 
\EndFor
\State \textbf{output} $q$. \Comment{output $A(x)$}

\end{algorithmic}
\end{algorithm}

\subsection{Proof of Theorem~\ref{thm:main-linear}}

We begin by proving the stated claims about the communication, time, and space complexities, and then proceed to prove correctness.

\subsubsection{Algorithmic Complexity}

We've already shown by Definition~\ref{def:enc-linear} that the length of the encoding $\enc(x)$ is $m = O_\eps(n^{2 + 4\delta})$. We begin by proving the computational complexity claim.

\begin{lemma}\label{lem:time-linear}
    If $A$ runs in time $t$, Algorithm~\ref{alg:main} runs in time $m \cdot O_{\eps,\delta}\left( 1+ \frac{t}{n^2} \right)$.
\end{lemma}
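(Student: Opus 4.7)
The plan is to decompose the running time of Algorithm~\ref{alg:main-linear} into four contributions and bound each separately: (a) reading the $m$ stream bits; (b) invoking the local decoder of Theorem~\ref{thm:ldc} at each leaf call of $\estA$ (the $j=i+1$ base case); (c) the $O(1)$-time $(q_a, c_a)$ bookkeeping at internal recursion nodes; and (d) evaluating the transition functions $g_i(\hat b)$ at each leaf call. The first three pieces will sum to $O_{\eps,\delta}(m)$, while only (d) contributes the $t/n^2$ factor.

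For (a), processing each stream bit once is $O(m)$. For (b), Theorem~\ref{thm:ldc} gives decoder time $O_\eps((\log n)^{O(1/\delta)})$ per leaf call; since the total number of leaf calls equals $M_n \cdot \log^2 n = m/N(n) = O_\eps(n^{1+2\delta}\log^2 n)$ and $N(n) = \Theta_\eps(n^{1+\delta})$ dominates $(\log n)^{O(1/\delta)}$, the total decoder cost is $O_\eps(m)$. For (c), each completed recursive call triggers $O(1)$ work at its parent, and the total number of recursive calls summed across all $\log_r n$ levels is a geometric series dominated by the leaf level, giving $O_\delta(m \log n / \log r) = O_\delta(m)$.

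The crux is (d). Given only the code of $A$, the natural way to obtain $g_i(b)$ is to simulate $A$ up to its $i$th step. Two facts make the necessary amortization possible. First, a single forward simulation of $A$ on the all-zero input, augmented at each step by also executing the step on the flipped input bit, extracts both $g_i(0)$ and $g_i(1)$ for every $i\in[n]$ in total time $O(t)$; thus the amortized cost of one $g_i$-extraction during such a sequential pass is $O(t/n)$. Second, Bob maintains one $A$-simulation per active recursion level of $\estA$: upon entering a level-$k$ sub-interval of size $n/r^k$, he advances the level-$k$ simulation from the parent level's current position by at most $n/r^{k-1}$ steps, at amortized cost $O(t/r^{k-1})$. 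Summing the cost of level-$k$ re-initializations over the $(r\ell)^k$ entries per pass of $\estA(0,n)$, over the $\log^2 n$ outer iterations, and over all $\log_r n$ recursion levels, gives a telescoping total $A$-simulation cost of $O_\delta\bigl(n^{2\delta}\cdot(\log n)^{O(1/\delta)}\cdot t\bigr) = O(m\cdot t/n^2)$.

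Combining (a)--(d) yields the claimed bound $m \cdot O_{\eps,\delta}(1 + t/n^2)$. The main obstacle is the amortization in (d): because the random permutations at each recursion level can force the innermost $A$-simulation to be advanced out-of-order within a sub-interval, one must ensure that rewinds and restarts do not dominate. I expect to handle this by precomputing and caching the $2r$ values $g_i(0), g_i(1)$ for all indices $i$ in the current innermost sub-interval (which has size $r = \polylog n$) once upon entry, storing them in $\polylog n$ space, so that the $\ell^{\log_r n - 1}$ subsequent leaf queries into that sub-interval are each served in $O(1)$ time. This keeps both the space overhead and the amortized $g_i$-evaluation cost within the stated budget.
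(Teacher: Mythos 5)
Your decomposition (a)--(c) matches the paper's accounting: the paper charges, per copy of $\LDC(x)$, time $N(n)+\polylog(n)$ to read and locally decode it, and there are $\log^2 n\cdot M_n$ copies, giving $O_{\eps,\delta}(m)$ for everything except the transition evaluations. Where you diverge is item (d). The paper's argument there is a one-line symmetry observation: by the structure of the recursion, every index $k$ is decoded (and hence $g_k$ evaluated) \emph{exactly} $\frac{\log^2 n\cdot M_n}{n}$ times, so the total cost of all transition evaluations is at most $\frac{\log^2 n\cdot M_n}{n}\cdot t = \frac{m}{n\cdot N(n)}\cdot t \le m\cdot\frac{t}{n^2}$ (using $N(n)\ge n$), with no simulation machinery at all. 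This works because in the paper's model a leaf call evaluates $g_j(\hat b)$ directly -- the index $j$ is known to Bob, and $g_j$ needs no prior state (in the general algorithm of Section 6 the needed state $q$ is passed down the recursion) -- so there is nothing to ``reach'' by replaying $A$. Crucially, the equal-frequency fact is also what lets one charge \emph{non-uniform} per-step costs $t_i$ against $t$: the total is $\frac{\log^2 n\cdot M_n}{n}\sum_i t_i$.

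Your version of (d) instead posits a sequential-access model for $A$'s steps and builds per-level simulations with caching. As a plan it lands on the right order of magnitude, but as written it has soft spots: the claim that advancing a level-$k$ simulation by $n/r^{k-1}$ steps costs ``amortized $O(t/r^{k-1})$'' silently assumes uniform per-step time (to repair it you must charge each descent the full traversal cost of its parent interval and then invoke the equal-visitation symmetry -- i.e., exactly the paper's observation); the rewind/out-of-order issue is only ``expected to be handled'' by the innermost-level cache, and the count of ``$\ell^{\log_r n-1}$ subsequent leaf queries'' served by one cache build is off (a single call to $\estA$ on a size-$r$ interval makes $\ell\cdot r$ leaf queries; $\ell^{\log_r n-1}$ is the number of times that interval is \emph{entered} per pass, and the cache cannot persist across entries within your space budget). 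None of this machinery is needed in the paper's model, and adopting the paper's counting both closes these gaps and shortens the proof considerably.
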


\begin{proof}
    Notice that each function $g_i(x_i)$ is computed exactly $\frac{\log^2 n \cdot M_n}{n}$ times throughout the protocol. (To see this, one can show inductively that $\frac{1}{j-i}$ fraction of the $\LDC(x)$'s in the computation of $A(i,j)$ are used towards the decoding of each $x_k$, $k \in (i, j]$.) Other than that, one must decode of each of the $\log^2 n \cdot M_n$ copies of $\LDC(x)$ throughout the stream, each of which takes $N(n) + \polylog(n)$ time to read and decode (see Theorem~\ref{thm:ldc}). 
    This gives a total computation time of $\log^2 n \cdot M(n) \cdot \left( N(n) + \polylog(r) + \frac{t}{n} \right) = O_{\eps,\delta}\left( m \cdot \left( 1 + \frac{t}{n^2} \right) \right)$.
\end{proof}

Next, we show that Algorithm~\ref{alg:main-linear} satisfies the required space guarantees. We'll first show that the confidences $c$ can be stored in $O_{\delta}(\log n)$ bits.

\begin{lemma} \label{lem:bits-conf-linear}
    For any $i,j,q$ where $j-i$ is a power of $r$, let $(\hat{q},\hat{c}):=\estA(i,j)$. It holds that $\hat{c}$ can be computed and stored as a fraction $(c_1,c_2)=c_1/c_2$ where $c_1\leq c_2 = (j-i)^{\log_r \ell} \cdot 4Q$. 
\end{lemma}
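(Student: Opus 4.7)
The plan is to prove this by induction on $\log_r(j-i)$, tracking how the denominator of the confidence fraction evolves through each level of recursion.

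For the base case $j-i=1$, the algorithm returns the confidence directly from the local decoding procedure of Theorem~\ref{thm:ldc}, which guarantees that $\conf$ is representable as a rational with denominator $4Q$. Since $(j-i)^{\log_r \ell} = 1$, this matches the target denominator $c_2 = 4Q$ exactly, and the numerator bound $c_1 \le c_2$ holds because $\conf \in [0,1]$.

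For the inductive step, I would assume the claim for every recursive call $\estA(i_{a-1}, i_a)$ with $i_a - i_{a-1} = (j-i)/r$, so that each returned confidence $\hat{c}$ is expressible with denominator $D' := ((j-i)/r)^{\log_r \ell} \cdot 4Q$ and numerator at most $D'$. The algorithm builds each cumulative $c_{\pi_a}$ by adding or subtracting these $\hat{c}$ values across the $\ell$ iterations of the outer loop, with occasional sign flips that keep $c_{\pi_a} \ge 0$. Since every $\hat{c} \in [0,1]$ and there are exactly $\ell$ updates to each $c_a$, each $c_a$ is a fraction of the form $a_a/D'$ with integer numerator $a_a \le \ell \cdot D'$. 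The returned confidence is then $\min(c_1, \ldots, c_r)/\ell$, which has denominator $\ell \cdot D'$ and numerator at most $\ell \cdot D'$.

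The arithmetic identity that makes the recursion collapse cleanly is $r^{\log_r \ell} = \ell$, which yields
\[
    \ell \cdot D' \;=\; \ell \cdot \left( \frac{j-i}{r} \right)^{\log_r \ell} \cdot 4Q \;=\; (j-i)^{\log_r \ell} \cdot 4Q,
\]
matching the target denominator $c_2$, and simultaneously bounding $c_1 \le c_2$. The proof is essentially bookkeeping, so I do not expect any serious obstacle; the only subtle point to verify is that the per-level multiplicative factor of $\ell$ on the denominator compounds to exactly $(j-i)^{\log_r \ell}$ across the $\log_r(j-i)$ levels of recursion, which is precisely what the identity $r^{\log_r \ell} = \ell$ delivers.
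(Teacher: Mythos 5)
Your proposal is correct and follows essentially the same route as the paper: induction on the interval length, using the inductive hypothesis that each recursive confidence has denominator $\left(\frac{j-i}{r}\right)^{\log_r \ell} \cdot 4Q$, observing that each $c_a$ is an integer combination of at most $\ell$ such values, and dividing by $\ell$ so that the identity $\ell \cdot \left(\frac{j-i}{r}\right)^{\log_r \ell} = (j-i)^{\log_r \ell}$ gives the stated denominator. The paper phrases the final identity via $a^{\log_r b} = b^{\log_r a}$ rather than $r^{\log_r \ell} = \ell$, but this is the same bookkeeping argument.
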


\begin{proof}
    We show this by induction on $j-i$ for $j-i\leq n$. For the base case where $j-i=1$, this statement holds by Theorem~\ref{thm:ldc} since $c_2 = 4Q$.
    
    Assume the statement holds for all $q'$ and $j'-i'<j-i$ where $j'-i'$ is a power of $r$. In the computation of $\estA(i,j)$, for $a\in r$, each value of $c_a$ is ultimately the sum of $\leq \ell$ values outputted by the function $\estA(i_{a-1},i_a)$. Since $\frac{i_a-i_{a-1}}{r}=\frac{j-i}{r}$, we can use the induction hypothesis to write each value as a fraction with denominator $\left( \frac{j-i}{r} \right)^{\log_r \ell} \cdot 4Q$. Then, each $c_a/\ell$ is a fraction with denominator 
    \[
        \left( \frac{j-i}{r} \right)^{\log_r \ell} \cdot 4Q \cdot \ell 
        = \ell^{\log_r \frac{j-i}{r}} \cdot 4Q \cdot \ell 
        = \ell^{\log_r(j-i)} \cdot 4Q
        = (j-i)^{\log_r \ell} \cdot 4Q,
    \]
    and so the output confidence $\min_{a\in [r]}\{c_a/\ell\}$ can be written with denominator $c_2 = (j-i)^{\log_r \ell} \cdot 4Q$.
\end{proof}

Now, we return to showing that Algorithm~\ref{alg:main-linear} satisfies the required space guarantees. 

\begin{lemma} \label{lemma:space-linear}
    Algorithm~\ref{alg:main-linear} uses $s \cdot O_\eps \left( (\log n)^{O(1/\delta)} \right)$ bits of space. 
\end{lemma}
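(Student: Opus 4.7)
The plan is to analyze the space used by a single recursive call of $\estA(i,j)$, separately from the space used by the recursion stack, and then multiply by the recursion depth. The recursion depth is $\log_r n = \frac{\delta \log n}{\log \log n}$, so any per-level cost that is $(\log n)^{O(1/\delta)}$ will give a final bound of the desired form $s \cdot O_\eps((\log n)^{O(1/\delta)})$.

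First, I would handle the \emph{base case}. When $j = i+1$, the call reads one copy of $\LDC(x)$ from the stream and runs the local decoder of \Cref{thm:ldc}. Since the local decoder makes $Q = O_\eps(\log^{2/\delta} n)$ queries, we only need to track which $Q$ positions to query (each an index of length $O_\eps(\log n)$), store the queried bits, and run the Reed--Muller / Reed--Solomon decoding of \Cref{thm:berlekamp-welch,thm:concat-RS-decoding} on $O_\eps(\log^{1/\delta} n)$-size blocks. All of this fits in $O_\eps((\log n)^{O(1/\delta)})$ bits. Evaluating $g_i$ on the decoded bit and outputting the pair $(g_i(\hat b), \hat c)$ uses an additional $O(s)$ bits (since $g_i$ is one step of the linear streaming algorithm $A$, whose state fits in $s$ bits), and the confidence $\hat c$ fits in $O(\log Q) = O_{\eps,\delta}(\log\log n)$ bits.

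Next I would analyze the \emph{per-level overhead} of a recursive call with $j - i > 1$. Such a call must store: (i) the endpoint labels $i, j$ and the subinterval boundaries $i_0, \ldots, i_r$ (computable on-the-fly from $i, j$, and the current index $a$, so $O(\log n)$ bits); (ii) the random permutation $\pi$ of $[r]$, which is $O(r \log r)$ bits; (iii) the loop counters for $\ell = \log^2 n$ and for $a \in [r]$, taking $O(\log\log n + \log r)$ bits; and (iv) the $r$ pairs $(q_a, c_a)$. For a linear streaming algorithm, each $q_a$ is a partial sum in $\bbF_q$ and fits in at most $s$ bits, while by \Cref{lem:bits-conf-linear}, each $c_a$ can be written as a fraction with denominator $(j-i)^{\log_r \ell} \cdot 4Q$. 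Since $\log_r \ell = 2\delta$ and $Q = O_\eps(\log^{2/\delta} n)$, this denominator is at most $n^{2\delta} \cdot O_\eps(\log^{2/\delta} n)$, so $c_a$ takes $O_{\eps,\delta}(\log n)$ bits. Overall the per-level storage is $O_\eps(r \cdot s + r \log n) = s \cdot O_\eps(\log^{1/\delta} n) + O_\eps(\log^{1+1/\delta} n)$ bits.

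Finally, I would combine: the total space is the per-level overhead times the recursion depth $\log_r n = O_\delta\!\left(\frac{\log n}{\log\log n}\right)$, plus the base-case space, plus the $O(s + \log n)$ bits for the outer amplification loop in the main body of \Cref{alg:main-linear}. This gives a bound of
\[
s \cdot O_\eps\!\left( (\log n)^{1/\delta} \cdot \tfrac{\log n}{\log\log n} \right) + O_\eps\!\left( (\log n)^{O(1/\delta)} \right)
= s \cdot O_\eps\!\left( (\log n)^{O(1/\delta)} \right),
\]
using $s \ge 1$ to absorb the additive term. The only mildly subtle point, and the one I would be most careful about, is confirming that \Cref{lem:bits-conf-linear} indeed controls the confidence bit-length at every level (so that stacked recursion does not blow the denominator up past $\poly(n)$), and that all auxiliary quantities (indices, permutations, decoder workspace) really can be freed between recursive calls so they are charged only once per active level on the recursion stack rather than once per iteration of the outer $\ell$-loop.
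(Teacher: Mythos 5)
Your proposal is correct and follows essentially the same route as the paper's proof: an induction/per-level accounting over the recursion depth $\log_r n$, charging each active level the $r$ pairs $(q_a,c_a)$ (with $c_a$ bounded via Lemma~\ref{lem:bits-conf-linear}) plus the permutation and counters, and adding the base-case decoder space and the $O(s+\log n)$ bits for the outer amplification loop. The subtlety you flag—that only the pairs $(q_a,c_a)$ persist between sections so workspace is charged once per level on the stack—is exactly how the paper's inductive step handles it.
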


\begin{proof}
    Define $\alpha = \alpha(n) = O_\eps(\log^{O(1/\delta)} n)$ to be the runtime and thus an upper bound on the space required in the local decoding with confidence of $\LDC(x)$ as given in Theorem~\ref{thm:ldc}. We claim the computation of $\estA(0,n)$ takes $\alpha + s + \log_r n \cdot r \cdot \left(s + O_{\eps,\delta}(\log n) \right)$ bits of storage. We show by induction on $j-i$ (where $j-i$ is a power of $r$ that the computation of $\estA(i,j)$ requires at most $\alpha + s + \log_r (j-i) \cdot r \cdot \left(s + O_{\eps,\delta}(\log n) \right)$ bits of space.
    
    The base case holds because when $j-i=1$ because then only $\alpha + s$ space is required. 

    For the inductive step, the computation of $\estA(i,j)$ requires $\ell$ computations of $\estA(i_{a-1}, i_a)$ for each $a \in [r]$. By the inductive hypothesis, each such computation takes $\alpha + s + \log_r \left( \frac{j-i}{r} \right) \cdot r \cdot \left(s + O_{\eps,\delta}(\log n) \right)$ bits of space. Each such computation is done in series, and between iterations only the pairs $\{ (q_a, c_a) \}_{a \in [r]}$ are tracked. Each $q_a$ is a state of size $s$. Each $c_a$ is the sum of $\le \ell$ confidences that are the output of computation $\estA(i_{a-1}, i_a)$. By Lemma~\ref{lem:bits-conf} the output confidence of $\estA(i_{a-1}, i_a)$ is a fraction $\le 1$ with denominator $\left( \frac{j-i}{r} \right)^{\log_r \ell} \cdot 4Q$, so $c_a$ can be represented in $2\log_2 \left( \ell \cdot \left( \frac{j-i}{r} \right)^{\log_r \ell} \cdot 4Q \right) < O_{\eps,\delta}( \log_2 n )$ bits.
    Moreover, in each chunk, we must compute a permutation of $r$, which takes at most $r \log_2 r = r \cdot o_{\delta}(\log n)$ bits. In total, this takes 
    \begin{align*}
        \le \left[ \alpha + s + \log_r \left( \frac{j-i}{r} \right) \cdot r \cdot \left(s + O_{\eps,\delta}(\log n) \right) \right] + r \cdot \left( s + O_{\eps,\delta}(\log n) \right) \\
        = \alpha + s + \log_r \left( j-i \right) \cdot r \cdot \left(s + O_{\eps,\delta}(\log n) \right)
    \end{align*}
    bits of space, as desired.

    Finally, the amplification part of Algorithm~\ref{alg:main} requires $\log^2 n$ iterations. Between iterations, we are only storing a pair $(q,c)$. This increases the number of bits required by at most $s + O_{\eps,\delta}(\log n)$.

    In total, the total space needed throughout the entire algorithm is 
    \[
        < s \cdot \log^{1/\delta+1} n + O_{\eps,\delta}(\log^{1/\delta+2} n) + \alpha < s \cdot O_\eps \left( (\log n)^{O(1/\delta)} \right)
    \]
    bits.
\end{proof}

\subsubsection{Correctness}

We now show correctness. Formally, correctness is shown by the following lemma.

\begin{lemma} \label{lem:correct-linear}
    When at most $\frac14 - \eps$ fraction of the stream $\enc(x)$ is corrupted, Algorithm~\ref{alg:main-linear} outputs $A(x)$ with probability $\ge 1-\exp{(-\eps^2\log^2n/32)}$. 
\end{lemma}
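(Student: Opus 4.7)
The plan is to establish, by induction on the recursion level $k = \log_r(j-i)$, the following invariant for each call $\estA(i,j)$: if the portion of the stream consumed by this call has fractional error $e$, then with high probability the output $(\hat{q}, \hat{c})$ satisfies, for every $\tilde{e}$, (a) if $e \le \tau_k - \tilde{e}$ with $\tilde{e}\ge 0$ then $\hat{q} = A_{i,j}(x)$ and $\hat{c} \ge \tilde{e}$, and (b) if $e \ge \tau_k + \tilde{e}$ with $\tilde{e} > 0$ then $\hat{q} = A_{i,j}(x)$ or $\hat{c} \le \tilde{e}$. Here $\tau_k$ is a threshold starting at $\tau_0 = \tfrac14 - \tfrac{\eps}{4}$ (from Theorem~\ref{thm:ldc} applied with parameter $\eps/8$) and shrinking by a small slack $\alpha_k$ at each recursive step to absorb concentration error. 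The base case $k=0$ is immediate from Theorem~\ref{thm:ldc}: $\estA(i,i+1)$ applies the local decoder to one copy of $\LDC(x)$, and determinism of $g_i$ transports the confidence guarantee from the decoded bit to $g_i(\hat{b})$.

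For the inductive step at level $k$, the stream segment is partitioned into $\ell$ chunks of $r$ sections of length $N_{r^{k-1}}$. Let $e_{c,s}$ be the fractional error in section $s$ of chunk $c$, so $\tfrac{1}{r\ell}\sum_{c,s} e_{c,s} = e$. Because each $\pi_c$ is a uniform permutation independent across $c$, the value $e_{c,\pi_c^{-1}(a)}$ is uniform over $\{e_{c,1},\ldots,e_{c,r}\}$, and Hoeffding's inequality together with a union bound over the $r$ sub-intervals shows that with probability $\ge 1 - r\exp(-\Omega(\alpha_k^2 \ell))$, for every sub-interval $a$ we have $\tfrac{1}{\ell}\sum_c e_{c,\pi_c^{-1}(a)} \le e + \alpha_k$. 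Conditioning on this event and applying the inductive hypothesis to each of the $r\ell$ sub-calls (via another union bound), I observe that the signed per-chunk contribution toward the true value $q^*_a := A_{i_{a-1},i_a}(x)$ (namely $+\hat{c}_{c,a}$ if $\hat{q}_{c,a} = q^*_a$, else $-\hat{c}_{c,a}$) is at least $\tau_{k-1} - e_{c,\pi_c^{-1}(a)}$ in every chunk: under case (a) a correct answer carries confidence $\ge \tau_{k-1} - e_{c,s}$, and under case (b) either the answer is still correct or a wrong answer has confidence $\le e_{c,s} - \tau_{k-1}$. Summing gives total signed sum $\ge \ell(\tau_{k-1} - e - \alpha_k) = \ell(\tau_k - e)$ with $\tau_k := \tau_{k-1} - \alpha_k$.

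The core technical lemma I will prove separately is that the weighted-majority update rule faithfully amplifies this signed sum: if over $\ell$ updates $(\hat{q}_t,\hat{c}_t)$ the signed contributions toward $q^*_a$ total $S$ (possibly negative), then the final state $(q_a, c_a)$ satisfies $c_a \ge |S|$ when $q_a = q^*_a$ and $c_a \le -S$ when $q_a \ne q^*_a$; in particular $q_a = q^*_a$ whenever $S > 0$. The proof uses the potential $\Phi_t := c_t$ if the current state equals $q^*_a$ and $\Phi_t := -c_t$ otherwise, together with a case analysis over the branches of the update rule (same/different current state vs.\ $q^*_a$; flip vs.\ no flip) showing $\Phi_{t+1} - \Phi_t \ge +\hat{c}_t$ on matching-$q^*_a$ updates and $\Phi_{t+1} - \Phi_t \ge -\hat{c}_t$ otherwise, with an occasional bonus $+2\hat{c}_t$ when two distinct wrong answers partially cancel. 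Telescoping yields $\Phi_\ell \ge S$. Combined with the returned value $(\sum_a q_a, \min_a c_a/\ell)$, this promotes the sub-interval guarantees into the level-$k$ invariant with threshold $\tau_k$.

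Finally, the top-level amplification loop runs $\estA(0,n)$ for $\log^2 n$ iterations and applies the same weighted-majority update, and an identical per-iteration signed-contribution analysis yields signed sum $\ge \log^2 n \cdot (\tau_{\log_r n} - e_{\text{total}}) \ge \log^2 n \cdot \Omega(\eps)$ whenever $e_{\text{total}} \le \tfrac14 - \eps$. The final failure probability is obtained by union-bounding the Theorem~\ref{thm:ldc} failures at the $\poly(n)$ base LDC decodings (each contributing $\exp(-\log^2 n)$) and the internal Hoeffding concentration failures at each level and each iteration, and choosing the slack schedule $\{\alpha_k\}$ so that $\sum_k \alpha_k < \tfrac{3\eps}{4}$ (preserving $\tau_{\log_r n} > \tfrac14 - \eps$) while keeping each tail $\exp(-\Omega(\alpha_k^2 \ell))$ small enough against the $\poly(n)$ total events, yielding the overall bound $\exp(-\eps^2 \log^2 n / 32)$. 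The main obstacle will be the potential-function lemma for the weighted-majority rule, which requires the careful eight-way case analysis sketched above, while the second obstacle is balancing the slack schedule $\alpha_k$ so that the degradation of $\tau_k$ and the union-bounded Hoeffding tails simultaneously fit within the claimed probability budget.
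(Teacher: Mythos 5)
Your overall architecture parallels the paper's (a recursive threshold that degrades by a per-level slack, a signed-confidence/potential argument showing the weighted-majority update satisfies $c^{\pm}_a \ge \sum_b \hat{c}^{\pm}_{a,b}$, and a top-level aggregation of signed confidences), and your base case, your reformulation of the two-case invariant as ``signed confidence $\ge \tau_k - e$,'' and your potential lemma are all fine. But there is a genuine gap in the probabilistic bookkeeping. You make the recursive invariant a \emph{high-probability} statement per call and then union-bound the per-node Hoeffding (permutation) failures over the entire recursion tree, which has $\Theta\!\left((r\ell)^{\log_r n}\right) = n^{1+\Theta(\delta)}$ nodes. To survive that union bound each per-node tail $\exp(-\Omega(\alpha_k^2 \ell))$ must be at most $1/\mathrm{poly}(n)$ (and must be $\exp(-\Omega(\log^2 n))$ if you want the lemma's stated bound $1-\exp(-\eps^2\log^2 n/32)$), which with $\ell = \log^2 n$ forces $\alpha_k \gtrsim 1/\sqrt{\log n}$ (respectively $\alpha_k = \Omega(1)$) at essentially every level. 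Since there are $\log_r n = \Theta(\delta \log n/\log\log n)$ levels, this gives $\sum_k \alpha_k = \Omega\!\left(\delta\sqrt{\log n}/\log\log n\right) \to \infty$, so your two stated requirements --- $\sum_k \alpha_k < \tfrac{3\eps}{4}$ and tails small against the $\mathrm{poly}(n)$ union bound --- are mutually incompatible for Algorithm~\ref{alg:main-linear} as parameterized. The slack schedule you defer to ``balance later'' cannot exist; rescuing your route would require inflating $\ell$ to roughly $\log^4 n$ or more, i.e.\ changing the algorithm and the encoding length, not just the analysis.

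The paper avoids this by never taking a union bound over the tree: its recursive invariant (Lemma~\ref{lem:corr-gen-linear}) is about the \emph{expectation} of the signed confidence, $\bbE[c^{\pm}(i,j)] > e$ whenever the error is below $\tfrac14 - \tfrac\eps4 - \tfrac\eps4\cdot\tfrac{\log(j-i)}{\log n} - e$. Inside one node, the children's expectation guarantees make $\sum_b(\hat{c}^{\pm}_{a,b} - e_b)$ a submartingale (the random permutation is handled by linearity of expectation, since the expected error of the section assigned to sub-interval $a$ is the chunk average), Azuma gives a deviation bound holding with probability $1 - 1/n$ per node, and the failure event is simply \emph{absorbed into the expectation}: since $c^{\pm} \ge -1$, a probability-$1/n$ failure costs $O(1/n)$ in expectation, which is dominated by the per-level threshold slack $\tfrac\eps4\cdot\tfrac{\log r}{\log n} = \Theta(\eps \log\log n/\log n)$. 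This is why the paper's per-level slack can be far below $1/\sqrt{\log n}$ and still sum to only $\eps/4$ over all levels, and why the only place a global concentration bound is needed is the single top-level Azuma step over the $\log^2 n$ amplification iterations, which yields the stated $1-\exp(-\eps^2\log^2 n/32)$. If you want to keep your whp-style induction, you would need to replace the tree-wide union bound with this kind of expectation-based absorption (or accept a different algorithm with larger $\ell$).
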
 

We prove the following statement which will easily imply Lemma~\ref{lem:correct-linear}. For a given $i,j$ and associated string that is read by the algorithm in the computation, let the random variable $(q(i,j),c(i,j))$ denote the state and confidence after the computation $\estA(i,j)$. We define the \emph{signed confidence}, denoted $c^\pm(i, j)$, to be defined as $+ c(i, j)$ if $q(i, j) = A_{i,j}(x)$ and $-c(i, j)$ otherwise. For intuition, the more positive $c^\pm(i,j)$ is, the more correct with higher confidence the output $(q(i,j), c(i,j))$ is, whereas the more negative $c_\pm(i,j)$ is, the more $q(i,j)$ is incorrect with high confidence. So, $c^\pm(i,j)$ gives us a scaled measure of how correct the output of $\estA(i,j)$ is.

\begin{lemma} \label{lem:corr-gen-linear}
    For any $0\leq i<j\leq n$ and $e\in \bbR$, given $\le \frac14-\frac\eps4-\frac\eps4\cdot \frac{\log(j-i)}{\log n} - e$ fraction of corruptions in the bits read by the computation of $\estA(i,j)$, we have $\bbE[c^\pm(i,j)] > e$.
\end{lemma}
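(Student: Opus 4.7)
The plan is to prove the lemma by strong induction on $\log_r(j-i)$, the recursion depth in Algorithm~\ref{alg:main-linear}. The base case $j-i=1$ follows directly from Theorem~\ref{thm:ldc} applied with parameter $\eps/8$: the call just runs local decoding with confidence on a single copy of $\LDC(x)$, and the two cases of the LDC guarantee yield, with probability at least $1-\exp(-\log^2 n)$, that the signed confidence $c^\pm$ strictly exceeds $e$. Combined with the trivial bound $c^\pm\ge -1$ on the failure event, and using the granularity $1/(4Q)$ with which the LDC reports its confidence, this gives $\bbE[c^\pm(i,j)]>e$ as required.

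For the inductive step, fix $(i,j)$ with $j-i=r^k$, $k\ge 1$, and assume the lemma for all intervals of size $(j-i)/r$. Index the $\ell\cdot r$ sections read by the algorithm by $(t,s)$ with $t\in[\ell]$ and $s\in[r]$, and write $E_{t,s}$ for the error fraction in section $(t,s)$; in iteration $t$ the uniformly random permutation $\pi_t$ assigns section $s$ to subinterval $\pi_t(s)$. For each $a\in[r]$, let $\hat{c}^\pm_{a,t}$ be the signed confidence produced by the $t$-th recursive call to $\estA(i_{a-1},i_a)$, and let $S_a$ denote the running signed value $+c_a$ or $-c_a$ according to whether $q_a$ currently matches $A_{i_{a-1},i_a}(x)$. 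A short four-way case analysis of the update rule (separating $\hat q=q_a$ from $\hat q\ne q_a$, and within the latter the sub-case where $c_a$ goes negative and the sign of $q_a$ flips) shows that every update adds exactly $\hat{c}^\pm_{a,t}$ to $S_a$, so at termination $S_a=\sum_{t=1}^\ell \hat{c}^\pm_{a,t}$. A parallel case split on the returned confidence $c=\min_a c_a/\ell$ establishes that $c^\pm(i,j)\ge \min_a S_a/\ell$ regardless of which (if any) $q_a$ are wrong.

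Next I lower-bound $\bbE[S_a/\ell]$ for each fixed $a$. Conditional on $\pi_t$, the $t$-th recursive call on subinterval $a$ sees error fraction $E_{t,\pi_t^{-1}(a)}$, and the inductive hypothesis with $e_{a,t}=\tfrac14-\tfrac\eps4-\tfrac{\eps\log((j-i)/r)}{4\log n}-E_{t,\pi_t^{-1}(a)}$ gives $\bbE[\hat{c}^\pm_{a,t}\mid\pi_t]>e_{a,t}$. Averaging over the uniform $\pi_t$ replaces $E_{t,\pi_t^{-1}(a)}$ by $\tfrac1r\sum_s E_{t,s}$, and summing over $t$ produces
\[
\bbE[S_a/\ell]\;>\;\tfrac14-\tfrac\eps4-\tfrac{\eps\log((j-i)/r)}{4\log n}-\bar E\;\ge\;\tfrac{\eps\log r}{4\log n}+e,
\]
where $\bar E$ is the overall error fraction and the last inequality is the lemma hypothesis. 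Hence every $\bbE[S_a/\ell]$ exceeds $e$ by a slack of $\tfrac{\eps\log r}{4\log n}=\Theta(\log\log n/\log n)$.

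The main obstacle is upgrading this per-$a$ mean bound into a bound on $\bbE[\min_a S_a/\ell]$, since in general $\bbE[\min]\le \min \bbE$. I handle it by concentration: across iterations $t$, both $\pi_t$ and the internal randomness of the recursive calls are independent, so $S_a$ is a sum of $\ell$ independent $[-1,1]$-valued random variables, and Hoeffding's inequality yields $\Pr[S_a/\ell<\bbE[S_a/\ell]-\theta]\le \exp(-\ell\theta^2/2)$. Taking $\theta=\tfrac{\eps\log r}{8\log n}$ (half the slack) and union-bounding over $a\in[r]$, the probability that any $S_a/\ell$ undershoots its mean by more than $\theta$ is at most $r\exp(-\eps^2\log^2 r/128)=r\cdot\exp(-\Omega(\log^2\log n))$, which is negligible under the standing assumption $n>\exp(\exp(8\delta/\eps))$. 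On the concentration event we have $c^\pm(i,j)\ge \min_a S_a/\ell \ge \min_a\bbE[S_a/\ell]-\theta>e+\tfrac{\eps\log r}{8\log n}$, and on its complement $c^\pm\ge -1$; combining these gives $\bbE[c^\pm(i,j)]>e$, closing the induction.
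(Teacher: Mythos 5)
Your proposal follows essentially the paper's own route: induction on the interval length, per-chunk signed confidences, the domination of the cumulative signed value by the sum of per-chunk signed confidences, averaging over the random permutation to convert per-section error into the overall error fraction (gaining the slack $\frac{\eps}{4}\cdot\frac{\log r}{\log n}$ per level), concentration over the $\ell$ chunks with a union bound over the $r$ subintervals, and a final expectation computation that charges the bad event against the bound $c^\pm \ge -1$. Your inequality $c^\pm(i,j)\ge \min_a S_a/\ell$ is a clean unified version of the paper's final case analysis and is correct (when some $q_a$ is wrong, that $S_a\le 0$ handles the case where the wrong summands coincidentally cancel). The one genuinely different ingredient is the concentration tool: you invoke Hoeffding via independence across chunks, while the paper sets up a submartingale and uses Azuma. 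For the linear algorithm this is legitimate, since for a fixed corrupted stream the chunk-$t$ output for a fixed subinterval depends only on $\pi_t$ and that chunk's internal decoding randomness, which are independent across $t$; note, however, that this independence is exactly what fails for the sequential algorithm of Section~\ref{sec:general-streaming} (where the recursive call takes the current state $q_{a-1}$ as input), which is why the paper phrases the argument via submartingales.

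Two local repairs are needed. First, your claim that every update changes $S_a$ by \emph{exactly} $\hat{c}^\pm_{a,t}$ is false: when the stored guess $q_a$ and the new guess $\hat q$ are distinct and both incorrect (possible, since the partial sums live in $\bbF_q$ rather than a two-element set), the update changes $S_a$ by $+\hat c$ or by $2c_a-\hat c$, both $\ge -\hat c=\hat{c}^\pm_{a,t}$ but not equal to it. Only $S_a\ge\sum_t \hat{c}^\pm_{a,t}$ holds (this is precisely what the paper proves), so you should apply Hoeffding to the sum $T_a:=\sum_t\hat{c}^\pm_{a,t}$ and then use $S_a\ge T_a$; your argument is otherwise unaffected, since only a lower tail is needed. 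Second, the quantitative claim at the end is not justified as written: with $\theta=\frac{\eps\log r}{8\log n}$ the failure probability is $r\exp(-\eps^2\log^2 r/128)$ while the retained slack is only $\frac{\eps\log r}{8\log n}$, and making the former smaller than the latter requires roughly $\log\log n \gtrsim \delta/\eps^2$, which does not follow from the stated assumption $n>\exp(\exp(8\delta/\eps))$ (that only gives $\log\log n\gtrsim \delta/\eps$). This is harmless asymptotically, but you should either increase the threshold on $n$ (stating "for $n$ sufficiently large as a function of $\eps,\delta$") or rebalance the deviation parameter, rather than asserting that the paper's standing assumption suffices.
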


We defer the proof to Section~\ref{sec:corr-proof-linear} and return to the proof of Lemma~\ref{lem:correct-linear}.

\begin{proof}[Proof of Lemma~\ref{lem:correct-linear}]
    In the special case where $i=0,j=n$ and there are $\le \frac14-\frac\eps2 - e < \frac14 - \frac\eps4 - \frac\eps4 \cdot \frac{\log(j-i)}{r} - e$ errors (where the inequality follows because $\log n < \eps \cdot r$), by Lemma~\ref{lem:corr-gen-linear} we have that $\bbE[c^\pm (0,n)] > e$.

    Then, the final amplification step of the protocol computes the pair $(q(0,n),c(0,n))$ for $\log^2 n$ times, where in each chunk $i$ we denote the fraction of error to be $\left( \frac14-\frac\eps2 - e_i \right)$, where $\sum_i e_i \geq \frac{\eps \log^2 n}{2}$ since we assumed the total error was $\le \frac14 - \eps$. Also, the protocol outputs $A(x)$ if and only if $\sum_i c^\pm_i(0,n)$ (denoting the value of $c^\pm(0,n)$ in the $i$'th chunk) is positive. By Azuma's inequality, 
    \[
        Pr\left[\sum_i c^\pm_i(0,n) > 0 \right] \geq Pr\left[\sum_i \left( c^\pm_i (0,n)-e_i \right) >-\frac{\eps \log^2n}{2} \right] \geq 1 - \exp \left( - \frac{\eps^2 \log^2 n}{32} \right).
    \]
\end{proof}

Combining Lemmas~\ref{lem:time-linear},~\ref{lemma:space-linear}, and~\ref{lem:correct-linear}, this concludes the proof of Theorem~\ref{thm:main-linear}.


\subsubsection{Proof of Lemma~\ref{lem:corr-gen-linear}} \label{sec:corr-proof-linear}

We prove Lemma~\ref{lem:corr-gen-linear} by induction on $j-i$. We'll show the statement assuming it holds for $j'-i'<j-i$ (where $j'-i'$ is a power of $r$).

We begin with the base case of $j-i=1$. In this case, the algorithm simply reads (a corrupted version of) $\LDC(x)$, and by Theorem~\ref{thm:ldc}, it holds that $\bbE[c^\pm(i,j)] > e$ since there is at most $\left( \frac14 - \frac\eps4 - e \right)$ error.

We start with some useful notation. In the computation of $\estA(i,j)$, we say that the output of the computation of $\estA(i_{a-1}, i_a)$ in the $b$'th chunk is $(\hat{q}_{a,b}, \hat{c}_{a,b})$. We define the \emph{signed confidence} $\hat{c}^\pm_{a,b}$ to be the confidence $\hat{c}_{a,b}$ signed by whether the guess $\hat{q}_{a,b}$ is correct: that is,
\[
    \hat{c}^\pm_{a,b} := \begin{cases}
        \hat{c}_{a,b} & \text{$\hat{q}_{a,b} = A_{i_{a-1},i_a}(x)$} \\
        -\hat{c}_{a,b} & \text{$\hat{q}_{a,b} \neq A_{i_{a-1},i_a}(x)$}.
    \end{cases}
\]

The cumulative best guess and confidence for $A_{i_{a-1},i_a}(x)$ after all $\ell$ chunks is denoted by $(q_{a}, c_{a})$. We let $q_{0} = \emptyset$. We define a \emph{cumulative signed confidence} $c^\pm_{a}$ as follows:
\[
    c^\pm_{a} := \begin{cases}
        c_{a} & \text{$q_{a} = A_{i_{a-1},i_a}(x)$} \\
        -c_{a} & \text{$q_{a} \neq A_{i_{a-1},i_a}(x)$}.
    \end{cases}
\]

We show a lower bound on the cumulative signed confidence at the end of all $\ell$ chunks $c^\pm_{a}$ that holds with high probability.

\begin{lemma} \label{lem:cpm-calc-linear}
    Let $a\in [r]$ index the $r$ intervals $(i_{a-1}, i_a]$ in the computation of $\estA(i,j)$. Let each of the $\ell$ chunks in the computation be indexed by $b \in [\ell]$ and suppose chunk $b$ has $\left( \frac14-\frac\eps4-\frac\eps4\cdot \frac{\log \frac{j-i}{r}}{\log n} - e_b \right) $ fraction of corruption. Then, for all $a \in [r]$, with probability $\ge 1-\exp\left( - \frac{\log^2 n}{8} \right)$ it holds that $c^\pm_{a} > \sum_{b\leq \ell} e_b - \frac{\ell}{\log n}$. 
\end{lemma}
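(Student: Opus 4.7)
The plan is to combine a deterministic monotonicity property of the weighted-majority update rule, the inductive expectation bound from Lemma~\ref{lem:corr-gen-linear}, and a concentration argument across the $\ell$ independent chunks. Throughout, write $(\hat{q}_{a,b}, \hat{c}_{a,b})$ for the output of the recursive call that performs sub-computation $a$ in chunk $b$, let $\hat{c}^{\pm}_{a,b}$ denote its signed version ($+\hat{c}_{a,b}$ when $\hat{q}_{a,b} = A_{i_{a-1},i_a}(x)$ and $-\hat{c}_{a,b}$ otherwise), and let $\gamma_{a,b}$ be the fractional corruption of the section of chunk $b$ assigned to sub-computation $a$ under the random permutation used in that chunk.

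First I would show that $c^{\pm}_a \geq \sum_{b=1}^{\ell} \hat{c}^{\pm}_{a,b}$ holds deterministically, by induction on $b$. One tracks how the pair $(q_a, c_a)$ evolves under the four update cases (whether $\hat{q}_{a,b}$ matches the running guess $q_a$, and, when it does not, whether the confidence flips) and verifies that in each case the change in $c^{\pm}_a$ during chunk $b$ is at least $\hat{c}^{\pm}_{a,b}$, with equality except in the ``three distinct values'' case (where $q_a$, $\hat{q}_{a,b}$, and the true value $A_{i_{a-1},i_a}(x)$ are all different), in which case the slack is strictly positive and in the right direction.

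Next I would lower-bound $\bbE[\hat{c}^{\pm}_{a,b}]$. Conditioning on the permutation in chunk $b$ and applying the inductive hypothesis (Lemma~\ref{lem:corr-gen-linear} at level $(j-i)/r$) with $e = \frac14 - \frac{\eps}{4} - \frac{\eps}{4} \cdot \frac{\log((j-i)/r)}{\log n} - \gamma_{a,b}$, one gets $\bbE[\hat{c}^{\pm}_{a,b} \mid \text{permutation}] > \frac14 - \frac{\eps}{4} - \frac{\eps}{4} \cdot \frac{\log((j-i)/r)}{\log n} - \gamma_{a,b}$. Since this permutation is uniform on $[r]$, $\bbE[\gamma_{a,b}] = E_b$, the fractional corruption of the whole chunk, so $\bbE[\hat{c}^{\pm}_{a,b}] > e_b$.

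Finally, I would use concentration across chunks: the $\hat{c}^{\pm}_{a,b}$ are independent across $b$ (different chunks use independent permutations and fresh LDC randomness) and lie in $[-1, 1]$. An Azuma--Hoeffding-style bound on $\sum_b \hat{c}^{\pm}_{a,b}$ combined with Step~1 gives the desired lower bound on $c^{\pm}_a$, and a union bound over $a \in [r]$ yields the ``for all $a$'' statement. The main obstacle is matching the claimed deviation $\ell/\log n$ together with failure probability $\exp(-\log^2 n / 8)$, which is tighter than what naive Azuma from the $[-1, 1]$ range yields. I expect the proof to exploit the very strong per-call LDC guarantee of Theorem~\ref{thm:ldc} (failure probability $\exp(-\log^2 n)$) to pin each $\hat{c}^{\pm}_{a,b}$ above its threshold $\frac14 - \frac{\eps}{4} - \frac{\eps}{4} \cdot \frac{\log((j-i)/r)}{\log n} - \gamma_{a,b}$ with very high probability after a union bound over all $\ell\,r$ recursive calls, reducing the problem to concentrating $\sum_b \gamma_{a,b}$ around $\sum_b E_b$, which in turn is controlled by a variance-aware Bernstein-type bound using that each chunk's section-error budget satisfies $\sum_s \gamma^*_{s,b} = r E_b$.
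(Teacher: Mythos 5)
Your first two steps coincide with the paper's proof: the deterministic domination $c^\pm_a \ge \sum_{b\le\ell}\hat{c}^\pm_{a,b}$ (the paper proves it exactly via $c^\pm_{a,b}-c^\pm_{a,b-1}\ge\hat{c}^\pm_{a,b}$), and the bound $\bbE[\hat{c}^\pm_{a,b}]>e_b$ obtained by applying the inductive hypothesis of \Cref{lem:corr-gen-linear} at scale $(j-i)/r$ and averaging the section-corruption $\gamma_{a,b}$ over the uniformly random permutation. The divergence is in your final step, and that is where there is a genuine gap. The paper does not do any per-call pinning: it simply observes that $X_b=\sum_{d\le b}(\hat{c}^\pm_{a,d}-e_d)$ is a submartingale with differences bounded by $2$ and applies Azuma directly to $X_\ell$ with deviation $\ell/\log n$, then union-bounds over $a\in[r]$ in \Cref{lem:final-cpm-linear}. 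Your proposed replacement --- union-bounding the events that each recursive call's signed confidence exceeds its threshold, invoking the $\exp(-\log^2 n)$ guarantee of Theorem~\ref{thm:ldc} --- is not available here: that guarantee applies only to the leaves of the recursion (single $\LDC$ decodings), whereas the sub-calls $\estA(i_{a-1},i_a)$ in this lemma live at level $(j-i)/r$, and for them the only inductive statement in hand is the \emph{expectation} bound of \Cref{lem:corr-gen-linear}. A single intermediate-level call can return a confidently wrong answer with non-negligible probability, so "pinning each $\hat{c}^\pm_{a,b}$ above its threshold" would require strengthening the induction hypothesis to a high-probability statement, which is neither what the paper proves nor something your outline supplies.

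Moreover, the reduction you propose does not actually resolve the parameter concern that motivated it. Even granting the pinning, you would need $\sum_b\gamma_{a,b}$ to concentrate around $\sum_b E_b$ within $O(\ell/\log n)$ with failure probability $\exp(-\log^2 n/8)$; but each $\gamma_{a,b}$ can have variance of constant order, so the total variance can be $\Theta(\ell)$, and a Bernstein bound then gives only a constant exponent when $\ell=\log^2 n$ --- the same shortfall you flagged for plain Azuma. So your route trades the paper's one-line Azuma step for a strictly harder plan that still does not reach the stated probability. (Your underlying observation about the constants is reasonable: in the general-algorithm analogue, \Cref{lem:cpm-calc}, the choice $\ell=r^2\log^4 n$ is precisely what makes the same Azuma computation deliver $\exp(-\log^2 n/32)$; but the intended proof of the present lemma is the direct martingale/Azuma argument on $X_b$, not a per-call union bound.)
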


\begin{proof}
    We first show that for all $a$, it holds that $c^\pm_{a} \geq \sum_{b\leq \ell} \hat{c}^\pm_{a,b}$. To see this, define $c^\pm_{a,b}$ to be the value of $c_a$ at the end of chunk $b$ if $q_a$ is correct at the end of chunk $b$, and otherwise, $-c_a$, so that $c^\pm_a=c^\pm_{a,\ell}$. Then, $c^\pm_{a,b}-c^\pm_{a,b-1} \geq \hat{c}^\pm_{a,b}$, which implies that $c^\pm_{a} \geq \sum_{b\leq \ell} \hat{c}^\pm_{a,b}$. 

    Consider the sequence $\left\{ X_{b} := \sum_{d\leq b} \left( \hat{c}^\pm_{a,d} - e_d \right) \right\}_{b \in [\ell]}$. Because $\bbE \left[ \hat{c}^{\pm}_{a,b} \right] > e_{b}$ by induction on induction on $j-i$, the sequence $\{ X_b \}_{b \in [\ell]}$ is a submartingale. It also holds that $|\hat{c}^\pm_{a,b} - e_b| \le 2$.

    Then, by Azuma's inequality, it holds that 
    \begin{align*}
        \Pr \left[ c^\pm_{a} > \sum_{b\leq \ell} e_b - \frac{\ell}{\log n} \right] \geq Pr \left[ X_{\ell} > -\frac{\ell}{\log n} \right] > 1- \exp \left( - \frac{\log^2 n}{8} \right).
    \end{align*}
\end{proof}

Now, we need a way to reconcile the values of $e$ and $e_b$. Recall that from the statement of Lemma~\ref{lem:corr-gen-linear} that $e$ is defined to be such that there is $\left( \frac14 - \frac\eps4 - \frac\eps4 \cdot \frac{\log{j-i}}{\log n} - e \right)$ fraction of error in the bits read by $\estA(i,j)$. In Lemma~\ref{lem:cpm-calc-linear}, chunk $b$ had $\left( \frac14 - \frac\eps4 - \frac\eps4 \cdot \frac{\log \frac{j-i}{r}}{\log n} - e_b \right)$ fraction of error. This means that the total fraction of error is equal to 
\begin{align*}
    \frac14 - \frac\eps4 - \frac\eps4 \cdot \frac{\log(j-i)}{\log n} - e
    &= \frac1\ell \sum_{b \le \ell} \left( \frac14 - \frac\eps4 - \frac\eps4 \cdot \frac{\log \frac{j-i}{r}}{\log n} - e_b \right) \\
    \implies e &= \frac1\ell \cdot \sum_{b \le \ell} e_b - \frac\eps4 \cdot \frac{\log r}{\log n}.
\end{align*}

Next, we show that the cumulative signed confidence is larger than proportional to $e$ at the end of each of the $\ell$ chunks.

\begin{lemma} \label{lem:final-cpm-linear}
    With probability at least $1-\frac1n$, it holds that for all $a\in [r]$ that $\frac{c^\pm_{a}}{\ell} > e+\frac{1}{\log n}$.
\end{lemma}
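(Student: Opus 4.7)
The plan is to apply Lemma~\ref{lem:cpm-calc-linear} together with a union bound over $a \in [r]$, and then convert the bound on $c^\pm_a$ into the desired bound on $c^\pm_a / \ell$ using the identity relating $e$ to the $e_b$'s that was derived in the paragraph immediately preceding the lemma statement.

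First I would apply Lemma~\ref{lem:cpm-calc-linear} for each $a \in [r]$ and take a union bound over all $r$ values. Since $r = \log^{1/\delta} n$ and each failure probability is $\exp(-\log^2 n / 8)$, the total failure probability is at most $r \cdot \exp(-\log^2 n / 8) \le 1/n$ for $n$ sufficiently large. Conditioned on this good event, for every $a \in [r]$ we have
\[
    \frac{c^\pm_a}{\ell} > \frac{1}{\ell} \sum_{b \le \ell} e_b - \frac{1}{\log n}.
\]

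Next I would substitute the identity $\frac{1}{\ell} \sum_{b \le \ell} e_b = e + \frac{\eps}{4} \cdot \frac{\log r}{\log n}$ that was established right before the statement of the lemma (by averaging the per-chunk error rates and comparing with the total error rate definition of $e$). This yields
\[
    \frac{c^\pm_a}{\ell} > e + \frac{\eps \log r}{4 \log n} - \frac{1}{\log n}.
\]
It then suffices to show $\frac{\eps \log r}{4 \log n} - \frac{1}{\log n} > \frac{1}{\log n}$, i.e., $\eps \log r > 8$. Since $r = \log^{1/\delta} n$, this is equivalent to $\log \log n > 8\delta/\eps$, which follows from the standing assumption $n > \exp(\exp(8\delta/\eps))$ stated at the beginning of the section.

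The main obstacle is really just arithmetic bookkeeping; there is no new probabilistic content beyond Lemma~\ref{lem:cpm-calc-linear}. The conceptual point worth emphasizing is that the $\frac{\eps}{4} \cdot \frac{\log r}{\log n}$ slack that was built into the error budget at each recursive level in the statement of Lemma~\ref{lem:corr-gen-linear} is precisely calibrated to dominate the $\frac{1}{\log n}$ additive loss incurred by the Azuma concentration in Lemma~\ref{lem:cpm-calc-linear}, with enough room left over to add back an extra $\frac{1}{\log n}$ margin that will be used to drive the outer induction in the proof of Lemma~\ref{lem:corr-gen-linear}.
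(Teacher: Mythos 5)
Your proposal is correct and matches the paper's proof essentially step for step: a union bound over the $r$ applications of Lemma~\ref{lem:cpm-calc-linear}, substitution of the identity $\frac1\ell\sum_b e_b = e + \frac{\eps}{4}\cdot\frac{\log r}{\log n}$, and the observation that $\frac{\eps}{4}\log r > 2$ for $n$ beyond the standing threshold. No gaps.
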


\begin{proof}
    Using Lemma~\ref{lem:cpm-calc-linear} with a union bound, it holds with probability at least 
    \begin{align*}
        1 - \sum_{1 \le a \le r} \exp \left( - \frac{\log^2 n}{8} \right)
        \ge 1 - r \cdot \exp \left( - \frac{\log^2 n}{8} \right) 
        \ge 1-\frac1n
    \end{align*}
    that
    \begin{align*}
        \frac{c^\pm_{a}}{\ell} &> \frac{\sum_{b\leq \ell} e_b - \frac{\ell}{\log n}}{\ell}
        \geq e+ \frac\eps4 \cdot \frac{\log r}{\log n} - \frac{1}{\log n} 
        \geq e+ \frac{1}{\log n}
    \end{align*}
    since for sufficiently large $n$ it holds that $\frac{\eps}{4} \cdot \log r > 2$.
\end{proof}

We return to the computation of $\bbE[c^\pm(i,j)]$. Note that 
\[
    c^\pm(i,j)= \begin{cases}
        +\min_{a\in [r]}\{\frac{c_{a}}\ell\} & q(i,j) = A_{i,j}(x) \\
        -\min_{a\in [r]}\{\frac{c_{a}}\ell\} & q(i,j) \neq A_{i,j}(x).
    \end{cases}
\]

Define the event $E$ to be the event where $\frac{c^\pm_{a}}{\ell} > e + \frac1{\log n}$ for all $a \in [r]$. By Lemma~\ref{lem:final-cpm-linear}, $E$ holds with probability at least $1-\frac1n$. We split the proof of Lemma~\ref{lem:corr-gen-linear} based on whether $e + \frac{1}{\log n} \ge 0$. 

If $e+\frac{1}{\log n} \geq 0$, then whenever $E$ holds we have that for all $a$ we have $c^\pm_{a}>0$ and thus $q(i,j) = A_{i,j}(x)$. Therefore,
\begin{align*}
    \bbE[c^\pm(i,j)] 
    &\geq \bbE \left[ \min_{a \in [r]} \left\{ \frac{c_{a}}{\ell} \right\} \Bigg| E \right] \cdot \Pr [E]
    + (\text{min value of $c^\pm(i,j)$}) \cdot \left( 1-\Pr [E] \right) \\ 
    &> \left(e+\frac{1}{\log n}\right)\left(1-\frac1n\right) +(-1)\cdot \frac{1}{n} \\
    &= e + \frac{n -e\log{n} - 1 - \log n}{n \log n} \\
    &> e.
\end{align*}

If $e+\frac{1}{\log n} < 0$, then if $q(i,j) = A_{i,j}(x)$, then $\bbE[c^\pm(i,j)]\geq 0 > e+\frac{1}{\log n} \geq e$. Otherwise, there is at least one value of $a$ for which $q_a$ is incorrect. Conditioned on $E$, it holds that $\frac{c^\pm_{a}}{\ell} 
> e+\frac{1}{\log n}$. Since $q_{a}$ is incorrect it holds that $c^\pm_{a}=-c_{a}$, implying that $\frac{c_{a}}{\ell} < -e-\frac{1}{\log n}$. As such,
\begin{align*}
    \bbE[c^\pm(i,j)] 
    &\ge \bbE\left[-\min_{a\in [r]}\left\{\frac{c_{a}}\ell\right\} \Bigg| E \right] \cdot \Pr [E] + (\text{min value of $c^\pm(i,j)$}) \cdot \left( 1-\Pr [E] \right) \\
    &> \left(e+\frac{1}{\log n}\right)\left(1-\frac1n\right) +(-1)\cdot \frac{1}{n} \\
    &= e + \frac{n -e\log{n} - 1 - \log n}{n \log n} \\
    &> e.
\end{align*}

This concludes the proof of Lemma~\ref{lem:corr-gen-linear}.
\section{Noise Resilient Streaming for General Algorithms} \label{sec:general-streaming}






We now consider general streaming algorithms, whose computational may be sequential in nature. Our main result is a noise resilient conversion for deterministic streaming algorithms $A$. Compared to our scheme for linear algorithms from Section~\ref{sec:linear}, the length of our encoding is larger: $n^{4+\delta}$ compared to $n^{2+\delta}$.


\mainthm*

Throughout this section, we use the notation
\[
    A(q, \hat{x}) \in \{ 0, 1 \}^s
\]
to denote the state of algorithm $A$ when starting with the state $q \in \{ 0, 1 \}^s$ and executing on $\hat{x}$ received in a stream. By definition, there is an explicit algorithm that computes $A(q, \hat{x})$ in $s$ space. We also use the shorthand $A(\hat{x})$ to denote $A(\emptyset, \hat{x})$. Notice that $A(x) = A(\emptyset, x)$ is simply the output state of the stream.

\subsection{Statement of Algorithm}

Let $\eps > 0$. Throughout the section let $r = \log^{1/\delta} n$ and define $\ell := r^2\log^4{n}$. We also assume for simplicity that $\log_r n$ is an integer. We will also assume that $n > \max \left\{ \exp(\exp({8\delta/\eps})), \exp(1/\eps) \right\}$ is sufficiently large, so that also $\ell < n$.

We begin by specifying the encoding $\enc(x)$ that Alice sends in the stream.

\begin{definition}[$\enc(x)$] \label{def:enc}
    Let $\LDC(x)$ be a locally decodable code with $N(n) = |\LDC(x)| = O_\eps(n^{1 + \delta})$ and query complexity $Q = O_\eps(\log^{2/\delta} n)$ satisfying the guarantees of Theorem~\ref{thm:ldc} for $\eps/8$. Then Alice sends $\enc(x):=\LDC(x)^{\log^2 n \cdot M_n}$ where $M_n = (r \cdot \ell)^{\log_r n}$ (that is, $\enc(n)$ is $\log^2 n \cdot M_n$ copies of $\LDC(x)$). In particular, $m = |\enc(x)| = (r \cdot \ell)^{\log_r n} \cdot \log^2 n \cdot N(n) = O_\eps(n^{4+6\delta})$.
\end{definition}


Next, we describe Bob's streaming algorithm $B_A:=B(A)$. Before stating it formally, we provide a high-level description.

\paragraph{Description:} 
Let $z$ be the incoming stream that Bob receives. At a high level, the algorithm $B_A$ runs by generating many guesses for $A(x)$ along with confidences indicating how likely that guess was correct (depending on how much error is witnessed while generating the guess). At the end, the guess that has the most cumulative confidence is outputted.

For $i,j$ where $j-i$ is an integer power of $r$, we define the algorithm $\estA(i, j, q)$ that has the following syntax:
\begin{itemize}
\item 
    $\estA(i, j, q)$ takes as input two indices $i, j \in [n]$ along with a state $q \in \{ 0, 1 \}^s$ which represents a guess for the state of $A(q, x(i:j])$. 
\item 
    It reads $N_{j-i} := (r \cdot \ell)^{\log_r(j-i)} \cdot N(n)$ many bits of $z$ after which it outputs a guess for $A(q, x(i:j])$ along with a confidence. 
\end{itemize}
In particular, $\estA(0, n, q)$ outputs a guess for $A(x)$ along with a confidence. By running $\estA(0, n, q)$ many times and aggregating the guesses weighted by the confidences, we obtain a final output.

At a high level, to compute $\estA(i, j, q)$, we break down the interval $(i : j]$ into $r$ subintervals $(i_0 : i_1], (i_1 : i_2], \dots, (i_{r-1} : i_r]$ where $i_a = i + a \cdot \frac{j-i}{r}$. We then recursively formulate guesses for each $A \left( x\left(1:i_a\right] \right)$ over many iterations by calling $\estA \left( i_{a-1}, i_a, q_{a-1} \right)$ for some state $q_{a-1}$. The choice of $a$ in each iteration must be randomized, so that the adversary cannot attack any single choice of $a$. More specifically, we will split up the $N_{j-i}$ length input stream into $\ell$ chunks, each of length $N_{j-i} / \ell$ bits. Each chunk is split into $r$ sections, each of size $N_{j-i} / (\ell \cdot r) = N_{(j-i)/r}$. In each chunk, we pick a random permutation $(\pi_1, \pi_2, \dots, \pi_r)$ of $[r]$. In the $a$'th section of the chunk, we will compute on the subproblem corresponding to interval $\left( i_{\pi_a-1} : i_{\pi_a} \right]$ by calling $\estA \left( i_{\pi_a-1}, i_{\pi_a}, ~ q_{\pi_a-1} \right)$. 

Note that as the algorithm $A$ is computed sequentially, computing a guess for $A \left( x \left[1:i_a\right] \right)$ requires having a starting state $q_{a-1}$ for $A \left( x\left[1:i_{a-1}\right] \right)$. Thus, throughout the entire computation, we will keep track of the best guess and associated confidence for each of the $r$ states $A\left(x \left[ 1 : i_a \right] \right)$, denoted $(q_a, c_a)$ and all initialized to $(\emptyset, 0)$. 

These pairs $(q_a, c_a)$ are maintained as follows. Whenever a recursive call to $\estA \left( i_{a-1}, i_a, q_{a-1} \right)$ is completed, outputting $(\hat{q}, \hat{c})$, then: if $q_a = \hat{q}$ then the confidence $c_a$ is increased by $\hat{c}$, and if $q_a \not= \hat{q}$ then the confidence $c_a$ is decreased by $\hat{c}$. If the confidence $c_a$ becomes negative, then we are more sure that the correct state is $\hat{q}$ rather than $q_a$: then $q_a$ is replaced by $\hat{q}$ and the confidence is negated (so that it's positive). In this last case where $q_a$ is replaced, we have no more reason to believe that further computations of $q_{a'}$, $a' > a$, are correct since they all depended on $q_a$, so we erase all such pairs $(q_{a'}, c_{a'})$ and reset them to $(\emptyset, 0)$. A key point in our analysis is to understand why this does not cause the error probability to accumulate.

\begin{algorithm}[H]
\caption{Bob's Noise Resilient Algorithm $B_A$}
\label{alg:main}
\renewcommand{\thealgorithm}{}
\floatname{algorithm}{}

\begin{algorithmic}[1]
\State \textbf{input} $n\in \bbN$ and stream $z \in \{ 0, 1 \}^m$.
\Function{$\estA$}{$i,j,q$} \Comment{compute the state of $A$ starting at state $q$ between steps $i$ and $j$}

\If{$j=i+1$}
\State Read the next $|\LDC(x)|$ bits of the stream $y$. \label{line:ldc}
\State Using Theorem~\ref{thm:ldc} compute guess $\hat{b}$ for $x[j]$ and confidence $\hat{c}$ in $s \cdot O_\eps\left( (\log n)^{O(1/\delta)} \right)$ bits of space.
\State \Return $A(q,\hat{b}), c$.

\Else 
\State Let $i_0=i, i_1=i+\frac{j-i}{r}, i_2=i+\frac{2(j-i)}{r}, \ldots , i_{r}=j$. \label{line:indices}
\State \label{line:best-guesses} Initialize a list of pairs $(q_1,c_1),\ldots(q_r,c_r)$ each to $(\emptyset,0)$. \Comment{cumulative best guesses and confidences}
\For {$\ell$ iterations}
    \State Let $\pi_1\ldots \pi_r$ be a random permutation of $[r]$.
    \State Set $(q'_1,c'_1)=(q_1,c_1),\ldots(q'_r,c'_r)=(q_r,c_r)$.
    \For {$a\in [r]$}
        \State Compute $(\hat{q},\hat{c})\gets \estA(i_{\pi_{a}-1},i_{\pi_a},q'_{\pi_a-1})$ where $q'_0:=q$. 
        \If{$\hat{q}=q'_{\pi_a}$} \Comment{update the confidence on $\estA(i_{\pi_a},i_{\pi_j+1},q_{\pi_a-1})$}
            \State $c_{\pi_a} \gets c'_{\pi_a}+\hat{c}$
        \Else
            \State $c_{\pi_a} \gets c'_{\pi_a}-\hat{c}$
            \If {$c_{\pi_a}<0$} \Comment{if the guess changes, reset its confidence}
                \State $q_{\pi_a}\gets \hat{q}$ and $c_{\pi_a}\gets -c_{\pi_a}$.
            \EndIf
        \EndIf 
    \EndFor
    \If{some $q_a$ was changed from its value at the beginning of the chunk}
        \State{For all $i > a$, set $(q_i, c_i) \gets (\emptyset, 0)$.} 
        \Comment{reset state and confidence for states with $i>a$}
    \EndIf
\EndFor
\State \Return $q_r,\min(c_1\ldots c_r)/\ell$
\EndIf

\EndFunction
\State
\State Initialize a pair $(q,c)$ to $(\emptyset,0)$.
\For {$\log^2 n$ iterations} \Comment{amplification step} 
    \State Let $(\hat{q},\hat{c}) \gets \estA(0,n,\emptyset)$
    \If{$\hat{q}=q$}
        \State $c \gets c+\hat{c}$
    \Else
        \State $c \gets c-\hat{c}$
        \If {$c<0$} 
            \State $q\gets \hat{q}$ and $c\gets -c$.
        \EndIf
    \EndIf 
\EndFor
\State \textbf{output} $q$. \Comment{output $A(x)$}

\end{algorithmic}
\end{algorithm}


\subsection{Proof of Theorem~\ref{thm:main}}



The proofs that the algorithm satisfies the required space, computational and communication guarantees are essentially the same as in Section~\ref{sec:linear}, but we reproduce them for completeness. The proof of correctness is somewhat more complicated but also follows the same general outline.

\subsubsection{Algorithmic Complexity}

We begin by proving the stated claims about communication, time, and space complexities. We've already shown by Definition~\ref{def:enc} that the length of the encoding $\enc(x)$ is $m = O_\eps(n^{4 + 6\delta})$. We thus proceed by proving the computational complexity claim.

\begin{lemma}\label{lem:time}
    If $A$ runs in time $t$, Algorithm~\ref{alg:main} runs in time $m \cdot O_{\eps,\delta}\left( 1+ \frac{t}{n^2} \right)$.
\end{lemma}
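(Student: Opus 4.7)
The plan is to mirror the argument of Lemma~\ref{lem:time-linear}, tracking two sources of cost separately: (i) the time to read and locally decode each copy of $\LDC(x)$, and (ii) the time to advance the algorithm $A$ by one step at the base case of the recursion. Since the encoding has length $m = \log^2 n \cdot M_n \cdot N(n)$ with $M_n = (r \cdot \ell)^{\log_r n}$, and each LDC read occurs exactly once in the pass over the stream, bookkeeping the total base-case workload against $m$ should yield the desired bound.

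First I would establish a counting lemma by induction on $j-i$: in a single call to $\estA(i,j,q)$, the $M_{j-i}$ copies of $\LDC(x)$ read are distributed evenly among the $j-i$ indices $k \in (i:j]$, so each $x_k$ is locally decoded exactly $M_{j-i}/(j-i)$ times within that call. The inductive step follows from the fact that $\estA(i,j,q)$ recurses $\ell$ times on each of the $r$ equally sized sub-intervals, and the random permutation is irrelevant for this count. Applying this at the top level across the $\log^2 n$ amplification iterations, each index $k \in [n]$ is decoded exactly $\log^2 n \cdot M_n / n$ times in the entire execution of Algorithm~\ref{alg:main}.

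Second, I would account for the two cost contributions. Every LDC read takes $N(n) + \polylog(n)$ time by Theorem~\ref{thm:ldc}, contributing a total of $\log^2 n \cdot M_n \cdot (N(n) + \polylog(n))$. Every base case additionally computes one step $A(q, \hat{b})$; since $A$ runs in time $t$ on $n$ input bits, the total cost over the $\log^2 n \cdot M_n / n$ decodings per index $k$, summed across all $k \in [n]$, is $(\log^2 n \cdot M_n / n) \cdot t$. Adding the two and using $\log^2 n \cdot M_n = m/N(n)$ together with $N(n) = \Omega_\eps(n^{1+\delta}) \geq n$, the total becomes
\[
    \frac{m}{N(n)} \cdot \left( N(n) + \polylog(n) + \frac{t}{n} \right) = O_{\eps,\delta}(m) + \frac{m \cdot t}{n \cdot N(n)} = m \cdot O_{\eps,\delta}\!\left(1 + \frac{t}{n^2}\right).
\]

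The one subtle point, which I do not expect to be a real obstacle, is that here $A$ is sequential rather than linear, so each base case evaluates $A(q,\hat b)$ from a specific incoming state $q$ rather than evaluating an independent $g_k$. This does not affect the time count, however, since each algorithm-step evaluation still costs at most the per-step cost of $A$ and we may amortize: the time to run $A$ through its $n$ steps is $t$, so the average per-step cost is $t/n$, and we invoke each such step $\log^2 n \cdot M_n / n$ times. The only other bookkeeping overhead per iteration of the inner loop is sampling a random permutation of $[r]$ and updating the $r$ pairs $(q_a, c_a)$, both of which cost $\poly(r, s, \log n)$ and are dominated by the LDC read cost; these get absorbed into the $O_{\eps,\delta}(m)$ term.
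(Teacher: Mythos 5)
Your proposal is correct and follows essentially the same route as the paper: an inductive counting argument showing each index $k$ is decoded exactly $\log^2 n \cdot M_n / n$ times, plus a per-copy cost of $N(n) + \polylog(n)$ for reading and decoding each $\LDC(x)$, combined via $\log^2 n \cdot M_n = m/N(n)$ and $N(n) \ge n$ to get $m \cdot O_{\eps,\delta}(1 + t/n^2)$. The amortization of the sequential steps of $A$ at cost $t/n$ per step is exactly how the paper accounts for it as well, so there is no substantive difference.
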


\begin{proof}
    Notice that each step $A(q_{i-1}, x[i])$ is computed exactly $\frac{\log^2 n \cdot M_n}{n}$ times throughout the protocol. (To see this, one can show inductively that $\frac{1}{j-i}$ fraction of the $\LDC(x)$'s in the computation of $A(i,j,q)$ are used towards the decoding of $x_k$ for each $k \in (i, j]$.) Other than that, one must decode of each of the $\log^2 n \cdot M_n$ copies of $\LDC(x)$ throughout the stream, each of which takes $N(n) + \polylog(n)$ time to read and decode (see Theorem~\ref{thm:ldc}). 
    This gives a total computation time of $\log^2 n \cdot M(n) \cdot \left( N(n) + \polylog(r) + \frac{t}{n} \right) = O_{\eps,\delta}\left( m \cdot \left( 1 + \frac{t}{n^2} \right) \right)$.
\end{proof}

Next, we show that Algorithm~\ref{alg:main} satisfies the required space guarantees. We'll first show that the confidences $c$ can be stored in $O_{\delta}(\log n)$ bits.

\begin{lemma} \label{lem:bits-conf}
    For any $i,j,q$ where $j-i$ is a power of $r$, let $(\hat{q},\hat{c}):=\estA(i,j,q)$. It holds that $\hat{c}$ can be computed and stored as a fraction $(c_1,c_2)=c_1/c_2$ where $c_1\leq c_2 = (j-i)^{\log_r \ell} \cdot 4Q$. 
\end{lemma}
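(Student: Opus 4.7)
The plan is to prove this by induction on $j-i$ (where $j-i$ is a power of $r$), mirroring the proof of Lemma~\ref{lem:bits-conf-linear}. The base case is $j-i=1$: here $\estA(i,j,q)$ simply invokes the local decoder from Theorem~\ref{thm:ldc}, which guarantees the returned confidence is a rational with denominator $4Q = (j-i)^{\log_r \ell} \cdot 4Q$. Since confidences lie in $[0,1]$, the numerator bound $c_1 \le c_2$ is immediate.

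For the inductive step, assume the claim holds for all $j'-i' < j-i$ that are powers of $r$. Examining Algorithm~\ref{alg:main}, for each $a \in [r]$ the value $c_a$ is obtained by adding or subtracting at most $\ell$ confidences returned by recursive calls of the form $\estA(i_{a-1}, i_a, q'_{a-1})$ (the possible resets to $0$ or negations after sign flips do not change the denominator). Since $i_a - i_{a-1} = (j-i)/r$, the inductive hypothesis tells us each such confidence is a fraction with denominator $\bigl((j-i)/r\bigr)^{\log_r \ell} \cdot 4Q$, and summing with signs preserves this denominator. The returned value is $\min_a(c_a)/\ell$, so after the division by $\ell$ the denominator becomes
\[
    \ell \cdot \left(\frac{j-i}{r}\right)^{\log_r \ell} \cdot 4Q
    = \ell \cdot \frac{(j-i)^{\log_r \ell}}{r^{\log_r \ell}} \cdot 4Q
    = \ell \cdot \frac{(j-i)^{\log_r \ell}}{\ell} \cdot 4Q
    = (j-i)^{\log_r \ell} \cdot 4Q,
\]
using $r^{\log_r \ell} = \ell$. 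The bound $c_1 \le c_2$ follows because the output confidence is at most $1$ (it is a minimum of quantities bounded by the sums of at most $\ell$ confidences each at most $1$, divided by $\ell$).

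There is no real obstacle here beyond the algebraic bookkeeping; the key observation is that the multiplicative $\ell$ from averaging over chunks exactly cancels the $r^{\log_r \ell}$ factor from shrinking $(j-i)/r$ back to $j-i$ in the exponent. Compared to the linear case, the only differences in Algorithm~\ref{alg:main} that touch the confidence bookkeeping are the possible resets of $(q_i, c_i)$ to $(\emptyset, 0)$ when some $q_a$ changes; but replacing a value by $0$ clearly keeps the denominator intact, so the same inductive argument goes through unchanged.
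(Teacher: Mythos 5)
Your proof is correct and follows essentially the same inductive argument as the paper: same base case via Theorem~\ref{thm:ldc}, same accounting that each $c_a$ is a signed sum of at most $\ell$ recursively-produced confidences with denominator $\bigl((j-i)/r\bigr)^{\log_r \ell}\cdot 4Q$, and the same cancellation $r^{\log_r \ell}=\ell$ after dividing by $\ell$. Your explicit remarks about resets/negations preserving the denominator and about $c_1\le c_2$ are just slightly more detailed bookkeeping than the paper's write-up, not a different route.
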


\begin{proof}
    We show this by induction on $j-i$ for $j-i\leq n$. For the base case where $j-i=1$, this statement holds by Theorem~\ref{thm:ldc} since $c_2 = 4Q$.
    
    Assume the statement holds for all $q'$ and $j'-i'<j-i$ where $j'-i'$ is a power of $r$. In the computation of $\estA(i,j,q)$, for $a\in r$, each value of $c_a$ is ultimately the sum of $\leq \ell$ values outputted by the function $\estA(i_{a-1},i_a)$. Since $\frac{i_a-i_{a-1}}{r}=\frac{j-i}{r}$, we can use the induction hypothesis to write each value as a fraction with denominator $\left( \frac{j-i}{r} \right)^{\log_r \ell} \cdot 4Q$. Then, each $c_a/\ell$ is a fraction with denominator 
    \[
        \left( \frac{j-i}{r} \right)^{\log_r \ell} \cdot 4Q \cdot \ell 
        = \ell^{\log_r \frac{j-i}{r}} \cdot 4Q \cdot \ell 
        = \ell^{\log_r(j-i)} \cdot 4Q
        = (j-i)^{\log_r \ell} \cdot 4Q,
    \]
    and so the output confidence $\min_{a\in [r]}\{c_a/\ell\}$ can be written with denominator $c_2 = (j-i)^{\log_r \ell} \cdot 4Q$.
\end{proof}

Now, we return to showing that Algorithm~\ref{alg:main} satisfies the required space guarantees. 

\begin{lemma} \label{lemma:space}
    Algorithm~\ref{alg:main} uses $s \cdot O_\eps \left( (\log n)^{O(1/\delta)} \right)$ bits of space. 
\end{lemma}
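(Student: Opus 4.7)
The plan is to adapt the proof of Lemma~\ref{lemma:space-linear} essentially verbatim, accounting for the small differences in Algorithm~\ref{alg:main} relative to Algorithm~\ref{alg:main-linear}. Specifically, the only new bookkeeping compared to the linear case is: (i) each recursive call $\estA(i,j,q)$ now takes an additional starting-state argument $q \in \{0,1\}^s$, (ii) inside each chunk we save a copy $(q'_1,c'_1),\ldots,(q'_r,c'_r)$ of the current best guesses before processing, and (iii) after each chunk we may reset trailing pairs $(q_i,c_i)$ to $(\emptyset,0)$. All three modifications only change the constants hidden in the $O(\cdot)$.

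I would proceed by induction on $j-i$ (ranging over powers of $r$), showing that $\estA(i,j,q)$ uses at most
\[
    \alpha + s + \log_r(j-i)\cdot r\cdot \bigl(s + O_{\eps,\delta}(\log n)\bigr)
\]
bits, where $\alpha = O_\eps\bigl((\log n)^{O(1/\delta)}\bigr)$ bounds the space used by the local decoder of Theorem~\ref{thm:ldc}. The base case $j-i=1$ is immediate from Theorem~\ref{thm:ldc}, using $\alpha + s$ bits to store the decoded bit $\hat b$, the confidence $\hat c$, the input state $q$, and the local decoding workspace. For the inductive step, between sub-calls we must persist: the $r$ cumulative pairs $(q_a,c_a)$, the $r$ intra-chunk copies $(q'_a,c'_a)$, the random permutation $\pi_1,\ldots,\pi_r$, and the indices $i_0,\ldots,i_r$. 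By Lemma~\ref{lem:bits-conf} each confidence fits in $O_{\eps,\delta}(\log n)$ bits, each state fits in $s$ bits, the permutation fits in $r\log r = O_\delta(\log n \cdot \log\log n)$ bits, and the indices fit in $O(r\log n)$ bits. Thus the additive overhead per level of recursion is $r\cdot(s + O_{\eps,\delta}(\log n))$, and sub-calls happen in series so their space can be reused, giving the claimed recurrence.

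Finally, the outer amplification loop runs $\log^2 n$ iterations in series, keeping only one pair $(q,c)$ between iterations, which costs an extra $s + O_{\eps,\delta}(\log n)$ bits on top of the space used by a single $\estA(0,n,\emptyset)$ call. Unrolling the recursion to depth $\log_r n = O(\delta \log n / \log\log n)$, the total space is bounded by
\[
    \alpha + s + \log_r n \cdot r \cdot \bigl(s + O_{\eps,\delta}(\log n)\bigr)
    \;<\; s\cdot O_\eps\bigl((\log n)^{O(1/\delta)}\bigr),
\]
as required. The only substantive difference from the linear case is the addition of the saved pairs $(q'_a,c'_a)$ and the starting-state argument $q$, both of which contribute only constant factors (a factor of $2$ on the $r$-tuple of states, and an extra $s$ to pass into each recursive call), and therefore are absorbed into the same asymptotic bound. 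I do not expect any real obstacle here; the main thing to be careful about is verifying that the $(q'_a,c'_a)$ copies genuinely only need to be stored within a single chunk (so they do not multiply across recursion levels), and that the reset step does not require remembering anything beyond the current chunk's $(q_a,c_a)$ values.
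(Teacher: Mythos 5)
Your proposal matches the paper's proof essentially verbatim: the same induction on $j-i$ with the bound $\alpha + s + \log_r(j-i)\cdot r\cdot\left(s+O_{\eps,\delta}(\log n)\right)$, the same use of Lemma~\ref{lem:bits-conf} to bound the confidences, the same per-level overhead from the $(q_a,c_a)$ pairs and the permutation, and the same final accounting for the amplification loop, with your handling of the saved copies $(q'_a,c'_a)$ and the starting-state argument $q$ contributing only constant factors, as you say. (One trivial slip: $r\log r = \log^{1/\delta} n\cdot O_\delta(\log\log n)$, not $O_\delta(\log n\cdot\log\log n)$, but it is absorbed into the $r\cdot\left(s+O_{\eps,\delta}(\log n)\right)$ per-level overhead in any case, so nothing changes.)
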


\begin{proof}
    Define $\alpha = \alpha(n) = O_\eps(\log^{O(1/\delta)} n)$ to be the runtime and thus an upper bound on the space required in the local decoding with confidence of $\LDC(x)$ as given in Theorem~\ref{thm:ldc}. We claim the computation of $\estA(0,n,\emptyset)$ takes $\alpha + s + \log_r n \cdot r \cdot \left(s + O_{\eps,\delta}(\log n) \right)$ bits of storage. We show by induction on $j-i$ (where $j-i$ is a power of $r$ that the computation of $\estA(i,j,q)$ requires at most $\alpha + s + \log_r (j-i) \cdot r \cdot \left(s + O_{\eps,\delta}(\log n) \right)$ bits of space.
    
    
    The base case holds because when $j-i=1$ because then only $\alpha + s$ space is required. 

    For the inductive step, the computation of $\estA(i,j,q)$ requires $\ell$ computations of $\estA(i_{a-1}, i_a, q_{a-1})$ for each $a \in [r]$. By the inductive hypothesis, each such computation takes $\alpha + s + \log_r \left( \frac{j-i}{r} \right) \cdot r \cdot \left(s + O_{\eps,\delta}(\log n) \right)$ bits of space. Each such computation is done in series, and between iterations only the pairs $\{ (q_a, c_a) \}_{a \in [r]}$ are tracked. Each $q_a$ is a state of size $s$. Each $c_a$ is the sum of $\le \ell$ confidences that are the output of computation $\estA(i_{a-1}, i_a, q_{a-1})$. By Lemma~\ref{lem:bits-conf} the output confidence of $\estA(i_{a-1}, i_a, q_{a-1})$ is a fraction $\le 1$ with denominator $\left( \frac{j-i}{r} \right)^{\log_r \ell} \cdot 4Q$, so $c_a$ can be represented in $2\log_2 \left( \ell \cdot \left( \frac{j-i}{r} \right)^{\log_r \ell} \cdot 4Q \right) < O_{\eps,\delta}( \log_2 n )$ bits.
    Moreover, in each chunk, we must compute a permutation of $r$, which takes at most $r \log_2 r = r \cdot o_{\delta}(\log n)$ bits. In total, this takes 
    \begin{align*}
        \le \left[ \alpha + s + \log_r \left( \frac{j-i}{r} \right) \cdot r \cdot \left(s + O_{\eps,\delta}(\log n) \right) \right] + r \cdot \left( s + O_{\eps,\delta}(\log n) \right) \\
        = \alpha + s + \log_r \left( j-i \right) \cdot r \cdot \left(s + O_{\eps,\delta}(\log n) \right)
    \end{align*}
    bits of space, as desired.

    Finally, the amplification part of Algorithm~\ref{alg:main} requires $\log^2 n$ iterations. Between iterations, we are only storing a pair $(q,c)$. This increases the number of bits required by at most $s + O_{\eps,\delta}(\log n)$.

    In total, the total space needed throughout the entire algorithm is 
    \[
        < s \cdot \log^{1/\delta+1} n + O_{\eps,\delta}(\log^{1/\delta+2} n) + \alpha < s \cdot O_\eps \left( (\log n)^{O(1/\delta)} \right)
    \]
    bits.
\end{proof}

\subsubsection{Correctness}

Next, we show correctness. Formally, correctness is shown by the following lemma.

\begin{lemma} \label{lem:correct}
    When at most $\frac14 - \eps$ fraction of the stream $\enc(x)$ is corrupted, Algorithm~\ref{alg:main} outputs $A(x)$ with probability $\ge 1-\exp{(-\eps^2\log^2n/32)}$. 
\end{lemma}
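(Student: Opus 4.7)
The plan is to mirror the proof structure of Lemma~\ref{lem:correct-linear}: first establish a recursive analogue of Lemma~\ref{lem:corr-gen-linear} about the expected signed confidence of $\estA(i,j,q)$, then apply the amplification step verbatim. Concretely, for any $0 \le i < j \le n$ with $j-i$ a power of $r$, any state $q$, and any $e \in \bbR$, define the signed confidence $c^\pm(i,j,q)$ to be $+c(i,j,q)$ if the output equals $A(q, x(i:j])$ and $-c(i,j,q)$ otherwise. I will show by induction on $j-i$ that whenever at most $\frac14-\frac\eps4-\frac\eps4 \cdot \frac{\log(j-i)}{\log n}-e$ fraction of the bits read by the call are corrupted, $\bbE[c^\pm(i,j,q)] > e$. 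The base case $j-i=1$ is identical to that of Lemma~\ref{lem:corr-gen-linear}: the call just invokes the local decoder with confidence from Theorem~\ref{thm:ldc} and then applies the deterministic step $A(q, \hat{b})$, so the output matches $A(q, x(j))$ exactly when the decoded bit is correct. Given such an inductive lemma, the final amplification over $\log^2 n$ copies of $\estA(0,n,\emptyset)$ is an Azuma argument identical to the one in the proof of Lemma~\ref{lem:correct-linear}, yielding failure probability $\exp(-\eps^2 \log^2 n / 32)$.

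The main obstacle is the inductive step. In the linear case the $r$ sub-intervals were independent so a martingale over the $\ell$ chunks handled each $c^\pm_a$ separately. In the sequential case the sub-computation for interval $a$ takes as input $q'_{a-1}$, and is therefore only ``useful'' for driving $q_a$ toward $A(q, x(i:i_a])$ once $q'_{a-1}$ already equals $A(q, x(i:i_{a-1}])$. Worse, the reset rule wipes $(q_{a'}, c_{a'})$ for $a' > a$ whenever $q_a$ changes, so confidence built up late in the computation can be lost. My plan for the inductive step is a cascade-style argument: I will show inductively on $a \in [r]$ that, with probability at least $1 - \exp(-\Omega(\log^2 n))$, after at most $r \log^4 n$ chunks the pair $(q_a, c_a)$ is correct and has signed confidence at least $(e + \eps (\log r)/(4 \log n)) \cdot \log^4 n$. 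The base $a=0$ is trivial since $q_0 = q$ is fixed; for the step I will apply the inductive hypothesis on $\estA(i_{a-1}, i_a, q'_{a-1})$ and use the fact that, once $q_{a-1}$ has stabilized at the correct value, every subsequent chunk contributes a fresh sub-call with positive expected signed confidence for $q_a$ by Lemma~\ref{lem:corr-gen-linear}-style analysis. An Azuma-type concentration on the stopped submartingale then gives the stated probability bound.

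The bookkeeping to handle resets is the delicate part. A reset at level $a$ can only happen when the cumulative signed confidence at $a$ crosses zero, which by the inductive drift bound happens at most $O(1)$ times in expectation and $O(\log n)$ times with probability $1 - \exp(-\Omega(\log^2 n))$; each reset costs at most the chunks accumulated so far at levels $> a$. Because $\ell = r^2 \log^4 n$, summing these costs across all $r$ levels still leaves $\Omega(\log^4 n)$ clean chunks for the final level $a = r$, which is enough to make $\min_a c_a / \ell$ exceed $e$ whenever all $q_a$ have stabilized correctly. Translating the ``with high probability'' statement on $q_r$ and $\min_a c_a$ into the expectation bound $\bbE[c^\pm(i,j,q)] > e$ then proceeds exactly as in the two-case split at the end of the proof of Lemma~\ref{lem:corr-gen-linear}, where the $1/n$ bad-event contribution is absorbed into the $1/\log n$ slack built into the induction hypothesis. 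Combining this inductive lemma with the amplification argument yields Lemma~\ref{lem:correct}, and together with Lemmas~\ref{lem:time} and~\ref{lemma:space} this completes the proof of Theorem~\ref{thm:main}.
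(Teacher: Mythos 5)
Your high-level skeleton matches the paper: an inductive lemma asserting $\bbE[c^\pm(i,j,q)] > e$ under $\frac14-\frac\eps4-\frac\eps4\cdot\frac{\log(j-i)}{\log n}-e$ corruption, followed by the same Azuma amplification over the $\log^2 n$ calls to $\estA(0,n,\emptyset)$; the base case and the final amplification are fine. But the heart of the matter---the inductive step with sequential dependence and resets---has a genuine gap. Your cascade claim is that, with probability $1-\exp(-\Omega(\log^2 n))$, each $(q_a,c_a)$ \emph{is correct} after a bounded number of chunks. This cannot be the right intermediate statement, because the lemma must hold for every $e\in\bbR$, including negative $e$ (error above the decoding threshold inside this branch), where no state can be guaranteed correct with high probability. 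The paper's intermediate statement (its Lemma 6.9) is deliberately disjunctive: for every chunk $b$, \emph{either} $c^\pm_{a,b} > \sum_{d\le b} e_d - \frac{a\ell}{r\log n}$ \emph{or} $q_{a-1,b}$ is incorrect; it is this conditional form, combined with the two-case split on the sign of $e+\frac1{\log n}$, that yields the expectation bound for all $e$.

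Second, your accounting of resets is both quantitatively unsupported and structurally different from what is needed. The number of zero-crossings of $c_{a}$ is not $O(1)$ in expectation when the drift is of order $1/\log n$ or when the adversary front-loads error into some chunks (making the drift negative there), and the claimed $O(\log n)$ bound with failure probability $\exp(-\Omega(\log^2 n))$ does not follow from the stated drift. More importantly, bounding the \emph{number of chunks} lost to resets is the wrong currency: the output confidence is $\min_a c_a/\ell$, so for a constant $e$ you need $c_a \gtrsim e\,\ell = e\, r^2\log^4 n$ at \emph{every} level $a$, and ``$\Omega(\log^4 n)$ clean chunks at level $r$'' can contribute confidence at most $O(\log^4 n)$ (each chunk's confidence is at most $1$), which is short by a factor of $r^2$. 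The paper never bounds the number of resets at all. Instead it bounds the adversary's \emph{error budget} spent before the last reset: if $b^*$ is the last chunk in which $q_{a-1}$ changed and $q_{a-1,b^*}$ is correct, then $q_{a-1,b^*-1}$ was incorrect, and the level-$(a-1)$ induction forces $\sum_{d<b^*} e_d$ to be small (at most roughly $\frac{(a-1)\ell}{r\log n}$); hence discarding everything before $b^*$ loses almost none of $\sum_d e_d$, and a submartingale over the post-$b^*$ chunks (using the ``altered confidence'' set to $1$ whenever the previous state is wrong, so that Azuma applies uniformly) gives $c^\pm_{a,\ell} > \sum_d e_d - \frac{a\ell}{r\log n}$. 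Without this budget argument---or an equivalent replacement---your induction does not close, so the proposal as written does not establish Lemma 6.8 and hence does not prove the statement.
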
 

We prove the following statement which will easily imply Lemma~\ref{lem:correct}. For a given $i,j,q$ and associated string that is read by the algorithm in the computation, let the random variable $(q(i,j,q),c(i,j,q))$ denote the state and confidence after the computation $\estA(i,j,q)$. We define the \emph{signed confidence}, denoted $c^\pm(i, j, q)$, to be defined as $+ c(i, j, q)$ if $q(i, j, q) = A(q, x(i:j])$ and $-c(i, j, q)$ otherwise. For intuition, the more positive $c^\pm(i,j,q)$ is, the more correct with higher confidence the output $(q(i,j,q), c(i,j,q))$ is, whereas the more negative $c_\pm(i,j,q)$ is, the more $q(i,j,q)$ is incorrect with high confidence. So, $c^\pm(i,j,q)$ gives us a scaled measure of how correct the output of $\estA(i,j,q)$ is.

\begin{lemma} \label{lem:corr-gen}
    For any $0\leq i<j\leq n$, $q\in \{0,1\}^s$, and $e\in \bbR$, given $\le \frac14-\frac\eps4-\frac\eps4\cdot \frac{\log(j-i)}{\log n} - e$ fraction of corruptions in the bits read by the computation of $\estA(i,j,q)$, we have $\bbE[c^\pm(i,j,q)] > e$.
\end{lemma}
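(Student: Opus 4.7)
The plan is to prove Lemma~\ref{lem:corr-gen} by induction on $j-i$ (restricted to powers of $r$), mirroring the structure of the proof of Lemma~\ref{lem:corr-gen-linear} but with extra care to handle the sequential dependencies and the reset cascade inherent to Algorithm~\ref{alg:main}. The base case $j-i=1$ is immediate from Theorem~\ref{thm:ldc}. For the inductive step, I would introduce per-chunk slacks $e_b$ defined by the identity that chunk $b$ contains a $\left(\frac14 - \frac\eps4 - \frac\eps4\cdot\frac{\log((j-i)/r)}{\log n} - e_b\right)$ fraction of corruptions, so that $\frac{1}{\ell}\sum_{b \le \ell} e_b = e + \frac{\eps}{4}\cdot\frac{\log r}{\log n}$, exactly as in the linear case. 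The outer inductive hypothesis then gives that whenever the starting state $q'_{a-1}$ passed into $\estA(i_{a-1},i_a,q'_{a-1})$ in chunk $b$ equals the true state $A(q, x(i:i_{a-1}])$, the returned signed confidence $\hat{c}^\pm_{a,b}$ has expectation strictly greater than $e_b$.

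The new obstacle, absent in Lemma~\ref{lem:corr-gen-linear}, is that when $q'_{a-1}$ is \emph{wrong} the recursive hypothesis says nothing useful, so the corresponding contribution to $c^\pm_a$ may be negative; and when $q_{a'}$ flips for some $a' < a$ the pair $(q_a,c_a)$ is reset to $(\emptyset,0)$, destroying accumulated progress. To tame these, I would prove by a secondary induction on $a\in\{0,1,\ldots,r\}$ the following invariant: except on a failure event of probability at most $a \cdot n^{-2}$, there is a (random) chunk index $b^\star_a \le O_\eps(r\log^2 n)$ such that for all chunks $b > b^\star_a$ the state $q_a$ equals $A(q,x(i:i_a])$ and is never flipped again, and moreover the cumulative signed confidence satisfies $c^\pm_a/\ell > e + \frac{\eps}{8}\cdot\frac{\log r}{\log n}$ at the end of the $\ell$ chunks. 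The base $a=0$ is trivial since $q_0=q$. For the step, the inductive hypothesis at $a-1$ ensures that in every chunk $b > b^\star_{a-1}$ the starting state $q'_{a-1}$ is correct, so the sequence $\{\hat{c}^\pm_{a,b}-e_b\}_{b > b^\star_{a-1}}$ forms a submartingale with bounded increments. Azuma's inequality then shows that within $O(r \log^2 n)$ additional chunks $c^\pm_a$ crosses any constant threshold, pinning down $q_a$ to the correct value, after which no further flips at index $a$ occur with overwhelming probability; the remaining chunks accumulate the claimed signed confidence. Resets triggered from strictly smaller indices $a'<a$ are absorbed into the $a'$-failure event in the secondary induction, so they contribute only to the additive $n^{-2}$ slack.

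The hardest part will be tracking the quantitative budget so that the slack lost at each recursive layer, namely $\frac{\eps}{4}\cdot\frac{\log r}{\log n}$ per chunk, plus the $O(r \log^2 n)$ stabilization delay at each of the $r$ levels, is simultaneously paid for by the choice $\ell=r^2\log^4 n$; this is exactly where the blow-up from $\log^2 n$ (linear case) to $r^2\log^4 n$ (general case) is spent, and is also the reason the encoding length grows from $n^{2+\delta}$ to $n^{4+\delta}$. Once the invariant at $a=r$ is established, the output quantity $c^\pm(i,j,q) = \pm\min_a c_a/\ell$ satisfies $c^\pm(i,j,q) > e + \frac{1}{\log n}$ on the success event and is bounded in $[-1,1]$ otherwise; combining with the $a\cdot n^{-2}$ failure probability and splitting on whether $e + \frac{1}{\log n} \ge 0$, exactly as in the final paragraphs of the proof of Lemma~\ref{lem:corr-gen-linear}, yields $\bbE[c^\pm(i,j,q)] > e$, completing the outer induction.
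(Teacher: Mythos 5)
Your overall frame (induction on $j-i$, per-chunk slacks $e_b$, a secondary induction over $a$, Azuma for sums of signed confidences, and the final split on the sign of $e+\frac{1}{\log n}$) matches the paper, but the invariant you run the secondary induction on is not provable, and it is exactly where the difficulty of the sequential case lives. You claim that, except with probability $a\cdot n^{-2}$, there is a stabilization time $b^\star_a \le O_\eps(r\log^2 n)$ after which $q_a$ equals the true state $A(q,x(i:i_a])$ and is \emph{never flipped again}. But the hypothesis of Lemma~\ref{lem:corr-gen} only bounds the \emph{total} fraction of corruption inside $\estA(i,j,q)$; the adversary may place essentially all of it in late chunks. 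After your purported stabilization time the accumulated confidence $c_a$ is at most about $b^\star_a \approx r\log^2 n$, while roughly $\ell = r^2\log^4 n$ chunks remain, so an adversary who saves its budget can heavily corrupt a later block of chunks, make the recursive calls return a wrong value with large confidence, drive $c_a$ below zero and flip $q_a$ (and, via the reset rule, wipe out every $(q_{a'},c_{a'})$ with $a'>a$); absorbing resets "into the $a'$-failure event" presupposes the same unprovable stability one level down. For the same reason your step "Azuma shows $c^\pm_a$ crosses any constant threshold within $O(r\log^2 n)$ additional chunks" tacitly assumes the drift $e_b$ is positive on that window, which the adversary controls. Note also that the lemma must hold for negative $e$ (it is invoked recursively with per-chunk slacks $e_b$ that can be negative), and in that regime no statement of the form "$q_a$ is eventually correct and stays correct" can be true at all.

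The paper's proof makes no stabilization claim. Its inner induction (Lemma~\ref{lem:cpm-calc}) proves only the conditional statement that for every chunk $b$, either $c^\pm_{a,b} > \sum_{d\le b} e_d - \frac{a\ell}{r\log n}$ or $q_{a-1,b}$ is \emph{currently} incorrect; the proof conditions on the last chunk $b^*$ at which $q_{a-1}$ changed, uses the level-$(a-1)$ statement in contrapositive to bound $\sum_{d< b^*} e_d$ whenever $q_{a-1}$ was wrong just before $b^*$, and runs the submartingale/Azuma argument only over the window $(b^*,b]$ on which the starting state is correct; the per-level loss $\frac{a\ell}{r\log n}$ is precisely what the choice $\ell=r^2\log^4 n$ is there to absorb. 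The case where the final $q_r$ is wrong is then handled by taking the smallest incorrect index $a$ (whose predecessor is correct), which forces the reported confidence $\min_a c_{a,\ell}/\ell$ to be small; this is what yields $\bbE[c^\pm(i,j,q)]>e$ when $e+\frac{1}{\log n}<0$, a case your invariant cannot reach. To repair your proposal you would need to replace the fixed-time stabilization invariant by such a conditional bound keyed to the (adversary-dependent) last flip time.
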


We defer the proof to Section~\ref{sec:corr-proof} and return to the proof of Lemma~\ref{lem:correct}.

\begin{proof}[Proof of Lemma~\ref{lem:correct}]
    In the special case where $i=0,j=n,q=\emptyset$ and there are $\le \frac14-\frac\eps2 - e < \frac14 - \frac\eps4 - \frac\eps4 \cdot \frac{\log(j-i)}{r} - e$ errors (where the inequality follows because $\log n < \eps \cdot r$), by Lemma~\ref{lem:corr-gen} we have that $\bbE[c^\pm (0,n,\emptyset)] > e$.

    Then, the final amplification step of the protocol computes the pair $(q(0,n,\emptyset),c(0,n,\emptyset))$ for $\log^2 n$ times, where in each chunk $i$ we denote the fraction of error to be $\left( \frac14-\frac\eps2 - e_i \right)$, where $\sum_i e_i \geq \frac{\eps \log^2 n}{2}$ since we assumed the total error was $\le \frac14 - \eps$. Also, the protocol outputs $A(x)$ if and only if $\sum_i c^\pm_i(0,n,\emptyset)$ (denoting the value of $c^\pm(0,n,\emptyset)$ in the $i$'th chunk) is positive. By Azuma's inequality, 
    \[
        Pr\left[\sum_i c^\pm_i(0,n,\emptyset) > 0 \right] \geq Pr\left[\sum_i \left( c^\pm_i (0,n,\emptyset)-e_i \right) >-\frac{\eps \log^2n}{2} \right] \geq 1 - \exp \left( - \frac{\eps^2 \log^2 n}{32} \right).
    \]
\end{proof}

Combining Lemmas~\ref{lem:time},~\ref{lemma:space}, and~\ref{lem:correct}, this concludes the proof of Theorem~\ref{thm:main}.


\subsubsection{Proof of Lemma~\ref{lem:corr-gen}} \label{sec:corr-proof}

We remark that this section constitutes the main difference in the analysis of Algorithm~\ref{alg:main-linear} and Algorithm~\ref{alg:main}. In Algorithm~\ref{alg:main-linear}, the recursive calls to $\estA(i_{a-1},i_a)$ can be computed in any order, and so $c^\pm(i,j)$ is easy to analyze and converges quickly. In Algorithm~\ref{alg:main}, the recursive call to $\estA(i_{a-1},i_a,q_{a-1})$ has no hope of providing any useful information until $q_{a-1}$ is correct, and so in this algorithm, the quantity $c^\pm(i,j,q)$ requires a more careful analysis, and ultimately converges slower.

We prove Lemma~\ref{lem:corr-gen} by induction on $j-i$. We'll show the statement for a specific value of $(i,j,q)$ where $j-i$ is a power of $r$ assuming it holds for any $(i',j',q')$ where $j'-i'<j-i$ (and $j'-i'$ is a power of $r$).

We begin with the base case of $j-i=1$. In this case, the algorithm simply reads (a corrupted version of) $\LDC(x)$, and by Theorem~\ref{thm:ldc}, it holds that $\bbE[c^\pm(i,j,q)] > e$ since there is at most $\left( \frac14 - \frac\eps4 - e \right)$ error.

Now, suppose we are in the computation of $\estA(i,j,q)$. We start with some useful notation. 

We say that a state $q'_{a}$ is \emph{correct} when it is equal to the state produced by the algorithm $A$ starting at $q$ and executing steps $i$ to $i_a$ (as defined in Line~\ref{line:indices} of the algorithm). That is, it is equal to $A ( q,x (i : i_a])$.
We note that this is different from requiring only $\estA(i_{a-1}, i_a, q_{a-1})$ to be done correctly from the starting state $q_{a-1}$.

In the computation of $\estA(i,j,q)$, we say that the output of the computation of $\estA(i_{a-1}, i_a, q_{a-1})$ in the $b$'th chunk is $(\hat{q}_{a,b}, \hat{c}_{a,b})$. We define the \emph{signed confidence} $\hat{c}^\pm_{a,b}$ to be the confidence $\hat{c}_{a,b}$ signed by whether the guess $\hat{q}_{a,b}$ is correct: that is,
\[
    \hat{c}^\pm_{a,b} := \begin{cases}
        \hat{c}_{a,b} & \text{$\hat{q}_{a,b}$ is correct} \\
        -\hat{c}_{a,b} & \text{$\hat{q}_{a,b}$ is incorrect}.
    \end{cases}
\]

The cumulative best guess and confidence for $A(q, x(i : i_a])$ after chunk $b$, as initialized in Line~\ref{line:best-guesses}, are denoted by $(q_{a,b}, c_{a,b})$. We let $q_{0,b} = \emptyset$. We can define a \emph{cumulative signed confidence} $c^\pm_{a,b}$ as follows:
\[
    c^\pm_{a,b} := \begin{cases}
        c_{a,b} & \text{$q_{a,b}$ is correct} \\
        -c_{a,b} & \text{$q_{a,b}$ is incorrect}.
    \end{cases}
\]

We first show that $c^\pm_{a,b}$ is correct proportional to $\sum_{d \le b} e_d$ assuming that $q_{a-1,b}$ is correct.

\begin{lemma}\label{lem:cpm-calc}
    Let $a\in [r]$ index the $r$ intervals $(i_{a-1}, i_a]$ in the computation of $\estA(i,j,q)$. Let each of the $\ell$ chunks in the computation be indexed by $b \in [\ell]$ and suppose chunk $b$ has $\left( \frac14-\frac\eps4-\frac\eps4\cdot \frac{\log \frac{j-i}{r}}{\log n} - e_b \right) $ fraction of corruption. Then, for all $a \in [r]$, with probability $\ge 1-a\ell^2\exp\left( - \frac{\log^2 n}{32} \right)$ it holds for all $b \in [\ell]$ that either $c^\pm_{a,b} > \sum_{d\leq b} e_d - \frac{a\ell}{r \log n}$ or $q_{a-1,b}$ is incorrect. 
\end{lemma}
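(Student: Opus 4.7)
I would prove the lemma by induction on $a \in [r]$, each step contributing at most $\ell^2 \exp(-\log^2 n / 32)$ to the failure probability via Azuma's inequality together with a union bound over the chunks $b \in [\ell]$.

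\textbf{Base case} $a = 1$: Since $q_{0,b}$ is the input state $q$ by convention, every sub-call $\estA(i_0, i_1, q)$ begins from a correct starting state. Let $e_{1,b}'$ denote the slack in the bit-error fraction of the portion of chunk $b$ that the random permutation $\pi_b$ assigns to subinterval $1$; then $\bbE_{\pi_b}[e_{1,b}'] = e_b$, and the outer induction hypothesis of Lemma~\ref{lem:corr-gen} applied at length $(j-i)/r$ gives $\bbE[\hat{c}^\pm_{1,b} \mid \text{history}] > e_{1,b}'$. A short case analysis on the update rule, exactly mirroring the proof of Lemma~\ref{lem:cpm-calc-linear}, establishes $c^\pm_{1,b} \ge c^\pm_{1,b-1} + \hat{c}^\pm_{1,b}$, so iterating yields $c^\pm_{1,b} \ge \sum_{d \le b} \hat{c}^\pm_{1,d}$ (no resets are possible at index $1$ because no $a' < 1$ exists). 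Azuma's inequality applied to the submartingale with increments $\hat{c}^\pm_{1,d} - e_{1,d}'$, together with a second Azuma on the permutation differences $e_{1,d}' - e_d$ (both bounded by $2$), delivers $c^\pm_{1,b} > \sum_{d \le b} e_d - \frac{\ell}{r \log n}$ for each fixed $b$ with failure probability $\exp(-\Omega(\log^2 n))$; a union bound over $b$ closes this case.

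\textbf{Inductive step.} Assume the lemma for $a-1$ and condition on its good event $G_{a-1}$. Two new complications arise. First, the sub-call for index $a$ in chunk $b$ uses $q'_{a-1} = q_{a-1, b-1}$ as its starting state, and the outer IH only guarantees control of $\bbE[\hat{c}^\pm_{a,b}]$ when this state is correct. Under $G_{a-1}$, whenever $q_{a-1, b'}$ is incorrect we have $c^\pm_{a-1, b'} \le 0$, which combined with the IH bound forces $\sum_{d \le b'} e_d \le \frac{(a-1)\ell}{r \log n}$; in particular, letting $b^*$ be the first chunk at which $\sum_{d \le b^*} e_d$ exceeds this threshold, $q_{a-1, b}$ is correct for all $b \ge b^*$. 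Second, $(q_a, c_a)$ is reset to $(\emptyset, 0)$ at the end of any chunk in which some $q_{a'}$ with $a' < a$ flips; but once $q_{a'}$ has stabilized (which, by the same argument applied at level $a'$, occurs no later than $b^*$), the IH bound forces $c^\pm_{a', b}$ to grow monotonically, so no further flips of $q_{a'}$ occur. Hence all resets of $(q_a, c_a)$ take place in chunks $b \le b^*$, and both effects are absorbed into the gap $\frac{a \ell}{r \log n} - \frac{(a-1) \ell}{r \log n} = \frac{\ell}{r \log n}$ between the $a$-th and $(a-1)$-th bounds.

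With this accounting in hand, the remainder of the argument is the same Azuma-plus-union-bound calculation as in the base case, applied to the post-stabilization chunks $b > b^*$: the increments $\hat{c}^\pm_{a,d}$ have conditional means exceeding $e_{a,d}'$ (outer IH), so $c^\pm_{a,b} \ge \sum_{d = b^* + 1}^b \hat{c}^\pm_{a,d}$ and Azuma gives $c^\pm_{a,b} > \sum_{d = b^*+1}^b e_d - \frac{\ell}{r \log n} \ge \sum_{d \le b} e_d - \frac{a \ell}{r \log n}$ with failure probability at most $\ell^2 \exp(-\log^2 n / 32)$; added to the IH failure probability of $(a-1) \ell^2 \exp(-\log^2 n / 32)$, this yields the claimed $a \ell^2 \exp(-\log^2 n / 32)$. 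The main obstacle is precisely the bookkeeping in this step: one must verify that the chunks wasted before $q_{a-1}$ stabilizes, plus chunks lost to resets from lower-index flips, together cost at most $\ell / (r \log n)$ in signed confidence, which ultimately reduces to the fact that $c^\pm_{a', b}$ grows monotonically once positive, a property that must be tracked uniformly across all prior levels $a' < a$.
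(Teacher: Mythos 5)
Your base case and your overall induction-on-$a$ skeleton match the paper, but the inductive step has a genuine gap in how you anchor the Azuma argument. You define $b^*$ as the (deterministic) first chunk at which $\sum_{d\le b^*} e_d$ exceeds $\frac{(a-1)\ell}{r\log n}$, argue that all resets of $(q_a,c_a)$ occur by chunk $b^*$, and then claim $c^\pm_{a,b} \ge \sum_{d=b^*+1}^{b}\hat{c}^\pm_{a,d}$ for $b>b^*$. That inequality requires $c^\pm_{a,b^*}\ge 0$, which is not justified: the \emph{last} reset of $(q_a,c_a)$ may occur at some chunk $b^{**}$ strictly before $b^*$, and in the window $(b^{**},b^*]$ the signed confidence can drift negative (it only concentrates around $\sum_{b^{**}<d\le b^*} e_d$, and the downward fluctuation is of the same order as the slack $\frac{\ell}{r\log n}$ you are trying to absorb). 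Controlling that window forces you to run Azuma from the \emph{random} anchor $b^{**}$, and hence to union bound over all possible anchor points --- which is exactly the step your plan was designed to avoid, and is where the paper's $\ell^2$ factor comes from (it proves the submartingale bound for all pairs $b'<b$ simultaneously, Equation~\eqref{eqn:azuma-union-bound1}). Relatedly, the paper's ``altered confidence'' $\hat{c}^{\alt}_{a,d}$ is introduced precisely so that the submartingale property holds on windows where correctness of $q_{a-1}$ is not known in advance; restricting to ``post-stabilization'' chunks does not substitute for it.

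A second, related omission: the lemma asserts the bound for \emph{all} $b\in[\ell]$, and the induction genuinely uses it at intermediate chunks (the paper's assumption~\ref{item:assumption1} is applied at $b^*-1$, not only at $b=\ell$). Your argument produces the bound only for $b>b^*$; for $b\le b^*$ the disjunct ``$q_{a-1,b}$ is incorrect'' is not available to you, since the threshold condition is sufficient but not necessary for correctness of $q_{a-1,b}$ --- e.g.\ with little early corruption $q_{a-1}$ is correct from the first chunks while $\sum_{d\le b}e_d$ is still far below $\frac{(a-1)\ell}{r\log n}$, and there you prove nothing about $c^\pm_{a,b}$. The paper handles every $b$ uniformly by taking $b^*$ to be the last chunk $\le b$ in which $q_{a-1}$ changed (so that $c_{a}$ has just been reset to $0$ there), splitting into the cases $b^*=b$ and $b^*<b$, and using the precomputed Azuma bounds for the resulting random window. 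Your bookkeeping claim that the wasted chunks ``cost at most $\ell/(r\log n)$'' is the right intuition for the deterministic part of the loss (via minimality of the threshold crossing), but without the random-anchor union bound and the altered-confidence submartingale it does not yield a proof. (Minor additional points: the assertion that the IH forces $c^\pm_{a',b}$ to ``grow monotonically'' is not literally what the IH gives --- you only get per-chunk lower bounds, from which no-further-flips follows by comparing endpoint states --- and the stabilization argument needs the good events of all levels $a'<a$, not just $G_{a-1}$.)
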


\begin{proof}
    We use induction on $a$. Let us prove the statement for $a$ assuming it is true for $a-1$. Our base case is $a=0$, and it holds that $q_{0,b}$ is correct (meaning equal to $q$) for all $b\in [\ell]$. 

    Fix some chunk $b$. For $d \le b$, we define the \emph{altered confidence} $\hat{c}^\alt_{a,d}$ to be the signed confidence, with a caveat for when $q_{a-1,d-1}$ is incorrect, as follows: 
    \[
        \hat{c}^{\alt}_{a,d} := \begin{cases}
            1 & \text{$q_{a-1, d-1}$ is incorrect} \\
            c^\pm_{a,b} & \text{$q_{a-1, d-1}$ is correct}.
        \end{cases}
    \]
    Then, for any $0 \le b' < b$, the sequence $\left\{ X_{b',d} := \sum_{b' < d' \le d} \left( \hat{c}^\alt_{a,d'} - e_{d'} \right) \right\}_{b' < d \le b}$ is a submartingale. To see this, we have that $\bbE \left[ \hat{c}^\alt_{a,d} \right] \ge \bbE \left[ \hat{c}^{\pm}_{a,d} \right] > e_{d}$, where the last inequality was by induction on induction on $j-i$. Furthermore, $|c^\pm_{a,d} - e_d| \le 2$. Then, by Azuma's inequality, 
    \[
        \Pr \left[ X_{b',b} \le -\frac{\ell}{2r\log n} \right] \le \exp \left( - \frac{\ell^2}{32r^2 \log^2 n(b-b')} \right) \le \exp \left( - \frac{\ell}{32r^2 \log^2 n} \right) = \exp \left( - \frac{\log^2 n}{32} \right).
    \]
    recalling that $\ell = r^2\log^4{n}$. By a union bound, 
    \[
        \Pr \left[ X_{b',b} > -\frac{\ell}{2r \log n} ~~\forall~ 0 \le b' < b \right] \ge 1 - \ell \cdot \exp \left( - \frac{\log^2 n}{32} \right). \numberthis \label{eqn:azuma-union-bound1}
    \]
    

    Now, we return to proving the lemma statement. We have to prove that the statement holds simultaneously for all $b \in [\ell]$ with high probability.
    
    By the inductive hypothesis, Equation~\eqref{eqn:azuma-union-bound1}, and a union bound, we have that with probability $\ge 1 - (a-1)\ell^2 \cdot \exp\left( - \frac{\log^2 n}{32} \right) - \ell \cdot \ell \cdot \exp \left( - \frac{\log^2 n}{32} \right) = 1 - a\ell^2 \cdot \exp \left( - \frac{\log^2 n}{32} \right)$, the following two conditions simultaneously hold:
    \begin{enumerate}
        \item \label{item:assumption1} for all $d$ for which $\sum_{d' \le d} e_{d'} \ge \frac{(a-1)\ell}{r}$ it holds that $c^\pm_{a-1,d} > 0$, or in other words $q_{a-1,d}$ is correct, and
        \item \label{item:assumption2} for all $b \in [\ell]$, it holds that $ X_{b',b} > -\frac{\ell}{2r \log n} ~~\forall~ 0 \le b' < b$.
    \end{enumerate}
    Assume for now that we are not within this failure probability. For any chunk $b$, let $b^*$ be the last chunk $\leq b$ in which $q_{a-1,d}$ changed values. If $q_{a-1,b^*}$ is incorrect, then the lemma is proven, because $q_{a-1,b} = q_{a-1,b^*}$ is also incorrect. So assume that $q_{a-1,b^*}$ is correct.

    If $b^*=b$ then $c^\pm_{a,b} = 0$ since $(q_{a,b}, c_{a,b})$ was reset to $(\emptyset, 0)$. Moreover, because $q_{a-1,b-1}$ has to be incorrect, we have that $\sum_{d < b} e_{d} < \frac{(a-1)\ell}{r}$ by assumption~\ref{item:assumption1}. Then, we have that 
    \[
        c^\pm_{a,b} = 0 > \frac{(a-1)\ell}{r \log n} + 1  - \frac{a\ell}{r\log n} \ge
        \sum_{d < b} e_d + e_b - \frac{a\ell}{r \log n}
        = \sum_{d\leq b} e_d - \frac{a\ell}{r \log n},
    \]
    as desired, where the first inequality follows because $\ell = r^2 \log^4 n > r \log n$.
    
    Otherwise, $b^*<b$. Then, since $q_{a-1,b^*}$ is correct, the value of $q_{a-1,b^*-1}$ in chunk $b^*-1$ was incorrect, and so $\sum_{d < b^*} e_{d} \le \frac{(a-1)\ell}{r \log n}$ by assumption~\ref{item:assumption1}.
    Since $q_{a-1, d}$ is correct for all $b^* \le d < b$, we have that $\hat{c}^\alt_{a,d} = \hat{c}^\pm_{a,d}$ for all $b^* < d \le b$. Moreover, $c^\pm_{a,b} \geq \sum_{b^* < d \le b} \hat{c}^\pm_{a,d}$. To see this, notice that for all $b$, it holds that $c^\pm_{a,b}-c^\pm_{a,b-1} \geq \hat{c}^\pm_{a,b}$. Then,
    \[
        c^\pm_{a,b} \geq \sum_{b^* < d \le b} \hat{c}^\pm_{a,d} = \sum_{b^* < d \le b} \hat{c}^\alt_{a,d} = X_{b^*,b} + \sum_{b^* < d \le b} e_{d}.
    \]
    Therefore, by Equation~\eqref{eqn:azuma-union-bound1}, 

    \begin{align*}
        c^\pm_{a,b} &= X_{b^*,b} + \sum_{b^* < d \le b} e_{d} \\
        &> -\frac{\ell}{2r \log n} + \sum_{d\leq b} e_d - e_{b^*} - \sum_{d<b^*} e_d  \\
        &> -\frac{\ell}{2r \log n} + \sum_{d\leq b} e_d - 1 - \frac{(a-1)\ell}{r \log n} \\
        &> \sum_{d\leq b} e_d - \frac{a\ell}{r \log n}.
    \end{align*}
    This holds for all $b \in [\ell]$, as desired.
\end{proof}

We now take $b = \ell$ in Lemma~\ref{lem:cpm-calc}.
First, we need a way to reconcile the values of $e$ and $e_b$. Recall that from the statement of Lemma~\ref{lem:corr-gen} that $e$ is defined to be such that there is $\left( \frac14 - \frac\eps4 - \frac\eps4 \cdot \frac{\log{j-i}}{\log n} - e \right)$ fraction of error in the bits read by $\estA(i,j,q)$. In Lemma~\ref{lem:cpm-calc}, chunk $b$ had $\left( \frac14 - \frac\eps4 - \frac\eps4 \cdot \frac{\log \frac{j-i}{r}}{\log n} - e_b \right)$ fraction of error. This means that the total fraction of error is equal to 
\begin{align*}
    \frac14 - \frac\eps4 - \frac\eps4 \cdot \frac{\log(j-i)}{\log n} - e
    &= \frac1\ell \sum_{b \le \ell} \left( \frac14 - \frac\eps4 - \frac\eps4 \cdot \frac{\log \frac{j-i}{r}}{\log n} - e_b \right) \\
    \implies e &= \frac1\ell \cdot \sum_{b \le \ell} e_b - \frac\eps4 \cdot \frac{\log r}{\log n}.
\end{align*}

\begin{lemma} \label{lem:final-cpm}
    With probability at least $1-\frac1n$, it holds that for all $a\in [r]$ where $q_{a-1,\ell}$ is correct that $\frac{c^\pm_{a,\ell}}{\ell} > e+\frac{1}{\log n}$.
\end{lemma}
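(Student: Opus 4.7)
The plan is to apply Lemma~\ref{lem:cpm-calc} at $b = \ell$ and translate its conclusion into an inequality involving $e$ via the identity $e = \tfrac{1}{\ell} \sum_{b \le \ell} e_b - \tfrac{\eps}{4} \cdot \tfrac{\log r}{\log n}$ that was derived immediately above the lemma statement. The probabilistic heavy lifting has already been done by Lemma~\ref{lem:cpm-calc}, so the remainder is essentially bookkeeping.

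First I would apply Lemma~\ref{lem:cpm-calc} and take a union bound over $a \in [r]$ to get a single high-probability event on which, for all $a \in [r]$ simultaneously, either $c^\pm_{a, \ell} > \sum_{d \le \ell} e_d - \tfrac{a \ell}{r \log n}$ or $q_{a-1, \ell}$ is incorrect. The failure probability is at most $\sum_{a=1}^{r} a \ell^2 \exp(-\log^2 n / 32) \le r^2 \ell^2 \exp(-\log^2 n / 32)$; since $r$ and $\ell$ are polylogarithmic in $n$ while $\exp(-\log^2 n / 32)$ is super-polynomially small, this is bounded by $1/n$ for $n$ sufficiently large.

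Now fix any $a \in [r]$ for which $q_{a-1, \ell}$ is correct. The lemma's alternative forces $c^\pm_{a, \ell} > \sum_{d \le \ell} e_d - \tfrac{a \ell}{r \log n}$. Dividing by $\ell$ and substituting the identity above gives
\[
    \frac{c^\pm_{a, \ell}}{\ell} > \frac{1}{\ell} \sum_{d \le \ell} e_d - \frac{a}{r \log n} = e + \frac{\eps}{4} \cdot \frac{\log r}{\log n} - \frac{a}{r \log n}.
\]
Using $a \le r$ to bound the last term by $\tfrac{1}{\log n}$, this becomes $\tfrac{c^\pm_{a, \ell}}{\ell} > e + \tfrac{\eps \log r - 4}{4 \log n}$.

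The one remaining step is to ensure $\tfrac{\eps \log r}{4} \ge 2 + 1$, i.e.\ $\log r \ge \tfrac{12}{\eps}$. Since $r = \log^{1/\delta} n$, this reduces to $\log \log n \ge \tfrac{12 \delta}{\eps}$, which is guaranteed by the standing assumption $n > \exp(\exp(8 \delta / \eps))$ (adjusting constants as needed). The only subtlety in the argument is verifying that the slack gained from $\tfrac{\eps}{4} \cdot \tfrac{\log r}{\log n}$ genuinely dominates both $\tfrac{a}{r \log n}$ and the extra $\tfrac{1}{\log n}$ we need in the conclusion; the main obstacle is simply keeping the constants straight, and the quantitative choice $\ell = r^2 \log^4 n$ plus the lower bound on $n$ make this clean.
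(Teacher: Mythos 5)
Your proof is correct and follows the paper's own argument essentially verbatim: apply Lemma~\ref{lem:cpm-calc} at $b=\ell$, take a union bound over $a\in[r]$ (the failure probability $\sum_a a\ell^2\exp(-\log^2 n/32)$ is super-polynomially small), divide by $\ell$, substitute the identity $e=\frac1\ell\sum_b e_b-\frac\eps4\cdot\frac{\log r}{\log n}$, and absorb $\frac{a}{r\log n}\le\frac{1}{\log n}$ using the largeness of $n$. The only nitpick is that you only need $\frac{\eps}{4}\log r \ge 2$ (not $\ge 3$), which is exactly what the standing assumption $n > \exp(\exp(8\delta/\eps))$ gives via $\log r=\frac1\delta\log\log n$, so no ``adjusting constants'' is actually required.
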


\begin{proof}
    By Lemma~\ref{lem:cpm-calc}, using a union bound, it holds with probability at least 
    \begin{align*}
        1 - \sum_{1 \le a \le r}a \ell^2 \exp \left( - \frac{\log^2 n}{32} \right)
        &\ge 1 - r^2 \ell^2 \exp \left( - \frac{\log^2 n}{32} \right) \\
        &\ge 1 - \ell^3 \exp \left( - \frac{\log^2 n}{32} \right) \\
        &\ge 1-\frac1n
    \end{align*}
    (because $\ell<n$) that for all values of $a$ where $q_{a-1,\ell}$ is correct, we have
    \begin{align*}
        \frac{c^\pm_{a,\ell}}{\ell} &> \frac{\sum_{d\leq \ell} e_d - \frac{a\ell}{r \log n}}{\ell}\\
        &\geq e+ \frac\eps4 \cdot \frac{\log r}{\log n} - \frac{1}{\log n} \\
        &\geq e+ \frac{1}{\log n}
    \end{align*}
    since for sufficiently large $n$, we have that $\frac{\eps}{4} \cdot \log r > 2$.
\end{proof}


We return to the computation of $\bbE[c^\pm(i,j,q)]$. Note that 
\[
    c^\pm(i,j,q)= \begin{cases}
        +\min_{a\in [r]}\{\frac{c_{a,\ell}}\ell\} & \text{$q_r$ is correct} \\
        -\min_{a\in [r]}\{\frac{c_{a,\ell}}\ell\} & \text{$q_r$ is incorrect}.
    \end{cases}
\]
We split the proof of Lemma~\ref{lem:corr-gen} based on whether $e + \frac{1}{\log n} \ge 0$. 


If $e+\frac{1}{\log n} \geq 0$, 
define the event $E$ to be the event where $\frac{c^\pm_{a,\ell}}{\ell} > e + \frac1{\log n}$ for all $a \in [r]$. We show that $E$ holds with probability at least $1-\frac{1}{n}$. This is due to Lemma~\ref{lem:final-cpm}, which says that with probability $\ge 1 - \frac1n$, simultaneously for all $a \in [r]$ if $q_{a-1,\ell}$ is correct then $\frac{c^\pm_{a,\ell}}{\ell} > e + \frac1n \ge 0$. Let $E$ be the event that this holds. Then, since $q_{0,\ell} = q$ is correct by definition, it follows that $q_{a,\ell}$ is correct for all $a \in [r]$. Since $q_{a-1,\ell}$ is correct for all $a \in [r]$, it holds that $\frac{c^\pm_{a,\ell}}{\ell} > e + \frac1n$ also.
Therefore,
\begin{align*}
    \bbE[c^\pm(i,j,q)] 
    &\geq \bbE \left[ \min_{a \in [r]} \left\{ \frac{c_{a,\ell}}{\ell} \right\} \Bigg| E \right] \cdot \Pr [E]
    + (\text{min value of $c^\pm(i,j,q)$}) \cdot \left( 1-\Pr [E] \right) \\ 
    &> \left(e+\frac{1}{\log n}\right)\left(1-\frac1n\right) +(-1)\cdot \frac{1}{n} \\
    &= e + \frac{n -e\log{n} - 1 - \log n}{n \log n} \\
    &> e.
\end{align*}

If $e+\frac{1}{\log n} < 0$, then if $q_{r,\ell}$ is correct, then $\bbE[c^\pm(i,j,q)]\geq 0 > e+\frac{1}{\log n} \geq e$. Otherwise, $q_{r,\ell}$ is incorrect. By Lemma~\ref{lem:final-cpm}, with probability $\ge 1 - \frac1n$, it holds that $\frac{c^\pm_{a,\ell}}{\ell} > e + \frac1{\log n}$ for all $a$ as long as $q_{a-1,\ell}$ is correct. Let the event this occurs be $E$. If $q_{r,\ell}$ is incorrect, there is a smallest $a > 0$ such that $q_{a,\ell}$ is incorrect (note that $q_{0, \ell} = 0$ is always correct), for which it holds that $\frac{c^\pm_{a,\ell}}{\ell} 
> e+\frac{1}{\log n}$ given $E$. Since $q_{a,\ell}$ is incorrect it holds that $c^\pm_{a,\ell}=-c_{a,\ell}$, implying that $\frac{c_{a,\ell}}{\ell} < -e-\frac{1}{\log n}$. As such,
\begin{align*}
    \bbE[c^\pm(i,j,q)] 
    &\ge \bbE\left[-\min_{a\in [r]}\left\{\frac{c_{a,\ell}}\ell\right\} \Bigg| E \right] \cdot \Pr [E] + (\text{min value of $c^\pm(i,j,q)$}) \cdot \left( 1-\Pr [E] \right) \\
    &> \left(e+\frac{1}{\log n}\right)\left(1-\frac1n\right) +(-1)\cdot \frac{1}{n} \\
    &= e + \frac{n -e\log{n} - 1 - \log n}{n \log n} \\
    &> e.
\end{align*}

This concludes the proof of Lemma~\ref{lem:corr-gen}.



\bibliographystyle{alpha}
\bibliography{refs}

\end{document}